\let\shortcite\citeyearpar
\let\cite\citep
\newif\ifshowrichnessappendix
\newif\ifshowcomplexityappendix
\newif\ifshortenedmaintext
\newif\ifshownotesappendix
\newif\ifdraft
\newif\ifmainpaper
\newif\ifsupplemental
\declaretheorem[name=Theorem,parent=section]{theorem}
\declaretheorem[name=Definition,sibling=theorem]{definition}
\declaretheorem[name=Lemma,sibling=theorem]{lemma}
\newcommand{\hsnote}[1]{\footnote{{\bf {#1}-- HS}}}
\newcommand{\lhnote}[1]{\footnote{{\bf {#1}-- LH}}}
\newcommand{\ehnote}[1]{\footnote{{\bf {#1}-- EH}}}
\newcommand{\hsnote}[1]{}
\newcommand{\ehnote}[1]{}
\newcommand{\lhnote}[1]{}
\newcommand{\mathtext}[1]{\ensuremath{\mathrm{\text{#1}}}}
\newcommand{\set}[1]{\{#1\}}
\newcommand{\tgsr}{\ensuremath{{\calt\hbox{-}\gsr}}}
\newcommand{\tpsr}{\ensuremath{{\calt\hbox{-}\psr}}}
\newcommand{\psr}{\ensuremath{{\rm PSR}}}
\newcommand{\fpsr}{\ensuremath{{\rm FPSR}}}
\newcommand{\pool}{\ensuremath{{\mathit{Pool}}}}
\newcommand{\psrnormzero}{\ensuremath{{\rm PSR}_{\normzero}}}
\newcommand{\gsr}{\ensuremath{{\rm GSR}}}
\newcommand{\card}[1]{{ \mathopen\parallel {#1} \mathclose\parallel }}
\newcommand{\calf}{\ensuremath{{\cal F}}}
\newcommand{\cale}{\ensuremath{{\cal E}}}
\newcommand{\calt}{\ensuremath{{\cal T}}}
\newcommand{\condition}{\,\mid \:}
\newcommand{\fp}{{\rm FP}}
\newcommand{\fdtime}{{\rm FDTIME}}
\newcommand{\score}[1]{{{\mbox{\it{score}}(#1)}}}
\newcommand{\scoresub}[2]{{{\mbox{\it{score}}_{#1}(#2)}}}
\newcommand{\naturals}{\ensuremath{\mathbb N}}
\newcommand{\integers}{\ensuremath{\mathbb Z}}
\newcommand{\rationals}{\ensuremath{\mathbb Q}}
\newcommand{\nonnegrationals}{\ensuremath{\mathbb Q_{\ge0}}}
\newcommand{\normzero}{\ensuremath{\mathtext{norm-0}}}
\newcommand{\normgcd}{\ensuremath{\mathtext{norm-gcd}}}
\title{A Control Dichotomy for Pure Scoring Rules%
\thanks{Supported in part by NSF grants CCF-\{0915792,\allowbreak{}1101452,\allowbreak{}1101479\} and 
by COST 
Action IC1205.
Work done in part 
while H.~Schnoor visited RIT supported 
by an STSM grant of Cost Action IC1205.
}}
\def\mmmddyyyy{\ifcase\month\or Jan\or Feb\or Mar\or Apr\or May\or Jun\or Jul\or
  Aug\or Sep\or Oct\or Nov\or Dec\fi \space\number\day, \number\year}
\def\hhmm{\ifnum\hour<10 0\fi\number\hour :%
  \ifnum\minutes<10 0\fi\number\minutes}
\author{Edith Hemaspaandra\\
Department of Computer Science\\
Rochester Institute of Technology\\
Rochester, NY 14623, USA \\
www.cs.rit.edu/$\sim$eh
\and Lane A.~Hemaspaandra\\
Department of Computer Science\\
University of Rochester\\
Rochester, NY 14627, USA \\
www.cs.rochester.edu/u/lane%
\and Henning Schnoor \\
Institut f\"ur Informatik\\
Christian-Albrechts-Universit\"at zu Kiel \\
24098 Kiel, Germany \\
henning.schnoor@email.uni-kiel.de
}
\date{April 17, 2014}
\begin{document}
\sloppy

\maketitle

\begin{abstract}
  Scoring systems are an extremely important class of election systems.
  A length-$m$
  (so-called) scoring vector applies only to $m$-candidate elections.  To
  handle general elections, one must use a family of vectors, one per
  length.  The most 
elegant approach to making sure such
  families are ``family-like'' is the recently introduced notion of 
  (polynomial-time uniform) pure scoring 
  rules~\cite{bet-dor:j:possible-winner-dichotomy}, where each scoring vector
  is obtained from its precursor by adding one new coefficient.
  We obtain the
  first
  dichotomy theorem for pure scoring rules for a control
  problem.  In particular, for constructive control by adding voters (CCAV),
  we show that CCAV is solvable in polynomial time for $k$-approval with 
  $k\leq 3$, $k$-veto with $k\leq 2$, every 
  pure scoring rule in which only the two top-rated
  candidates gain nonzero scores, and a particular rule that is a ``hybrid'' of $1$-approval and $1$-veto.
  For all other pure scoring rules, CCAV is NP-complete.
  We also investigate the descriptive richness of different models for
  defining pure scoring rules, proving how more rule-generation time
  gives more rules, proving that rationals give more rules than do the
  natural numbers, and proving that some restrictions previously
  thought to be ``w.l.o.g.''\ in fact do lose generality.
\end{abstract}

\section{Introduction}\label{s:intro}

Elections give rise to a plethora of interesting questions in the social
and political sciences, and have been extensively studied from a
computer-science point of view in the last two decades. We study the
\emph{control problem}, in which the chair of an election (ab)uses her
power to try to 
affect the election outcome.
\hsnote{This already restricts the discussion to constructive control, 
maybe we can use something like
``to skew the election result,'' especially since we describe constructive control
in the next sentence.}%
\lhnote{LATER EDIT: Ok... I've adjusted it to handle the issue raised in 
Henning's hsnote.}%
In this paper we focus on 
\emph{constructive control by adding voters}
(CCAV), i.e., where 
the chair tries to make 
her favorite candidate win by adding voters.
Constructive control by adding voters is an extremely important 
control type, since it occurs in (political) practice very often. 
A standard example is the ``Get-out-the-Vote'' efforts of 
political parties, aimed at (supposed) supporters of those parties. 
However, we should also mention that,
as has been pointed out 
for example in books by Riker~\shortcite{rik:b:art-of-manipulation} 
and Taylor~\shortcite{tay:b:polsci:social-choice-manipulation},
in modern politics issues of candidate introduction or removal 
(CCAC/CCDC) have become highly important;
the case of Ralph Nader in 
two recent American elections vividly supports this point.

The computational complexity of CCAV and other
forms of control was first studied 
by Bartholdi, Tovey, 
and Trick~\shortcite{bar-tov-tri:j:control},
for
plurality and (so-called) Condorcet elections.
\ifshortenedmaintext
We
\else
In this paper, we 
\fi
study the complexity of CCAV for \emph{pure scoring
  rules}, an 
attractive 
class
introduced by Betzler and Dorn \shortcite{bet-dor:j:possible-winner-dichotomy}
that contains many
important voting systems. A scoring rule for an election with $m$
candidates is defined by $m$ coefficients
$\alpha_1\ge\alpha_2\ge\dots\ge\alpha_m$. Each voter ranks the $m$
candidates from her most favorite to her least favorite; a candidate
gains $\alpha_i$ points from being in position $i$ on that 
\ifshortenedmaintext
\else
voter's
\fi
ballot.

Well-known examples of families of scoring rules include the
following.  Borda Count for $m$ candidates uses coefficients $m-1,m-2,\dots,1,0$.
$k$-approval uses coefficients
$\underbrace{1,\dots,1}_k,\underbrace{0,\dots,0}_{m-k}$;  $k$-veto
uses
$\underbrace{1,\dots,1}_{m-k},\underbrace{0,\dots,0}_k$.  Dowdall
voting, used for Nauru's parliament, uses
$1,\frac12,\frac13,\dots,\frac1m$.

The construction of the scoring vector for a specific number of
candidates usually follows a natural pattern, as in the above
examples. This leads to the definition of a ``pure scoring rule.''  We
discuss the 
notion 
of ``purity'' in detail in this paper.  (Basically, it means 
that at each length we insert one 
entry 
into the previous 
length's vector;
\hsnote{strictly speaking, the ``computationally simple way'' 
is enforced by uniformity, not purity}%
\lhnote{LATER EDIT: Excellent point.  I've reedited it.  I
used ``entry'' rather than ``coefficient'' at this one point as 
otherwise ``one (new) coefficient'' might sound as if the SET of VALUES
over all the coefficients can increase in size by at most one, but with
their multiplicities changing.  But let us not clone the word ``entry'' to
other places---the problem is just that right here, ``coefficients'' might
be misread.}%
all our above examples are pure.)

There is a rich literature on computational aspects of scoring rules,
e.g., dichotomy theorems on weighted manipulation
\cite{hem-hem:j:dichotomy}, the possible winner problem
\cite{bet-dor:j:possible-winner-dichotomy,bau-rot:j:possible-winner-dichotomy-final-step},
and bribery \cite{fal-hem-hem:j:bribery}, as well as results about
specific voting systems
\cite{DBLP:conf/ijcai/BetzlerNW11,dav-kat-nar-wal:c:complexity-and-algorithms-for-borda,fal-hem-hem:c:weighted-control}.

In this paper, we provide
\hsnote{is ``perform'' a good word? Maybe ``undertake'' is better. Also,
it's probably not just the first, but also the last complete investigation of the problem, as we 
settle the issue.}%
\lhnote{LATER EDIT: So I've changed ``perform'' to ``provide.''
And I have not said it is the last, as that sounds arrogant to the reader;
but with luck the dichotomy result of the next sentence might make 
that clear... actually, it won't to many readers, but if we change
``the first complete'' to ``a complete'' we'd lose the more important
point that this is a first.  We could say ``we initiate and perform
a complete investigation of'' but then they might complain that 
we had missed some directions, such as typical case or whatever.}%
the first complete investigation of the
complexity of the unweighted CCAV problem for pure scoring rules. We prove
a dichotomy theorem that gives a complete complexity-theoretic
classification of that control problem for pure scoring rules. Our result is
as follows.

It turns out that there are only $4$ types of pure scoring rules for which CCAV is solvable in polynomial time:
\begin{enumerate}
 \item $k$-approval for $k\leq 3$,
 \item $k$-veto for $k\leq 2$,
 \item every pure scoring rule in which only the two top-rated candidates receive a nonzero score,
 \item a particular rule which is 
a ``hybrid'' of $1$-approval and $1$-veto: each voter awards her favorite candidate $1$ point, and her least favorite candidate $-1$ point.
\end{enumerate}

For every pure scoring rule that is not
\lhnote{Henning, your 
removal of my added quotes is problematic I think 
(but please do NOT edit this 
to put that 
or those back, as I have now handled it a different way).  The issue here is 
that a PSR is NOT about the vectors.  IT IS AN ELECTION SYSTEM\@.  Period.
So using equivalent is wrong.  There is no equivalence in play for 
election systems: two are either equal or not equal, period.  
I put quotes around ``equivalent'' earlier to paper over this problem
in the way you were using things there,
but in fact things are worse since in the above you do for item 4 just 
state generator for a rule and then try to identify it with the rule,
which is incorrect.  A generator gives a rule, but different generators
may give the same rule.  The rule is just the election system.  I've 
added ``(generated by)'' above, and after ``one of the above'' have 
added ``election systems'' below, to avoid this confusion and paper over 
the problem.  And yes, equivalent was also misused in the final sentence
of the paragraph, so i rewrote that too.  However, please do NOT 
change things elsewhere; we are so close to the deadline that 
added errors will be hard to remove.  Also, the other uses of 
equivalent are 
about GENERATORS, and there the word is fine.
By the way, you clearly do sometimes have a nice sense of this 
issue, since the place where you just changed 
``it has a $\tgsr$'' into ``it has a generator'' is an example 
of you fixing a place where what I wrote was speaking of a rule (election
system) when it should have been speaking of the generator!
}%
one of the
above election systems, CCAV is NP-complete. The last rule mentioned above is
particularly interesting for two reasons: First, it was the only one
\hsnote{since we repeat ``only one'' in the next sentence, maybe change this
to ``only election system.''}%
\lhnote{LATER EDIT: The change you are suggesting makes the claim overbroad.  
The ``one'' is pulling into this sentence the restriction to PSRs from 
the start of the paragraph, and that is 
needed here.}%
for which the complexity of the possible winner problem was left open
in Betzler and
Dorn~\shortcite{bet-dor:j:possible-winner-dichotomy}. Second, it is
the only one for which polynomial-time solvability depends on the
actual coefficients, and not only on the $<$-order of the values:
While this election system is equal to the rule generated by coefficients
$2,1,\dots,1,0$, the rule 
generated by using coefficients $3,1,\dots,1,0$ leads to
an NP-complete CCAV-problem.

A key point in the proof of any dichotomy theorem is to cover all relevant cases. 
We thus base our dichotomy result on a study of the descriptional
richness of definitions of scoring protocols. We examine how
variations of key parameters as the abovementioned purity
requirement, the complexity allowed to compute the coefficients, and
the universe from which the coefficients may be chosen affect the set
of election systems that can be represented. Interestingly, we
discover that assumptions made previously in the literature, which
were believed to be without loss of generality, in fact restrict the
rules that can be expressed. Taking these results into account, we
introduce a flexible purity condition that is more robust with respect to the
abovementioned variations. We show that the new notion strictly
generalizes pure scoring rules with integer scores and coincides with
pure scoring rules with rational coefficients.

The paper is structured as follows.   
In Section~\ref{sect:prelim}, we give necessary
preliminaries about the class of election systems that we study. In Section~\ref{s:descriptive},
we study the descriptive richness of definitions of positional scoring rules. Section~\ref{sect:dichotomy}
contains our main complexity result. 
All proofs from Section~\ref{s:descriptive} and 
some technical 
proofs from Section~\ref{sect:dichotomy} have been deferred
to the appendix.
\section{Preliminaries}\label{sect:prelim}
As is standard, given an $m$-component (so-called ``scoring'') vector
$
\alpha = (\alpha_{1},\ldots,\alpha_{m})$ such that $\alpha_{1}\geq
\ldots \geq \alpha_{m}$%
, we
define a so-called ($m$-candidate) ``scoring system'' election based
on this by, for each voter (the voters vote by strict, linear orders
over the candidates), giving $\alpha_i$ points to the voter's
$i$th-most-favorite candidate.  Whichever candidate(s) get the highest
number of points, summed over all voters, are the winner(s).
We refer to the $\alpha_i$s as ``coefficients.''

We will use 
the following very 
pure 
definition of pure scoring rules, which removes some of the additional
assumptions 
that have been used in earlier work.
Indeed, earlier work 
asserted that (at least some of) these 
assumptions were not restrictions;
we will look at that issue
anew below.

An election system $\cale$ is a $\tgsr$ 
(generalized scoring rule)
if there
is a function $f$ that on each input $0^m$, $m\geq 1$, outputs an
$m$-component scoring vector $\alpha^m =
(\alpha^m_1,\ldots,\alpha^m_m)$ such that $\alpha^m_1\geq \ldots
\geq \alpha^m_m$ and each $\alpha^i_j$ belongs to $\calt$, and for
each $m$, the winner set under $\cale$ is exactly the winner set given
by the scoring system using the scoring vector $\alpha^m$. 
The notation $0^m$ denotes a string of $m$ ``0''s.
Using this as the 
argument ensures that the generator's computation time is measured
as a function of $m$, since its input length is exactly $m$.  Since the
votes are of size at least $m$ each, this provides the natural, fair approach
to framing FP-uniformity (as we will do below).
We call
$f$ a generator for $\cale$.
While one can consider election systems based on an $f$ that is not computable,
in practice we want there to be an efficient algorithm computing $f$. This is
expressed with different \emph{uniformity} conditions.
If $\cale$ is a $\tgsr$ via some generator $f$ that can be computed in a complexity class
$\calf$, then $\cale$ is an $\calf$-uniform-$\tgsr$ and
$f$ is an $\calf$-uniform $\calt$-generator.
The values of
$\calt$ we will be interested in are the naturals $\naturals$, the 
nonnegative rationals
$\nonnegrationals$%
, 
the rationals $\rationals$,
and the integers $\integers$.  Our most important value of $\calf$ will
be the polynomial-time functions, $\fp$.  When we do not state
``nonuniform'' or some specific uniformity, we always mean
$\fp$-uniform.  
When we put a name 
in boldface, it indicates all the elections that can be generated by 
a generator of the named sort, e.g., \textbf{\boldmath$\fp$-uniform-$\integers$-GSR}
is the class of all polynomial-time-uniform generalized scoring rules
with integer coefficients, a class first defined and discussed by 
Hemaspaandra and Hemaspaandra~\shortcite{hem-hem:j:dichotomy}.

A $\tgsr$ (of whatever uniformity) is a $\tpsr$ 
(pure scoring rule)
if
it has a generator (of the same uniformity) $f$ satisfying the following ``purity'' constraint: For each $m \geq 2$,
there is a component of $\alpha^m$ that when deleted leaves exactly
the vector $\alpha^{m-1}$.  

Throughout the paper we use
\ifshortenedmaintext
that
\else
the following observation, 
which 
notes that
\fi
two scoring vectors, after being ``normalized,'' differ if and only
if they are capturing distinct election systems:
An $m$-position scoring vector $\alpha = (\alpha_1,\ldots,\alpha_m)$ 
(over $\naturals$) is
normalized if $\alpha_m = 0$ and the greatest common divisor of its
nonzero $\alpha_i$'s is 1.  Normalizing a given scoring vector 
(over $\naturals$ or $\integers$) is
easily achievable in polynomial time: 
subtract $\alpha_m$ from each coefficient and then divide each coefficient
by the gcd 
\ifshortenedmaintext
\else
(computed using Euclid's gcd algorithm)
\fi
of the nonzero thus-altered coefficients.  The normalization
of a scoring vector over $\rationals$ 
(resp.,\ $\nonnegrationals$) 
is done by multiplying through by the lcm
of the denominators of the nonzero coefficients, and then viewing that as a 
vector over $\integers$ (resp.,\ $\naturals$) and normalizing it as above.

\begin{restatable}{proposition}{pdiffer}\label{p:differ}
Let $m \geq 1$ and let $\alpha$ and $\alpha'$ be $m$-position
scoring vectors over $\calt\subseteq\rationals$.  Then $\alpha$ and $\alpha'$ have the same winner 
sets on each $m$-candidate election if and only if $\alpha$ and 
$\alpha'$ both have the same normalized version.
\end{restatable}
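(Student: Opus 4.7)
The plan is to prove the two directions separately.

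The $(\Leftarrow)$ direction is the easier one. Each step of the normalization procedure---subtracting $\alpha_m$, dividing by the gcd of the nonzero entries, and (over $\rationals$) first multiplying through by the lcm of the denominators---can be written as a positive affine transformation $\alpha \mapsto a\alpha + b\mathbf{1}$, where $a > 0$ and $\mathbf{1} = (1,\ldots,1)$. Every such transformation preserves winner sets: each candidate's total score is simply scaled by $a$ and shifted by $nb$, where $n$ is the number of voters, so the relative ordering of candidate scores is unchanged. Hence if $\alpha$ and $\alpha'$ normalize to a common vector $\beta$, both are positive affine images of $\beta$, and so they induce identical winner sets on every $m$-candidate election.

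For the $(\Rightarrow)$ direction I would argue by contrapositive, assuming that the normalized versions of $\alpha$ and $\alpha'$ differ and exhibiting an $m$-candidate election on which their winner sets disagree. Since normalization preserves winner sets (by the easy direction), I may replace $\alpha, \alpha'$ by their normalized versions and assume $\alpha \ne \alpha'$ are themselves normalized, so $\alpha_m = \alpha'_m = 0$ and the gcds of their nonzero entries are $1$. If exactly one of them is the zero vector (say $\alpha = \mathbf{0}$), then $\alpha'_1 > 0$ (a normalized nonzero vector has nonnegative integer entries with gcd $1$), and any one-voter election separates them: every candidate wins under $\alpha$, but the voter's bottom-ranked candidate does not win under $\alpha'$.

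In the remaining case, both $\alpha, \alpha'$ are nonzero normalized vectors with $\alpha \ne \alpha'$. The key lemma to establish is that equality of winner sets on all elections forces $\alpha' = c\alpha$ for some rational $c > 0$; combined with the normalization conditions, this yields $\alpha = \alpha'$, a contradiction. To prove the lemma I would design ``probing'' elections on two designated candidates $a$ and $b$: for any indices $i,j,k,l$ and positive integers $n_1, n_2$, I take $n_1$ voters placing $a$ in position $i$ and $b$ in position $j$, plus $n_2$ voters placing $a$ in position $k$ and $b$ in position $l$, plus a large symmetric ``padding'' block built by cycling $a$ and $b$ through position $1$ over all permutations of the other $m-2$ candidates. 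The padding contributes equally to the scores of $a$ and $b$, while a short computation (using $\alpha_m = 0$ and $\alpha \ne \mathbf{0}$) shows that it makes $a$ and $b$ strictly outscore every other candidate, so the winner set is determined purely by the sign of $n_1(\alpha_i - \alpha_j) + n_2(\alpha_k - \alpha_l)$. Agreement of the winner sets under $\alpha$ and $\alpha'$ across all choices of $(n_1, n_2, i, j, k, l)$ forces the ratios $(\alpha_i - \alpha_j)/(\alpha_k - \alpha_l)$ to equal $(\alpha'_i - \alpha'_j)/(\alpha'_k - \alpha'_l)$, which, together with $\alpha_m = \alpha'_m = 0$, yields $\alpha' = c\alpha$.

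The main obstacle is the explicit construction of a padding block that guarantees $a$ and $b$ strictly dominate the other candidates simultaneously under both $\alpha$ and $\alpha'$. Since both vectors are fixed, one can uniformly enlarge the padding until the dominance gap it produces (which scales linearly with the padding size) outweighs the maximum disturbance contributed by the $n_1 + n_2$ probing voters, under both coefficient vectors at once. Once this uniform suppression of the dummy candidates is verified, the lemma---and hence the proposition---follows.
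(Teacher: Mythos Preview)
Your proposal is correct, and the overall architecture---the easy direction via positive affine transformations, and the hard direction by padding so that two designated candidates $a$ and $b$ dominate all dummies---matches the paper. Where you diverge is in how you extract the contradiction in the hard direction. The paper is more direct: it \emph{aligns} the two normalized vectors by scaling each so that their first coefficients agree, lets $\gamma$ be the first position where the aligned vectors $B,B'$ differ, and writes down one explicit election (the cyclic padding, plus $\beta_\gamma$ votes with $a$ first and $b$ last, plus $\beta_1$ votes with $b$ in position $\gamma$ and $a$ last) in which $a$ and $b$ tie under $B$ but $a$ is the unique winner under $B'$. Your route is instead parametric: a whole family of probing elections indexed by $(n_1,n_2,i,j,k,l)$ is used to argue that agreement of winner sets forces $\alpha'=c\alpha$ for some $c>0$, after which normalization forces $c=1$. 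Your version yields the slightly stronger structural statement ``winner sets determine the scoring vector up to positive scaling''; the paper's version hands you a concrete separating instance.

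One caution: the sentence ``forces the ratios $(\alpha_i-\alpha_j)/(\alpha_k-\alpha_l)$ to equal $(\alpha'_i-\alpha'_j)/(\alpha'_k-\alpha'_l)$'' is not literally true for arbitrary $(i,j,k,l)$. When $\alpha_i-\alpha_j$ and $\alpha_k-\alpha_l$ have the same sign, every choice of positive $n_1,n_2$ gives the same winner, so nothing about the ratio is recovered. The argument does go through if you anchor one block at the extremes, e.g.\ fix $(k,l)=(1,m)$ so that $\alpha_k-\alpha_l=\alpha_1>0$, and probe with blocks pitting $\alpha_1$ against $\alpha_1-\alpha_i$; then varying $n_1/n_2$ over the positive rationals pins down $\alpha_i/\alpha_1$ (the boundary case $\alpha_i=\alpha_1$ is handled by letting $n_1/n_2\to\infty$). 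With that small tightening your lemma, and hence the proposition, is fine.
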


The if direction basically follows from Observation 2.2
of Hemaspaandra and Hemaspaandra~\shortcite{hem-hem:j:dichotomy}, as noted by
Betzler and 
Dorn~\shortcite{bet-dor:j:possible-winner-dichotomy}.  The 
only if follows by
giving a construction 
that for any two unequal normalized scoring
vectors constructs a vote set on which their winner sets differ.
\ifshortenedmaintext
\else
The construction
works by ``aligning'' the vectors by multiplying each
so that their first coefficients are equal, and then using a padding
construction to ensure that only two candidates are crucial and that
the winner sets can be distinguished by appropriately exploiting the
first position at which the aligned vectors differ. 
\fi

Generators $f_1$ and $f_2$ are \emph{equivalent} if they generate the same election system.
Due to Proposition~\ref{p:differ}, this is the case if and only if, for each length $m$, the
\emph{normalized} scoring vectors generated by $f_1$ and $f_2$ for $m$ candidates are identical.

\section{Descriptive Richness and PSRs}\label{s:descriptive}

We now examine how
amount of time used to generate PSRs, and the universe the PSR's
coefficients are drawn from, affect the family of election rules that
can be obtained.  We also look at whether such a seemingly innocuous
and standard assumption as having the last coefficient always being
zero in fact loses generality; we'll see that it does lose generality,
but in a way that can in part be papered over.  

\subsection{Generation Time Gives Descriptional Richness}
Does more generation time give a richer class of pure scoring rules?
A very tight time hierarchy can be achieved by a legal form
of cheating%
\ifshortenedmaintext
, basically by a setup that in part exploits
the fact that more time can write more 
bits. 
\else
. In particular, consider any (nice) time class that can
for some $m \geq 3$ generate a scoring vector over $\naturals$ of the form
$(\alpha_1, \underbrace{1,\ldots,1}_{m-2},0)$ 
such that $\alpha_1$ is so big that some other
time class 
cannot generate this scoring vector (for example, because
it simply doesn't have enough time to write down enough bits to get a
number as large as $\alpha_1$).  Our vector is normalized, and by
Proposition~\ref{p:differ} this already is enough to 
allow us to argue 
that the
two time classes differ in their winner sets.  
(This proof works for
$\gsr$s.  And it works for $\psr$s if the ``nice''ness of the class allows
it to obey purity 
while employing the above approach---hardly an onerous
requirement.)
However, that 
claim
simply uses the fact that more time can write more 
bits. 
\fi
A 
\ifshortenedmaintext
\else
truly fair and
\fi
far more interesting separation would show 
that one can with more time obtain more pure scoring rules in a way
that does \emph{not} depend on using coefficient lengths that simply cannot
be produced by the weaker time class.  In the dream case, all coefficients
in fact would simply be $0$ or $1$, so there are no long coefficients
in play at all.  We in fact have achieved such a hierarchy theorem.
Full time constructibility
is a standard notion that most natural time 
functions satisfy and,
as is standard,
when we speak of a 
time function $T(m)$ by convention that is shorthand
for $\max(m+1,\lceil T(m) \rceil)$~\cite{hop-ull:b:automata}.
$\fdtime[g(\cdot)]$ denotes the functions computable in the given amount 
of deterministic time.

\begin{restatable}{theorem}{thierarchy}
\label{t:hierarchy}
If $T_2(m)$ is a fully time-constructible function and 
$\limsup\limits_{n\rightarrow\infty} {{T_1(m)\log T_1(m)} \over T_2(m)} = 0$,
then there is an election rule in 
\textbf{\boldmath $\fdtime[T_2(m)]$-uniform-$\{0,1\}$-$\psr$}
that is not in 
\textbf{\boldmath $\fdtime[T_1(m)]$-uniform-$\rationals$-$\gsr$}.
\end{restatable}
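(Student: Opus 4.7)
The plan is a diagonalization that exploits the highly constrained structure of $\{0,1\}$-valued pure scoring rules. The first step is a structural observation: any non-increasing $m$-vector over $\{0,1\}$ is $k$-approval for some $k \in \{0,\ldots,m\}$, so a $\{0,1\}$-$\psr$ is completely determined by a sequence $(k_m)_{m\ge 1}$ with $k_m - k_{m-1} \in \{0,1\}$, equivalently by a bit string $b_1 b_2 \cdots$ with $k_m = b_1 + \cdots + b_m$. Purity therefore leaves exactly one bit of freedom at each length $m$: the length-$m$ vector is either $k_{m-1}$-approval or $(k_{m-1}+1)$-approval. These two vectors are distinct and both already in normalized form, so any rational normalized length-$m$ vector $\beta^\ast$ coincides with at most one of them---a single bit suffices to beat any given rational scoring vector at length $m$.

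Next I would set up the diagonalization. Enumerate clocked deterministic Turing machines $M_1, M_2, \ldots$ each halting within $T_1(m)$ steps on input $0^m$. Pick diagonalization lengths $n_1 < n_2 < \cdots$ so sparse that the cumulative simulation cost $\sum_{j \le i} T_1(n_j)\log T_1(n_j)$ is $o(T_2(n_i))$; this is feasible thanks to the hypothesis $T_1(m)\log T_1(m)/T_2(m) \to 0$. The generator $f_2(0^m)$ begins by using full time-constructibility of $T_2$ to set a clock of $T_2(m)$ steps, then computes bits $b_1,\ldots,b_m$ as follows: if $j \notin \{n_1,n_2,\ldots\}$, set $b_j = 0$; if $j = n_i$, simulate $M_i(0^{n_i})$ for $T_1(n_i)$ steps via a universal machine with logarithmic overhead, and if that simulation halts with a valid non-increasing rational scoring vector $\beta$, normalize it to $\beta^\ast$ and choose $b_j \in \{0,1\}$ so that the resulting $k_j$-approval vector disagrees with $\beta^\ast$ (otherwise set $b_j = 0$). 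Finally output $(\underbrace{1,\ldots,1}_{k_m},\underbrace{0,\ldots,0}_{m-k_m})$.

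Correctness is immediate from the structural observation and Proposition~\ref{p:differ}: if some $\fdtime[T_1(m)]$-uniform-$\rationals$-generator $M_i$ generated the same election rule, then at length $n_i$ the normalizations of $M_i(0^{n_i})$ and $f_2(0^{n_i})$ would have to agree, whereas by construction they were chosen to disagree, forcing distinct winner sets on some $n_i$-candidate profile. The main obstacle is the purity constraint, which at first glance seems to rob the diagonalization of any freedom; the structural analysis above resolves this by showing purity still leaves one bit of choice per length---exactly what diagonalization needs. A secondary technical issue is keeping the generator's total running time within $T_2(m)$ despite having to recompute every previous diagonalization decision in order to know $k_{m-1}$; sparsity of $(n_i)$, together with the standard $O(\log T_1)$ universal-simulation overhead, keeps the total cost $o(T_2(m))$, and the uncontested majority of lengths contribute only the trivial $O(m)$ cost of padding out the vector.
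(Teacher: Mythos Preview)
Your structural observation—that a $\{0,1\}$-$\psr$ is determined by a bit sequence $(b_m)$ with $k_m=b_1+\cdots+b_m$, leaving exactly one bit of freedom per length—is correct and is the right starting point. The gap is in the time analysis of the generator.

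First, you clock each simulation by $T_1(n_i)$ and speak of an enumeration of machines ``each halting within $T_1(m)$ steps,'' but $T_1$ is not assumed computable (let alone time-constructible), so the generator cannot implement such a clock or such an enumeration.

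Second, and more fundamentally, your generator at length $m$ must recompute every earlier diagonalization bit $b_{n_1},\ldots,b_{n_i}$ in order to know $k_{m-1}$. You assert that sparsity of $(n_i)$ keeps the cumulative cost $o(T_2(m))$, but you only argue this at the diagonalization lengths $n_i$ themselves, not at \emph{every} $m$. Since $T_2$ is not assumed monotone, $T_2(m)$ for $m$ just past $n_i$ may be far smaller than the budget already needed to re-simulate at $n_1,\ldots,n_i$. And you cannot simply let the overall $T_2(m)$ clock truncate the recomputation, because then the value of $b_{n_j}$ computed at length $m$ could disagree with the value computed at length $m+1$, destroying purity.

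The paper's proof resolves this with a ``way-station'' trick that eliminates history recomputation entirely. At every odd $m$ the vector is \emph{fixed} to $1^{(m-1)/2}0^{(m+1)/2}$, computable directly in $O(m)$ time. At each even $m$ the two legal pure extensions of the length-$(m{-}1)$ vector---namely $1^{m/2}0^{m/2}$ and $1^{m/2-1}0^{m/2+1}$---are \emph{both} also legal pure predecessors of the (fixed) length-$(m{+}1)$ vector, so the diagonalization choice at even $m$ is completely independent of the choices at all other even lengths. Hence at even $m$ the generator writes a default vector, simulates only the single machine whose code is read off from $m$ (under the $T_2(m)$ clock), and possibly flips one output bit; padding guarantees each machine is met infinitely often and is eventually beaten. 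This decoupling of the per-length choices is precisely the idea your recompute-from-scratch scheme is missing.
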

The log factor here is not surprising; this is the standard overhead
it takes for a 2-tape Turing machine to simulate a multitape TM\@.  What is
surprising is that no additional factor of $m$ is needed. Why might 
we expect such a factor?
(We caution that the rest of this paragraph is intended mostly
for those having familiarity with the diagonalization techniques 
used to prove hierarchy theorems.)
 In the context of a diagonalization construction
(which is the basic 
 technique used in the proof of Theorem~\ref{t:hierarchy}),
one
might expect to 
(all counting against the overall time $T_2$ limit)
at vector-length $m$ have to ``recreate'' all the
shorter vectors used in earlier diagonalizations to ensure that 
the length-$m$ and length-$(m-1)$ vectors are related in a ``pure'' way.
\hsnote{I rephrased this a little, please check.}%
\lhnote{I've rewritten it a bit more.}%
We however sidestep
the need for that overhead by a ``purity''-inducing trick: For each odd $m$
our scoring vector will be of the form 
$1^{\lfloor m/2\rfloor}0^{\lfloor m/2\rfloor + 1}$, 
and at each even length, we purely extend 
that to whichever of 
$1^{\lfloor m/2\rfloor+1}0^{\lfloor m/2\rfloor + 1}$ or 
$1^{\lfloor m/2\rfloor}0^{\lfloor m/2\rfloor + 2}$ 
diagonalizes against the $T_1$-time machine that is being diagonalized
against (if this is an $m$ when we have time to so diagonalize).  
Briefly, we lurch back to fixed, 
safe way-stations at every second length, and this
removes the need to recompute our own history.

\subsection{Coefficient Richness Gives Descriptional Richness}
The richness and structure of the coefficient set for PSRs
affects how broad a class of election rules can be captured, as 
shown by the following result.
(%
\ifshortenedmaintext
Trivially,
\else
Of course, trivially 
\fi
\textbf{\boldmath $\naturals$-$\psr \subseteq
\nonnegrationals$-$\psr \cap 
\integers$-$\psr \subseteq 
\nonnegrationals$-$\psr \cup
\integers$-$\psr \subseteq 
\rationals$-$\psr$.})

\begin{restatable}{theorem}{talldiffer}
\label{t:all-differ}
\textbf{\boldmath $\nonnegrationals$-$\psr \not\subseteq$ \textbf{nonuniform}-$\integers$-\psr},
\textbf{\boldmath $\integers$-$\psr \not\subseteq$ \textbf{nonuniform}-$\nonnegrationals$-\psr}, and 
\textbf{\boldmath $\rationals$-$\psr \not\subseteq$ \textbf{nonuniform}-$\nonnegrationals$-$\psr \cup {}$ \textbf{nonuniform}-$\integers$-$\psr$}.
\end{restatable}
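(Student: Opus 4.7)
The plan is to construct an explicit separator election system for each of the three non-inclusions, and to certify their status by analysing the normalized scoring-vector families directly. The central observation is that by Proposition~\ref{p:differ}, any raw generator of a fixed election system must have the form $\alpha^m = c_m \beta^m + d_m \mathbf{1}$ for some $c_m > 0$ and $d_m$, where $\beta^m$ is the (unique) normalized vector. The purity constraint $\alpha^{m-1} \subseteq \alpha^m$ (as multisets, with one extra entry) becomes the \emph{intrinsic} condition that $r_m \beta^{m-1} + s_m \mathbf{1}$ is a submultiset of $\beta^m$ with one extra element, where $r_m = c_{m-1}/c_m > 0$ and $s_m = (d_{m-1}-d_m)/c_m \ge 0$; the latter inequality holds because $\min \alpha^{m-1} \ge \min \alpha^m = d_m$. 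Hence membership in $\calt$-$\psr$ reduces to (i) existence of a sequence $(r_m, s_m)$ satisfying the intrinsic purity and (ii) a compatible choice of $c_m, d_m$ in $\calt$---with the extra constraint $d_m \ge 0$ over $\nonnegrationals$ and $c_m, d_m \in \integers$, $c_m \ge 1$, over $\integers$.

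For the first non-inclusion I would use Dowdall, generated over $\nonnegrationals$ by $\alpha^m = (1, 1/2, 1/3, \dots, 1/m)$, which is manifestly pure (each step inserts $1/m$ at the bottom). To rule out $\integers$-$\psr$ membership, I analyse the normalized $\beta^m$, whose ratio structure reflects divisions by all integers up to $m$, and show by case analysis that in every admissible intrinsic pair $(r_m, s_m)$ the value $r_m$ must carry a prime factor specific to $m$. Consequently $\prod_{k \le m} r_k$ acquires arbitrarily many distinct prime divisors, and since $c_m = c_2 / \prod_k r_k$ must be a positive integer, no finite $c_2$ can make all $c_m$ integral.

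For the second non-inclusion I would use the integer generator $\alpha^m = (7, 5, 3, 2, 1, 0, -1, -2, \dots, -(m-6))$ for $m \ge 6$ (with natural initial segments for $m \le 5$), which is pure by construction and uses negative coefficients essentially. Its normalized family satisfies $\beta^4 = (5, 3, 1, 0)$ and, for $m \ge 5$, $\beta^m = \beta^{m-1} + \mathbf{1}$ with a trailing $0$. A case analysis shows that at $m = 4$ the intrinsic purity forces $(r_4, s_4) = (2, 1)$, and for $m \ge 5$ it forces $(r_m, s_m) = (1, 1)$. Since $r_m = 1$ and $s_m = 1$ from some point on, $c_m$ is eventually constant and $d_m = d_{m_0} - (m - m_0) c_{m_0}$, which tends to $-\infty$. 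Hence no choice of $(c_2, d_2)$ keeps every $d_m \ge 0$, ruling out $\nonnegrationals$-$\psr$ membership.

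For the third non-inclusion I would combine both obstructions by alternately inserting a fresh unit fraction at the top and a fresh negative integer at the bottom, yielding the rational generator $\alpha^m = (1, 1/2, \dots, 1/\lceil m/2 \rceil, 0, -1, -2, \dots, -\lfloor (m-1)/2 \rfloor)$. The Dowdall-like top forces, by the analysis of (1), growing prime content in every admissible $r_m$-sequence, ruling out $\integers$-$\psr$; the negative tail forces, by the analysis of (2), an unbounded negative drift in $d_m$, ruling out $\nonnegrationals$-$\psr$. The main obstacle throughout---and especially in case (3)---is verifying the structural rigidity of the intrinsic $(r_m, s_m)$: one must enumerate, at every length, all ways to embed a scaled-and-shifted copy of $\beta^{m-1}$ inside $\beta^m$ and argue that each one still exhibits the relevant obstruction, since nonuniform generators may make arbitrarily clever choices. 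This is most delicate when the two obstructions must coexist along every admissible nonuniform choice sequence.
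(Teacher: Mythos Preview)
Your affine-representation framework ($\alpha^m = c_m\beta^m + d_m\mathbf{1}$, purity encoded as an embedding constraint on $(r_m,s_m)=(c_{m-1}/c_m,(d_{m-1}-d_m)/c_m)$) is correct and more systematic than the paper's argument, which works directly with gap patterns of handcrafted examples. Your part~(2) is essentially complete and in fact cleaner than the paper's: the gap sequence $2,2,1,\ldots,1$ of your $\beta^m$ is rigid enough (check: deleting any entry other than the last fails to produce a sequence proportional to $2,2,1,\ldots,1$) that $(r_m,s_m)=(1,1)$ is forced for $m\ge 5$, whence $d_m=d_4-(m-4)c_4\to-\infty$. The paper instead uses gaps $2^k,\ldots,2,1,1,1,1,2,\ldots,2^k$ and argues that the four equal central gaps force extension only at the ends, doubling each time.

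Part~(1), however, has a real gap. You assert that for Dowdall ``$r_m$ must carry a prime factor specific to $m$,'' but this rests on the embedding $r_m\beta^{m-1}+s_m\mathbf{1}\hookrightarrow\beta^m$ being essentially unique, which you do not verify. It can in fact be shown---by comparing ratios of consecutive gaps, which for Dowdall are $1/(i(i{+}1))$ and hence pairwise distinct, so that only deletion of the bottom entry of $\beta^m$ yields a gap sequence proportional to that of $\beta^{m-1}$---but this argument is missing. The paper sidesteps the issue by using a generator whose normalized gap pattern is $1,2,4,\ldots,4,2,1$; there the four equal central gaps make rigidity transparent, and each extension halves the outer gap, so any integer base gap $\alpha^5_1-\alpha^5_2$ eventually runs out of factors of~$2$.

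Part~(3) inherits the gap from~(1) and adds two more: your entry-count formula is off by one for odd $m$, and---as you yourself flag---showing that \emph{every} admissible $(r_m,s_m)$ sequence simultaneously exhibits both obstructions is the crux and is not carried out. The paper's combined example (halving gaps upward, doubling downward) is chosen so that the two rigidity arguments transfer with minimal interaction; your Dowdall-plus-linear hybrid would require substantially more case analysis to reach the same conclusion against a nonuniform adversary.
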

So for example, 
in pure scoring rules, 
(polynomial-time uniformly)
using
positive rationals 
cannot be simulated by
naturals or integers, even nonuniformly.  
And in pure scoring rules,
(polynomial-time uniformly)
using integers 
cannot
be simulated by naturals or positive rationals, even nonuniformly.
\hsnote{this sentence seems a bit redundant given
the sentence before the actual theorem.}%
\lhnote{LATER EDIT: You are absolutely right.  I've removed the 
more technical sentence before the theorem, and have left the 
more informal one after, but with some of the technical details 
added.}%
One might think 
these claims are impossible, and 
that by normalizing one can go back and forth, but 
it is precisely the purity requirement that is making that sort of 
manipulation impossible---there is a price to purity, and it is showing 
itself here.
\ifshortenedmaintext
\else
(The final part of the theorem does not follow 
automatically from the first two parts plus the trivial observation
before the theorem; just the weaker variant of that part in which
$\cup$ is replaced with $\cap$ follows from those.)
\fi
Note that enlarging
the universe does not necessarily lead to a larger
class of election systems: For example, requiring that coefficients are
odd
natural numbers gives the same set of election systems as merely requiring
them to be natural numbers.

We mention a more flexible and highly attractive 
notion of purity that erases the differences just discussed.
$\calt$-FPSRs (flexible pure scoring rules)%
\ifshortenedmaintext
\else
, of whatever 
uniformity or nonuniformity,
\fi
 will be defined exactly
the same way as $\calt$-PSRs were defined, except the purity condition
is changed to: 
For each $m \geq 2$, there is a component of $\alpha^m$
whose removal gives a scoring vector equivalent to 
$\alpha^{m-1}$. Due to Proposition~\ref{p:differ}, this means that the
relevant scoring vectors have the same normalization.
We call such generators \emph{flexible}.
For 
\emph{this} notion
we have 
\ifshortenedmaintext
\else
for the nonuniform case and the $\fp$-uniform case (and most 
other nice cases), 
\fi
the following equality.

\begin{restatable}{theorem}{tfpsr}
\label{t:fpsr}
\textbf{\boldmath $\naturals$-$\fpsr=\rationals$-$\fpsr = \nonnegrationals$-$\fpsr= \integers$-$\fpsr=\rationals$-$\psr$}.
\end{restatable}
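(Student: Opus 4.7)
The plan is to prove the five-way equality by a cycle of inclusions, freely using Proposition~\ref{p:differ} to manipulate scoring vectors within equivalence classes. The universe-containment inclusions $\naturals$-$\fpsr \subseteq \nonnegrationals$-$\fpsr$, $\integers$-$\fpsr \subseteq \rationals$-$\fpsr$ and the strict-implies-flexible inclusion $\rationals$-$\psr \subseteq \rationals$-$\fpsr$ are trivial, so it suffices to establish $\rationals$-$\fpsr \subseteq \naturals$-$\fpsr$ and $\rationals$-$\fpsr \subseteq \rationals$-$\psr$.

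For $\rationals$-$\fpsr \subseteq \naturals$-$\fpsr$, given any $\fp$-uniform flexible generator $f$ over $\rationals$, I would define $g(0^m)$ to be the normalization of $f(0^m)$: clear denominators by the lcm, subtract the last entry, divide by the gcd. This yields an $\naturals$-valued vector generating the same winner sets as $f(0^m)$ and is computable in polynomial time. Flexibility transfers because normalization acts on a vector as an affine map $x \mapsto c_m x + d_m \mathbf{1}$ with $c_m > 0$, and such maps commute with removal of a fixed component; hence any index $j_m$ that witnesses flexibility for $f$ also witnesses it for $g$.

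The harder inclusion is $\rationals$-$\fpsr \subseteq \rationals$-$\psr$, which I would prove by inductively building a strictly pure generator $h$ from the normalized sequence $\beta^m = g(0^m)$. At each $m \geq 2$, I would find (by trying each of the $m$ positions) an index $j_m$ such that removing the $j_m$-th entry of $\beta^m$ normalizes to $\beta^{m-1}$, and then write this removed vector uniquely as $k_m \beta^{m-1} + l_m \mathbf{1}$ with a positive integer $k_m$ and nonnegative integer $l_m$ (handling the degenerate all-constant case by taking $k_m = 1$). Setting $A_1 = 1$, $B_1 = 0$, and recursively $A_m = A_{m-1}/k_m$, $B_m = B_{m-1} - A_{m-1} l_m / k_m$, I would define $h(0^m) = A_m \beta^m + B_m \mathbf{1}$. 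A direct substitution shows that removing the $j_m$-th entry of $h(0^m)$ yields $h(0^{m-1})$ exactly, giving strict purity, while $h(0^m)$ is equivalent to $\beta^m$ by construction, ensuring the same election system is generated.

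The main obstacle is showing that $h$ stays $\fp$-uniform: the rationals $A_m$ and $B_m$ must have poly-size representations even though their values may be exponentially small. Since each $\beta^m$ has poly($m$)-bit entries, each $k_m$ is a positive integer of poly($m$) bits, so the denominator $\prod_{i=2}^m k_i$ of $A_m$ has poly($m$) bits; rewriting $B_m$ over this common denominator expresses its numerator as a sum of $O(m)$ terms each of poly-bit magnitude, hence poly-size as well. Thus every entry of $h(0^m)$ is a poly-size rational and $h$ is $\fp$-computable, closing the cycle and yielding the theorem.
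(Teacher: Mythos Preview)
Your proof is correct and follows the same approach as the paper: normalize to collapse the four FPSR classes (since normalization lands in $\naturals$ and preserves flexibility), and then inductively rebuild a strictly pure $\rationals$-generator from a flexible one via affine rescalings. The paper's proof is only a two-sentence sketch that says ``inductively constructing, for every flexible generator $f$, an equivalent pure generator over the rationals,'' so your explicit recursion $h(0^m)=A_m\beta^m+B_m\mathbf{1}$ and especially your bit-size analysis showing $h$ stays $\fp$-uniform go well beyond what the paper actually writes out---but they are exactly the details the paper's sketch is gesturing at.
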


\subsection{Having Smallest Coefficient of Zero Loses Generality---Slightly}

Betzler and Dorn~\shortcite{bet-dor:j:possible-winner-dichotomy} in their
definition of scoring rules require that at each $m$, we have
$\alpha^m_m = 0$ (let us call this condition \normzero), 
and comment that this is not a restriction.  We
note that there are PSRs that cannot be generated by any pure
scoring rule meeting that constraint. What is at issue here is a bit
subtle: At each fixed length, the restriction is innocuous.  But in
the context of families that are bound by the purity constraint, the
restriction loses generality.  On the other
hand, we will also note that each pure scoring rule has 
a
generator 
that is ``close'' to meeting that constraint---it meets it at all
but finitely many lengths.  

Betzler and Dorn~\shortcite{bet-dor:j:possible-winner-dichotomy} also have
a ``gcd is 1'' condition (although the phrasing is not crystal clear
as to whether the gcd constraint applies to the union
\hsnote{actually this probably should be set, because
the union operator needs to be applied to sets, not coefficients, but I'm not sure
whether that change would increase readability}%
\lhnote{LATER EDIT: i see what you mean and you're formally right, but
let us leave it.  if one changes to ``set'' readers may 
think we just mean every coefficient that itself appears at every 
length, which actually would be just one of them, but never mind.
in a full version, we could just write it out in math and so be 
explicit, but that might also be less clear to readers since math 
has to be parsed.}%
of all nonzero
coefficients that occur over all lengths, or whether it must in fact
apply separately at each length; 
the latter (which we call \normgcd) would block the vector $(2,0)$ but the former
would not if the next step were, for example, $(3,2,0)$).  However, 
the gcd issues also can be made to go away ``almost 
everywhere.''
In particular, we have established the following.

\begin{restatable}{theorem}{tnormzero}
\label{t:norm-zero}
There is an $\fp$-uniform
$\psr$ that is not generated even by any nonuniform $\psrnormzero$ generator.  
On the other hand,
every $\fp$-uniform 
\ifshortenedmaintext
(resp.,\ nonuniform)
\else
(respectively, nonuniform)
\fi
$\psr$ is generated by a 
$\fp$-uniform 
\ifshortenedmaintext
(resp.,\ nonuniform)
\else
(respectively, nonuniform)
\fi
$\psr$ generator that 
for all but a finite number of $m$
\lhnote{I've undone your edit, Henning.  Your change misreads 
the quantification regarding norm-0... the natural read of its def is 
that the forall-m is PART of norm-0.  the same applies to normgcd.
so your use of those in the theorem makes no sense, formally speaking.}%
has the property that the last coefficient,
$\alpha_{m,m}$,
is zero and the gcd of the nonzero coefficients in the length-$m$ vector
is one.
\end{restatable}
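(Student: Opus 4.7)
The plan is to handle the two claims separately. For the first (strict separation), I would exhibit a specific $\fp$-uniform $\psr$ that provably admits no nonuniform $\psrnormzero$ generator. For the second (approximate normalization), I would apply two global ``cleanup'' operations to any given $\psr$ generator --- an additive shift and a multiplicative scale --- and argue that the constants involved can be chosen so as to stabilize at all but finitely many lengths.

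For the first part, consider the generator $\alpha^1=(2)$, $\alpha^2=(2,2)$, and $\alpha^m=(2,2,\underbrace{1,\ldots,1}_{m-2})$ for $m\ge 3$. Purity is evident by inspection, and the induced election system is trivial at length~2 (since $(2,2)$ normalizes to $(0,0)$) and $2$-approval at each length $m\ge 3$ (since the vector normalizes to $(1,1,0,\ldots,0)$). Suppose for contradiction that some nonuniform $\psrnormzero$ generator $g$ induced this same election system. At length~2, triviality combined with norm-zero forces $g(0^2)=(0,0)$; at length~3, $2$-approval combined with norm-zero forces $g(0^3)=(b,b,0)$ for some $b>0$ (by Proposition~\ref{p:differ}). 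But purity would require that removing some single entry of $(b,b,0)$ yields $(0,0)$, which is impossible: removing either $b$ leaves $(b,0)$, and removing the $0$ leaves $(b,b)$.

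For the second part, fix a $\psr$ generator $f$ of $\cale$ and write $\alpha^m:=f(0^m)$. The first observation is that $(\alpha^m_m)_{m\ge 1}$ is weakly decreasing in $\naturals$: passing from $\alpha^{m-1}$ to $\alpha^m$ either inserts the new entry at position~$m$ (in which case $\alpha^m_m$ is that entry, which is at most $\alpha^{m-1}_{m-1}$) or inserts elsewhere (preserving the last entry). Hence it stabilizes at some $a^*$ for all $m\ge M_1$. Define $f_1(0^m):=f(0^m)-(a^*,\ldots,a^*)$; this remains nonnegative (since $a^*\le \alpha^m_m\le\alpha^m_i$), remains pure (a uniform shift commutes with single-entry insertion), and generates the same election system, and its last entry is $0$ whenever $m\ge M_1$. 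Next, let $g_m$ be the gcd of the nonzero entries of $f_1(0^m)$, using the convention $g_m:=1$ when there are none. Because the multiset of nonzero entries at length $m$ contains that at length $m-1$, we have $g_m\mid g_{m-1}$; so the $g_m$ form a divisor-decreasing sequence of positive integers and stabilize at some $g^*\ge 1$ for all $m\ge M_2\ge M_1$. Finally define $f_2(0^m):=f_1(0^m)/g^*$; since $g^*\mid g_m$ and $g_m$ divides every nonzero coefficient of $f_1(0^m)$, this stays integer-valued, scaling preserves purity and winners, and for every $m\ge M_2$ the vector $f_2(0^m)$ has both last entry zero and nonzero-coefficient gcd one.

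Uniformity is preserved because $a^*$ and $g^*$ are fixed constants that depend only on the election system and can be hardwired into the new generator, so $f_2$ lies in $\fp$ whenever $f$ does (and the nonuniform case is immediate). The step I expect to require the most care is precisely verifying that the shift-and-scale preserves purity: the naive temptation is to normalize each length's vector independently, but that in general breaks purity, because the per-length gcd can drop abruptly when a single ``small'' entry is inserted (for example, $(3,0)$ extended purely to $(3,2,0)$ normalizes level-by-level to $(1,0)$ and $(3,2,0)$, which are not related by any single-entry removal). The correct move is to apply the \emph{same} constants $a^*$ and $g^*$ across \emph{all} lengths; this is justified exactly because both stabilizing sequences live in a well-founded order on $\naturals$, and the residual discrepancies for $m<M_2$ are precisely the ``finite slack'' the theorem allows.
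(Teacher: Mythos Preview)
Your proof is correct. For the second claim your approach coincides with the paper's: both subtract the global minimum coefficient (your $a^*$ equals the paper's $\min(\pool)$, since $(\alpha^m_m)$ is weakly decreasing in $\naturals$) and then divide by the global gcd; your explicit remark that one must use the \emph{same} constants at every length, rather than normalize level-by-level, is exactly the point the paper is making. For the first claim you take a genuinely different route: the paper builds the sequence $(5),(6,5),(7,6,5),(8,7,6,5),(8,7,6,5,0),\ldots$ and argues that any $\psrnormzero$ generator, forced at length~4 into the shape $(3k,2k,k,0)$, cannot reach anything normalizing to $(8,7,6,5,0)$ at length~5. Your example $(2),(2,2),(2,2,1),\ldots$ is cleaner --- the contradiction already appears between lengths~2 and~3, because triviality at length~2 pins $g(0^2)$ to $(0,0)$ under norm-zero, after which no single insertion can reach a nontrivial $(b,b,0)$. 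The paper's example has the incidental advantage of also witnessing failure of the per-length gcd condition (it remarks this in passing), which yours does not directly do; but for the theorem as stated, your construction is shorter.
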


Does the second 
sentence of the theorem imply that each $\psr$ has all its vectors the same
at each length (except for a finite number of exceptional lengths) as
the vectors of some $\psr$ that satisfies \normzero\ and \normgcd?  The
answer is actually ``no.''  The somewhat subtle issue at play is that
$\psr$s can generate vectors that no generator satisfying \normzero\ and
\normgcd\ can ever generate, such as the family $(3,2,\ldots,2,0)$.
So one should not from our above theorem claim that it follows from
the main dichotomy \emph{theorem} of
Betzler and 
Dorn~\shortcite{bet-dor:j:possible-winner-dichotomy}, as completed by
Baumeister
and Rothe~\shortcite{bau-rot:j:possible-winner-dichotomy-final-step}, 
that we can 
read
off the complexity
(of the possible winner problem)
even in our slightly more flexible case.
However,
our above theorem does---in light of the actual \emph{proof case
  decomposition} used in 
  those papers%
\ifshortenedmaintext
\else
\ (which is based on
issues such as whether one has an unbounded number of positions that
differ and so on)
and some additional argumentation to connect to that
and in particular to note that Betzler and Dorn (and Baumeister and
Rothe) are in effect quietly covering well even those cases that do not
satisfy gcd constraints%
\fi
---connect so well to their work that each of
our cases is settled by 
their proofs.

\section{A Control Dichotomy for PSRs}\label{sect:dichotomy}

We study the following problem for an election system $\cale$: When $R$ is a set of registered voters, is there some subset of the unregistered voters $U$ of size at most $k$ that we can add to the election to ensure that $p$ is the winner? 

\begin{definition}
  Let $\cale$ be an election system. The \emph{constructive control
    problem for $\cale$ by adding voters}, $\cale$-CCAV, is the
  following problem: Given two multisets sets of votes $R$ and $U$, a
  candidate $p$ and a number $k$, is there a set $A\subseteq U$ with
  $\card A\leq k$ such that $p$ is a winner of the election if the
  votes in the multiset $R\cup A$ are evaluated using the system
  $\cale$?
\end{definition}  

We often use a generator, $f$, as a shorthand for the election system
\hsnote{this seems to be the only place
in the paper where we use ``scoring rule family,'' and I'm not sure this
is well-defined. Simply writing ``we identify an election system and its
generator'' is wrong, since there is no unique generator, but maybe we can
write ``identify a generator and the corresponding election system.''}%
\lhnote{LATER EDIT: 
Rewritten to sidestep this issue.... we now say that we often
represent a scoring rule family BY ONE OF its generators.  I hope that this
papers over the issue without causing new problems.}%
\lhnote{LATER EDIT: 
And I just re-rerewote this differently from what I said in 
the above note.}%
(scoring rule family) it generates, 
e.g., 
we write $f$-CCAV. For a generator $f$, we use 
$\alpha^{f,m}=(\alpha^{f,m}_1,\dots,\alpha^{f,m}_m)$ to denote the
scoring vector generated by $f$ for $m$ candidates and its individual
coefficients. 
\ifshortenedmaintext
\else
To simplify presentation, we only consider $\fp$-uniform generators.
However our results continue to hold as long as we can solve the following question in polynomial time: Given $m$, $i$, and $j$ in \emph{unary}, does $\alpha^m_i>\alpha^m_j$ hold, where $f(m)=(\alpha^m_1,\dots,\alpha^m_m)$?
\fi

Our main result is a complexity dichotomy for $f$-CCAV when $f$ is an
$\fp$-uniform 
pure $\rationals$-generator (or, equivalently due to Theorem~\ref{t:fpsr}, 
a $\fp$-uniform flexible $\naturals$-generator). Recall that equivalent 
\hsnote{HS: Moved the ``flexible'' in front of the uniformity condition to be
consistent with the pure $\fp$-uniform directly above.}%
\lhnote{LATER EDIT: Since we often add FP-uniform before PSR since in PSR the pure is 
the "P," it makes more sense for pure to come AFTER the uniformity,
and the same for "flexible."
we don't have to be wildly consistent about this, though, though I 
have swept through quickly to do this type of thing globally.
and actually, now that I think about it, the separating out of Pure 
and Flexible BEFORE the coefficient type clashes with the rest of the 
paper, e.g., we write FP-uniform PSR all over, thought that is about 
rules rather than generators.  Though at some places we DO 
use that type of approach for generators, e.g. in Theorem 2.4.
Actually, things like $\naturals$-generator, which you use above and 
in later theorems, are not even defined in the paper.  However, since 
it is clear enough, let us leave it, and DO NOT GLOBALLY
EDIT REGARDING THIS---we're at the 
point where edits may be doing more harm than good and 
anyway there is no time 
to check/reedit them, plus, the current state is clear enough.}%
generators result in the same election system, hence, due to 
Proposition~\ref{p:differ}, Theorem~\ref{theorem:dichotomy} implies polynomial-time 
results for all generators with the same normalization as one below. 
We state our main result.
 
\begin{theorem}
\label{theorem:dichotomy}
 Let $f$ be a \fp-uniform pure $\rationals$-generator. Then $f$-CCAV is solvable in polynomial time if $f$ is equivalent to one of the following generators:
 \begin{itemize}
  \item $f_1=(1,1,1,0,\dots,0)$ (this generates $3$-approval),
\hsnote{removed the reference to \cale, which is not defined anymore}%
    \item $f_2=(1,\dots,1,0)$ or $f_3=(1,\dots,1,0,0)$ ($1$/$2$-veto),
  \item for some $\alpha\ge\beta$, $f_4=(\alpha,\beta,0,\dots,0)$,
\hsnote{LATE EDIT: superfluous word ``or'' removed}%
  \item $f_5=(2,1,\dots,1,0)$.
 \end{itemize}
 In all other cases, $f$-CCAV is NP-complete.
\end{theorem}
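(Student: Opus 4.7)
The plan is to establish the theorem in two directions: explicit polynomial-time algorithms for the five listed generators, and NP-hardness for every other $\fp$-uniform pure $\rationals$-generator. By Proposition~\ref{p:differ}, it suffices to reason about normalized scoring vectors, and analyzing the purity constraint together with $\fp$-uniformity reveals that any such generator eventually falls into one of a small number of parameterized ``shapes'' of the form $(\alpha_1,\ldots,\alpha_s,\beta,\ldots,\beta,\gamma_1,\ldots,\gamma_t)$ for a fixed prefix, a bulk value $\beta$, and a fixed suffix. Each such shape is then classified either as equivalent to one of $f_1,\ldots,f_5$ (yielding polynomial-time algorithms) or as admitting an NP-hardness reduction.

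For the polynomial-time algorithms, the unifying observation is that each listed generator makes ``only a bounded number of candidates matter per voter.'' For $f_4=(\alpha,\beta,0,\ldots,0)$, only the top two positions of each voter carry nonzero weight, so each unregistered voter acts like an edge contributing $\alpha$ to one candidate and $\beta$ to another, reducing CCAV to a small-width flow problem with one inequality per rival candidate. For $f_5=(2,1,\ldots,1,0)$, subtracting $1$ per added voter from every candidate's score shows that each added voter contributes exactly $+1$ to its favorite and $-1$ to its least favorite, with middle positions irrelevant; this again yields a bounded-width flow instance. The $2$-veto case $f_3$ is analogous (each voter ``vetoes'' at most two candidates), $1$-veto $f_2$ is a straightforward matching argument, and $3$-approval $f_1$ is handled by known techniques.

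For NP-hardness, the main tool is the standard CCAV-from-X3C reduction schema: each set in an Exact Cover by 3-Sets instance becomes an unregistered voter whose high-weight positions encode the set's three elements, and the budget $k$ forces an exact cover. $k$-approval for $k\ge 4$ and $k$-veto for $k\ge 3$ fall to this schema with minor adjustments. For shapes with three or more distinct nonzero prefix coefficients, or with a nonzero-entry suffix, the extra positional flexibility provides convenient slots to park padding candidates whose scores must be balanced between correct and incorrect selections; the uniform bulk region acts as a neutral background against which the gadget's discriminating positions are evaluated. The correctness bookkeeping in each subcase comes down to a strict inequality among $\alpha_i$, $\beta$, and $\gamma_j$ that is guaranteed by the hypothesis that the generator is not equivalent to any $f_i$.

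The main obstacle, I expect, will be the boundary cases, where the scoring vector is a minimal perturbation of a polynomial case. The archetypal example is $(3,1,\ldots,1,0)$ versus $(2,1,\ldots,1,0)$: the latter is in P by the matching argument above, yet raising the leading coefficient to $3$ makes the problem NP-hard. The hardness reduction must exploit the precise threshold ``favorite bonus greater than twice the bulk value,'' which defeats the $+1/-1$ edge argument and enables a three-element X3C gadget in which a single favorite placement outweighs two middle placements. Analogous fine thresholds separate polynomial-time from NP-hard among rules of the form $(1,\ldots,1,\gamma,0)$ with $\gamma>0$, and among two-position rules extended by a nonzero third coefficient. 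Producing the full collection of these threshold-specific reductions, and verifying exhaustively that every excluded generator falls into exactly one case, is the principal technical load of the proof.
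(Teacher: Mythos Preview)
Your high-level plan matches the paper's architecture---polynomial-time algorithms for the five listed generators, NP-hardness for all others via matching-type reductions---and your observation that $f_5$ becomes a $\pm 1$ edge rule after shifting is exactly the paper's reduction to min-cost flow. The paper uses 3DM rather than X3C, but that is immaterial.

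Where your plan is genuinely underspecified is the organizing case split for hardness. The paper does not work with an open-ended family of ``shapes $(\alpha_1,\ldots,\alpha_s,\beta,\ldots,\beta,\gamma_1,\ldots,\gamma_t)$''; it isolates a single crisp criterion: does $\alpha^{f,m}_4>\alpha^{f,m}_{m-2}$ hold for some $m$? If yes, purity propagates it to all larger $m$, and for each 3DM instance one of two ``partial reductions'' (a $3$-veto-style or a $4$-approval-style gadget, chosen by where the strict gap sits) applies. If no, then $\alpha^{f,m}_4=\cdots=\alpha^{f,m}_{m-2}$ for all $m\ge 6$, which together with purity over $\rationals$ forces the generator to use at most six fixed coefficients $(\alpha_1,\alpha_2,\alpha_3,\alpha_4,\ldots,\alpha_4,\alpha_5,\alpha_6)$ independent of $m$; the remaining hardness is then a finite case analysis on those numbers. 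Without this criterion you do not yet have a finite, exhaustive case list.

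Two of your details also need correction. For the family $(\alpha_1,\alpha_2,\ldots,\alpha_2,0)$, the polynomial case is \emph{exactly} $\alpha_1\in\{\alpha_2,2\alpha_2\}$; the paper proves NP-hardness both when $\alpha_1>2\alpha_2$ \emph{and} when $\alpha_2<\alpha_1<2\alpha_2$, via two separate reductions, so your ``greater than twice the bulk value'' threshold misses half the hard region. For $f_4=(\alpha,\beta,0,\ldots,0)$, the paper does not use a flow formulation: it fixes a constant $\ell$ with $\ell(\alpha-\beta)\ge\beta$, argues that any solution can be normalized to use fewer than $\ell$ voters from $V_2$ or leave fewer than $\ell$ distinct $V_1$-voter types unused, and brute-forces those bounded choices. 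A direct flow reduction for arbitrary $\alpha>\beta$ is not obvious and would need its own argument.
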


Note that $1$- and $2$-approval are covered by the generator $f_4$.
\hsnote{I added this sentence, since we mention $1$- and $2$-approval in 
the intro and abstract, and readers might be confused if this does not show
up in our main theorem. We could also adjust abstract/intro to only
talk about $3$-approval, and not $1$-, $2$-approval, but I think mentioning
these common systems explicitly is probably better.}%
The remainder of the paper contains the proof of
Theorem~\ref{theorem:dichotomy}: Section~\ref{sect:polynomial time}
contains the algorithms for all polynomial-time solvable cases,
Section~\ref{sect:np complete} contains our hardness results 
(with some proofs deferred to the 
appendix%
),
\hsnote{LATE EDIT: Do we want to reference the
supplemental material here?}%
\lhnote{LATER EDIT: 
Please see the reedit I did, but check that it is ok---what I 
wrote promises 
that every proof related to that section is either in the section or 
in the supplemental material.}%
and the proof
of Theorem~\ref{theorem:dichotomy} follows in Section~\ref{sect:dichotomy
  theorem}.

\subsection{Polynomial-Time Results}\label{sect:polynomial time}

The following result is proven by Lin~\shortcite{lin:thesis:elections}.

\begin{theorem}\label{theorem:3 approval and 2 veto}
 $\cale$-CCAV is solvable in polynomial time if $\cale$ is $k$-approval with $k\leq 3$ or $k$-veto with $k\leq 2$. 
\end{theorem}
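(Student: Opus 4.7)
The plan is to handle each of the five rules in the theorem separately, as each invites a different combinatorial reduction. I would organize the argument by the degree of interaction between a single added voter and the candidate field: first $1$-approval and $1$-veto (each added ballot effectively touches one candidate), then $2$-approval (two candidates), then finally $3$-approval and $2$-veto (three candidates per added ballot).

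For $1$-approval (plurality), adding any voter in $U$ who does not top-rank $p$ can only hurt $p$'s relative standing, so I would restrict attention to the voters in $U$ who top-rank $p$ and greedily add them until either $p$ becomes a winner or $k$ additions are exhausted. The $1$-veto case is essentially dual after passing to the equivalent vector $(0,\ldots,0,-1)$: each added voter simply subtracts one point from whichever candidate she ranks last. Thus one iterates over the number $N \in \{0,\ldots,k\}$ of voters to be added and, for each $N$, greedily chooses voters whose veto lands on the currently strongest rival of $p$; a short exchange argument gives correctness.

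For $2$-approval I would again restrict to $U_p := \{v \in U : p \in \mathrm{top}_2(v)\}$, since adding a voter that does not approve $p$ can only hurt. Each such voter is determined by the unique rival she also approves; writing $n_c$ for the number of voters in $U_p$ that also approve rival $c$, the problem becomes: choose nonnegative integers $x_c \le n_c$ with $\sum_c x_c \le k$ satisfying $\score{p} + \sum_c x_c \ge \score{c} + x_c$ for every rival $c$. Enumerating the total $N := \sum_c x_c$ over $\{0,\ldots,k\}$ decouples these inequalities into independent bounds $x_c \le \score{p} + N - \score{c}$, whose joint feasibility against the supplies $n_c$ is a routine polynomial-time check.

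For $3$-approval and $2$-veto each added voter simultaneously affects two rivals besides $p$, so a purely greedy argument no longer suffices. My plan is to enumerate over the final target score of $p$ and, for each target, encode the remaining choices as a max-flow / $b$-matching instance: voters in $U_p$ become edges between the two rivals they additionally approve (or veto), and the residual score each rival may absorb before overtaking $p$ becomes that rival's node capacity, with a global source capacity of $k$. The main obstacle will be verifying this flow construction: one must show that the node capacities extracted from the target-enumeration honestly encode every pairwise comparison $\score{p} \ge \score{c}$ while simultaneously respecting $\card{A} \le k$, and that the two-rivals-per-voter edges introduce no integrality gap. Once that structural lemma is in place, a single max-flow computation per enumerated target closes the argument and yields polynomial time across all five cases.
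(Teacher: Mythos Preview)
The paper does not prove this theorem itself; it simply cites it as a result from Lin's thesis. So there is no in-paper argument to compare against beyond that attribution.

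Your outline is sound for $1$-approval, $1$-veto, and $2$-approval, and the $b$-matching/flow reduction for $3$-approval is the right idea: after fixing the number $N$ of added voters, each rival $c$ has an upper bound $\score{p}+N-\score{c}$ on how many of the added ballots may also approve it, and selecting $N$ edges from the voter multigraph subject to these node-degree caps is a polynomial-time $b$-matching instance.

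The one point that needs correction is your treatment of $2$-veto. You fold it into the $3$-approval template, speaking of ``the residual score each rival may absorb before overtaking $p$'' as a node capacity. But in $2$-veto the added ballots do not hand rivals points to absorb; they take points away. After restricting to voters who do not veto $p$, the constraint on rival $c$ is that it must be vetoed at least $\max(0,\score{c}-\score{p})$ times---a degree \emph{lower} bound, independent of $N$. So the underlying combinatorial problem is a minimum $b$-edge cover (or, by complementing within the available voter multiset, again a $b$-matching with upper bounds), not the upper-bound-capacity instance you describe. Both variants are polynomial, so your conclusion survives, but the argument as written would not go through for $2$-veto without this fix.
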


Due to Proposition~\ref{p:differ}, Theorem~\ref{theorem:3 approval and
  2 veto} implies that CCAV remains polynomial-time solvable for
``scaled'' versions of $k$-approval with $k\leq 3$ or $k$-veto with
$k\leq 2$, i.e., generators of the form
$(\alpha,\beta,\gamma,\delta,\dots,\delta)$ with
$\beta,\gamma\in\set{\alpha,\delta}$ or
$(\alpha,\dots,\alpha,\beta,\gamma)$ with
$\beta\in\set{\alpha,\gamma}$. We now look at a generalization of
$2$-approval: Voters approve of $2$ candidates, and can distinguish
between their first and second choice.
CCAV for this generalization remains
efficiently solvable.
\hsnote{Late Edit: rewrote this paragraph, it did not look 
nice with the parenthesis.}%
\lhnote{LATER EDIT: Slightly rewrote.}%
In contrast, 
Theorem~\ref{theorem:alpha>gamma>0 np complete} 
shows
that our control problem for the corresponding generalization of $3$-approval is NP-hard.

\begin{theorem}\label{theorem:alpha,beta,zero in ptime}
 Let $\alpha\ge\beta$ be fixed. Then $f$-CCAV is polynomial-time solvable for $f=(\alpha,\beta,0,\dots,0)$.
\end{theorem}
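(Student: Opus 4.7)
Plan: Start by pruning $U$. Any voter who does not rank $p$ in one of the top two positions gives $p$ no points but boosts two other candidates, so such a voter can never help and need never be added. What remains splits into two classes: a \emph{type-1} voter places $p$ first and contributes $\alpha$ to $p$ and $\beta$ to whichever candidate she ranks second; a \emph{type-2} voter places $p$ second and contributes $\beta$ to $p$ and $\alpha$ to whichever candidate she ranks first. For each candidate $c \neq p$, let $n_1(c)$ and $n_2(c)$ be the numbers of type-1 and type-2 voters in the pruned $U$ whose ``partner'' of $p$ is $c$, and let $s_R(\cdot)$ denote the scores induced by $R$ alone.

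The algorithm would iterate over every pair $(X_1,X_2)$ with $X_1+X_2\le k$ representing the planned number of type-1 and type-2 voters to add; this gives polynomially many pairs. For a fixed pair, $p$'s final score is determined to be $s_R(p)+\alpha X_1+\beta X_2$, so every other candidate $c$ may receive at most $B(c):=s_R(p)+\alpha X_1+\beta X_2 - s_R(c)$ new points, i.e.\ $\beta y_1(c)+\alpha y_2(c)\le B(c)$, where $y_i(c)\in[0,n_i(c)]$ counts the chosen type-$i$ voters with partner $c$. Pairs for which some $B(c)<0$ are rejected on sight; for the rest the question reduces to: do integer $y_1(c),y_2(c)$ exist satisfying the per-candidate diagonal bound together with $\sum_c y_1(c)=X_1$ and $\sum_c y_2(c)=X_2$?

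I plan to decide this inner feasibility question by reduction to a polynomial-size max-flow instance. The skeleton of the network has a source $s$ feeding $X_1$ units into a type-1 super-node and $X_2$ units into a type-2 super-node; the super-nodes branch (through unit-capacity edges) to the individual voter nodes; each voter is attached to its unique partner candidate; and each candidate is attached to the sink. The technical heart of the proof—and the main obstacle—is encoding the coupled constraint $\beta y_1(c)+\alpha y_2(c)\le B(c)$ as the capacity of the gadget entering the sink from $c$, since the naive decoupled bounds $y_1(c)\le\lfloor B(c)/\beta\rfloor$ and $y_2(c)\le\lfloor B(c)/\alpha\rfloor$ are not equivalent to the diagonal bound.

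The key leverage is that $\alpha$ and $\beta$ are \emph{fixed constants}, so I can install a per-candidate gadget that charges each unit of type-1 flow exactly $\beta$ capacity units and each unit of type-2 flow exactly $\alpha$ capacity units on the outgoing edge $c\to t$, whose total capacity is set to $B(c)$. Concretely, I would route the type-1 and type-2 inflows into $c$ through two parallel multiplier edges realized by standard unit-capacity replication (duplicating each type-1 voter into $\beta$ copies and each type-2 voter into $\alpha$ copies), after which a single integral max-flow saturates the source if and only if a valid voter selection exists. Verifying integrality and that the flow can be decoded back into an actual choice of $y_1(c),y_2(c)$ values is the routine but essential step. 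Together with the $O(k^2)$-fold outer loop over $(X_1,X_2)$, this yields a polynomial-time algorithm for $f$-CCAV.
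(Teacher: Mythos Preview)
Your reduction has a genuine gap in the ``multiplier'' gadget. Standard flow networks obey conservation: a node cannot turn one unit of inflow into $\beta$ units of outflow, and splitting a voter into $\beta$ parallel unit-capacity paths only caps its throughput at $\beta$ rather than forcing it to be $0$ or $\beta$. If you keep the source capacity at $X_1$ as you write, then exactly $X_1$ units reach the candidate layer regardless of the replication, and the edge $c\to t$ enforces only $y_1(c)+y_2(c)\le B(c)$---precisely the naive relaxation you already rejected. If instead you scale the source to $\beta X_1$ and give each type-1 voter capacity $\beta$, integral flows no longer correspond to integral voter selections. Concretely, take $\alpha=3$, $\beta=2$, two rivals $a,b$ (plus a dummy so the registered votes can realize the desired relative scores), one type-1 voter with partner $a$ and one with partner $b$, no type-2 voters, $k=1$, and $s_R(a)=s_R(b)=s_R(p)+2$. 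For $(X_1,X_2)=(1,0)$ you get $B(a)=B(b)=1$; the scaled network routes one unit through each of the two voters and saturates the source of value $2$, yet no single added voter makes $p$ a winner (its partner ends at $s_R(p)+4>s_R(p)+3$). The ``routine'' decoding step is exactly where the argument fails.

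The paper sidesteps this two-commodity obstacle altogether. It fixes a constant $\ell$ with $\ell(\alpha-\beta)\ge\beta$ and shows that swapping any $\ell$ type-2 voters for $\ell$ pairwise distinct type-1 voters never hurts $p$; hence some optimal solution uses fewer than $\ell$ type-2 voters, or leaves fewer than $\ell$ distinct type-1 voters unused. Because $\ell$ is a constant, both situations can be enumerated in polynomial time, after which only a one-dimensional extension problem remains (``fill up with voters from $V_1$ only'' or ``from $V_2$ only''), and that is handled by a straightforward greedy check. If you want to rescue your outline you must either reduce to one dimension in a similar way or bring in a genuine two-commodity tool with an integrality guarantee; single-commodity max-flow is not enough.
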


\begin{proof}
  Due to Theorem~\ref{theorem:3 approval and 2 veto}, assume
  $\alpha>\beta$. Let $\ell$ be such that
  $\ell(\alpha-\beta)\ge\beta$. Let an instance with candidates
  $C$, favorite candidate $p$,
\hsnote{LATE EDIT: added the fav candidate $p$, light editing of 
the paragraph to keep it within 6 pages}%
  registered voters $R$, and potential voters $U$ be given; let
  $k$ be the number of voters that can be added. Assume $\ell\leq
  k$, otherwise brute-force.  Let $V_1$ ($V_2$) be the set of voters in
  $U$ that put $p$ in the first (second) spot. Clearly, we add 
  voters only from $V_1\cup V_2$. Assume w.l.o.g.\ that
  two voters who vote identically in the first two positions also rank
  the remaining candidates identically.  \ifshortenedmaintext \else In
  particular, two voters in $V_1$ ($V_2$) are different if and only if
  they vote different candidates in the second (first) place.  \fi We
  use the following facts.
 
 \begin{restatable}{prooffact}{extendingvivotersinptime}
 \label{fact:extending vi voters in ptime}
 For $i\in\set{1,2}$, given a set $S\subseteq V_1\cup V_2$, it can be checked in polynomial time whether $S$ can be extended, by adding at most $k-\card S$ voters from $V_i$ to make $p$ win.
 \end{restatable}
  
 \begin{restatable}{prooffact}{wemostlyaddVOnevoters}
 \label{fact:we mostly add V1 voters}
  To make $p$ win, it is never better to add $\ell$ voters from $V_2$ than adding $\ell$ pairwise different voters from $V_1$.
 \end{restatable}
  
 Due to Fact~\ref{fact:we mostly add V1 voters}, we do not have to
 consider solutions that use $\ell$ or more voters from $V_2$ and
 leave $\ell$ or more distinct voters from $V_1$ unused. So if a
 solution exists, we can find one using fewer than $\ell$ voters from
 $V_2$, or leaving fewer than $\ell$ pairwise different voters from
 $V_1$ unused.  For both cases we will test whether there is a
 corresponding solution.
 
 We start with the first case. Since $\ell$ is constant, we can test
 every subset $S\subseteq V_2$ with $\card{S}<\ell$. For each of these
 $S$, we use Fact~\ref{fact:extending vi voters in ptime} to check in
 polynomial time whether $S$ can be extended to a solution by adding
 voters only from $V_1$.
 
 For the second case, we determine in polynomial time whether there is
 a solution that does not leave $\ell$ pairwise different voters from
 $V_1$ unused as follows: We encode the choice of the unused voters
 from $V_1$ as a function $u\colon
 C\rightarrow\set{0,\dots,\card{V_1}}$ that states, for each candidate
 $c$, the number of unused $V_1$-voters placing $c$ second. Since we
 look for solutions satisfying the second case, we only consider
 functions $u$ for which $u(c)>0$ for at most $\ell-1$ 
 candidates.
  
 \ifshortenedmaintext
 We can now brute-force over all subsets $S\subseteq C$ with $\card{S}\leq \ell-1$ (since $\ell$ is constant) and all functions $u$ as above for which $u(c)>0$ exactly for the candidates in $S$, and apply Fact~\ref{fact:extending vi voters in ptime} to search for the votes to be added.%
 \else
 Thus we can brute-force over all of these possibilities as follows:
 
 \begin{enumerate}
  \item In the outer loop, we test every subset $S\subseteq C$ with $\card{S}\leq \ell-1$. Since $\ell$ is a constant, there are only polynomially many of these sets.
  \item For each such $S$, we test all functions $u$ as above for which $u(c)>0$ holds iff $c\in S$. Such a function can be regarded as a function $u\colon S\rightarrow\set{1,\dots,\card{V_1}}$. There are $\card{V_1}^{\card S}\leq\card{V_1}^{\ell-1}$ many of these functions. Since $\card{V_1}$ is bound by the input size and $\ell$ is a constant, this number is polynomial in the input size.
 \end{enumerate}
 
 So we can polynomially go through all possibilities of potentially unused $V_1$-voters, which is the same as going through all possible sets $S'$ of \emph{used} $V_1$-voters. For each of these sets $S'$, we again use Fact~\ref{fact:extending vi voters in ptime} to check in polynomial time whether $S'$ can be extended to a solution. This completes the proof.
 \fi
 \end{proof}

 Our final
\hsnote{LATE EDIT: exchanged ``second'' with ``final}%
polynomial-time case is the generator
 $(2,1,\dots,1,0)$. Here every voter ``approves'' one candidate and
 ``vetoes'' another. This case is interesting for two reasons. First,
 it is the only case where the algorithm depends on the coefficients
 itself, as opposed to their $>$-order. Namely, for all
 $\alpha>\beta>0$ with $\alpha\neq 2\beta$, $f$-CCAV with
 $f=(\alpha,\beta,\dots,\beta,0)$ is NP-complete
 (Theorem~\ref{theorem:summary finitely many
   coefficients}.\ref{theorem part:alpha1 neq alpha2 alpha1 neq 2
   alpha2}).
 Second, this case was the only one left open in Betzler
 and Dorn's~\shortcite{bet-dor:j:possible-winner-dichotomy}
 possible winner dichotomy;
 the question was eventually settled by
 Baumeister and
 Rothe~\shortcite{bau-rot:j:possible-winner-dichotomy-final-step}%
\ifshortenedmaintext
.
\else
, who proved NP-completeness.
\fi

\begin{theorem}\label{theorem:2 1star 0 in ptime}
 $f$-CCAV is solvable in polynomial time for the generator $f=(2,1,\dots,1,0)$.
\end{theorem}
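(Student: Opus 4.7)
The plan is to reduce $f$-CCAV for the generator $f=(2,1,\ldots,1,0)$ to a parametric family of min-cost integer flow problems. For an election with $n$ voters the score of candidate $c$ equals $n+F(c)-L(c)$, where $F(c)$ and $L(c)$ count voters ranking $c$ first and last, respectively; setting $s(c):=F(c)-L(c)$, candidate $p$ wins iff $s(p)\ge s(c)$ for every $c$. Each added voter contributes $+1$ to $s(\text{first})$ and $-1$ to $s(\text{last})$, so only the (first, last) pair of each unregistered voter is relevant.

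I classify each unregistered voter by its (first, last) pair into three types: \emph{Type A} (first $=p$), strictly favorable for $p$; \emph{Type B} (last $=p$), strictly harmful, hence never added; and \emph{Type C} (neither is $p$), which moves one unit of score between two non-$p$ candidates. Let $A_c$ denote the number of available Type A voters with last $=c$ and $C_{a,b}$ the number of available Type C voters $(a,b)$. Writing $x_c,y_{a,b}\ge 0$ for the amounts selected, $g_0(c)=s(p)-s(c)$ for the initial gap, and $X=\sum_c x_c$, the victory conditions become
\[
X+x_c+\sum_{a}y_{a,c}-\sum_{b}y_{c,b}\;\ge\;-g_0(c)\qquad\text{for all }c\neq p,
\]
subject to $x_c\le A_c$, $y_{a,b}\le C_{a,b}$, and $X+\sum_{a,b}y_{a,b}\le k$.

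The outer loop iterates $X$ over $\{0,1,\ldots,k\}$, which freezes the right-hand side $-g_0(c)-X$ at each constraint. For each fixed $X$ the residual problem reduces to a min-cost integer flow on a network with source $S$, sink $T$, and one node per non-$p$ candidate: edges $S\to c$ of capacity $A_c$ and unit cost (carrying $x_c$ units), edges $a\to b$ of capacity $C_{a,b}$ and unit cost (carrying $y_{a,b}$ units), and zero-cost edges $c\to T$ with lower bound $\max(0,-g_0(c)-X)$. Pinning the total $S$-outflow to $X$ makes the cost of any feasible flow equal to $X+\sum y_{a,b}$, the total number of added voters, and we answer YES iff some $X$ yields a feasible flow of cost at most $k$.

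The main subtlety is that with only forward edges into $T$, flow conservation silently enforces $x_c+(\text{net Type-C in-flow at }c)\ge 0$ at every candidate, ruling out valid solutions in which we inject many Type-C voters with first $=c$ and let $c$'s existing lead absorb the resulting deficit. The fix is to add auxiliary back-edges $T\to c$ of capacity $\max(0,g_0(c)+X)$ and zero cost, permitting a net deficit at $c$ exactly up to the slack in $c$'s victory constraint. A direct two-way translation then verifies that the integer feasible flows coincide with the feasible voter-selections, and since integer min-cost flow is polynomial-time solvable and the outer loop contributes only a polynomial factor, we obtain the claimed polynomial-time algorithm.
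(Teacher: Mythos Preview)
Your proof is correct and, like the paper, reduces the problem to min-cost flow, but the two reductions differ in how the ``$p$-first'' voters are handled.

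The paper first argues that any voter placing $p$ first weakly dominates any other choice, so one can greedily commit to \emph{all} such voters (or, if there are more than $k$ of them, to $k$ of them chosen by the obvious greedy rule).  After this preprocessing, only voters with $p$ in a middle position remain, $p$'s final score is fixed, and a \emph{single} min-cost flow suffices: nodes are the candidates $c\neq p$, an edge $c_1\to c_2$ of capacity equal to the number of available $c_2>\cdots>c_1$ voters and cost~$1$, edges $s\to c$ of capacity $\score{c}$ and cost~$0$, and edges $c\to t$ of capacity $\score{p}$ and cost~$0$; one asks for a flow of value $\sum_{c\neq p}\score{c}$ and cost at most the remaining budget.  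No lower bounds, no back-edges, no outer loop.

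Your approach instead enumerates the number $X$ of Type~A voters and folds the choice of \emph{which} Type~A voters to use into the flow itself, at the price of a more elaborate network (lower bounds on the $c\to T$ edges and auxiliary back-edges $T\to c$ to absorb slack).  This is sound---your two-case analysis of the $c\to T$ and $T\to c$ edges correctly enforces the inequality $x_c+\sum_a y_{a,c}-\sum_b y_{c,b}\ge -g_0(c)-X$---and the outer loop is polynomial since we may assume $k\le |U|$.  What you gain is that you never need to justify the greedy step; what you pay is an extra factor of $k$ and a less transparent network.  Both routes are valid.
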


\begin{proof}
  Let $C$, $R$, and $U$ be the set of candidates, registered voters,
  and unregistered voters, $p$ the preferred candidate, and $k$ the
  number of voters we can add. We add no voter voting $p$ last, and it
  is never better to add a voter voting $p$ second than to add one
  voting $p$ first. So we first add all voters from $U$ that place
  $p$ in the first position%
  \ifshortenedmaintext , using the obvious greedy strategy if there
  are more than $k$.  \else . If there are more than $k$ of these
  voters, we choose the ones to add with the obvious greedy strategy
  that always picks, among all available votes of the form
  $p>\dots>c$, the one where $c$ currently has the highest score.  \fi
  After this preprocessing, all relevant voters in $U$ vote
  $c_1>\dots>c_2$ with $p\notin\set{c_1,c_2}$. To simplify
  presentation, we use Proposition~\ref{p:differ} and consider $f$ as
  the generator $(1,0,\dots,0,-1)$. Then the score of $p$ is
  determined by the votes in $R$.
 
 We reduce the problem to 
min-cost 
\lhnote{This can be done either with max-flow min-cost (which means
find the cheapest cost among all those that achieve max-flow), OR with the 
simpler problem min-cost flow (which means find the cheapest flow 
that provides flow greater than or equal to FOO).  Since we're 
saturating here, either is fine to state.}%
(network) flow, which can be solved in polynomial time. Let $S=\sum_{c\in C - \set{p}}\score c$. We use the following nodes and edges:
 
 \begin{compactitem}
  \item For each $c\in C\ - \set{p}$, there is a node $c$, additionally, there are source and target nodes $s$ and $t$.
  \item There is an edge from candidate $c_1$ to candidate $c_2$ with cost $1$ and with capacity equal to the number of voters in $U$ voting $c_2>\dots>c_1$.
  \item For each candidate-node $c$, there is an edge from $s$ to $c$ with cost $0$ and capacity $\score{c}$ and an edge from $c$ to $t$ with cost $0$ and capacity $\score{p}$.
 \end{compactitem}
 
 Now $p$ can be made winner with at most $k$ additional voters if and
 only if there is a flow from $s$ to $t$ with value $S$ and cost at
 most $k$: Clearly, network flows with cost at most $k$
 correspond to subsets of $U$ with size at most $k$, and using an edge
 $(c_1,c_2)$ $r$ times corresponds to adding $r$ voters voting
 $c_2>\dots>c_1$, since this vote transfers one point from $c_1$ to
 $c_2$. The capacity of the outgoing edges of $s$ ensure that each
 candidate initially gets the correct number of points (since $S$
 points must be distributed), the edges to $t$ ensure that in the end,
 no candidate may have more points than $p$.
\end{proof}

The above results cover all polynomial-time cases of
Theorem~\ref{theorem:dichotomy}.  We now turn to the NP-complete cases.

\subsection{Hardness Results}\label{sect:np complete}

We use the standard NP-complete problem 3DM (3-dimensional matching).
\begin{definition}
3DM is defined as follows:\\[2pt]
 \begin{tabular}{lp{13cm}}
  Input & Pairwise disjoint sets $X$, $Y$, and $Z$ with $\card X=\card Y=\card Z$, and a set $M\subseteq X\times Y\times Z$.\\
  Question & Is there a set $C\subseteq M$ with $\card{C}=\card{X}$ 
that covers $X$, $Y$, and $Z$?
 \end{tabular}
\end{definition}
\noindent 
We say that $C$ covers $X$ 
(resp.,\ $Y$, $Z$) 
if every element from $X$ (resp.,\ $Y$, $Z$) appears in a tuple of $C$. 
\ifshortenedmaintext
\else
Since $X$, $Y$, and $Z$ are pairwise disjoint, in this case every element from $X$ ($Y$, $Z$) appears in the first (second, third) component of a tuple from $C$. 
\fi
Since 
\lhnote{I removed the condition on cardinality of $M$ as that was incorrect
I think---if we really require $\card{M} = \card{X}$, the problem
falls into P as there is only one possible cover, namely all of $M$.}%
$\card{X}=\card{Y}=\card{Z}$, 
a set $C\subseteq M$ with $\card C=\card{X}$ covers $X$ ($Y$, $Z$) if and only if no two tuples from $C$ agree in the first (second, third) component.
A set $C$ 
covering $X$, $Y$, and $Z$ is called a \emph{cover}. 
\lhnote{WARNING TO HENNING: Your definition built into the 
definition of cover the requirement that $\card{C} = \card{X}$.
The rewritten definition and the following definitional paragraph
removes that.  Of course, in reality, all the covers here will be 
exact covers.  But we took it out of the definition.  If you are 
somewhere USING in a proof the word ``cover'' to BUILD IN exactness,
then what we did just broke those proofs.}%

\subsubsection{Constructing Elections}

In our hardness proofs, we often need to set up the registered voters
to ensure specific scores for the candidates. The following lemma
shows that, if there is a ``dummy'' candidate to whom any surplus
points can be ``shifted,'' we can obtain every set of relative scores
that can be expressed as a polynomial-size linear combination of the
coefficients in the scoring vector.

\begin{restatable}{lemma}{coefficientsrealization}
\label{lemma:coefficients realization}
 Given a scoring vector $(\alpha_1,\dots,\alpha_m)$, and for each $c\in\set{1,\dots,m-1}$, numbers $a^c_1,\dots,a^c_m$ in signed unary, and a number $k$ in unary, we can compute, in polynomial time, votes such that the scores of the candidates 
 when evaluating these votes according to the scoring vector $(\alpha_1,\dots,\alpha_m)$
are as follows: There is some $o$ such that for each $c\in\set{1,\dots,m-1}$,
$\score{c}=o+\sum_{i=1}^ma^c_i
\alpha_i$, and $\score{c}>\score{m}+k
\alpha_1$.
\end{restatable}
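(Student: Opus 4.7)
My plan is to build the vote multiset in two stages. Stage 1 realises the prescribed relative scores on $\{1,\dots,m-1\}$; Stage 2 pads with ``$m$-at-bottom'' votes to drive candidate $m$'s score far below the others. A first simplification is to shift the targets into the non-negative range: pick a polynomial bound $N\ge\max_{c,i}|a^c_i|+1$ (possible because the $a^c_i$ are in signed unary) and work with $b^c_i=a^c_i+N\ge 0$. The induced change to every target is the common quantity $N\sum_i\alpha_i$, absorbed into the offset $o$, so it suffices to realise the $b^c_i$.

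For the Stage-1 core I use \emph{block gadgets} $B_{c,i}$ indexed by $c\in\{1,\dots,m-1\}$ and $i\in\{1,\dots,m\}$. Each $B_{c,i}$ is a multiset of $m-2$ votes in which $c$ is pinned at position $i$, candidate $m$ is pinned at a fixed anchor position $\pi(i)\neq i$, and the remaining $m-2$ candidates cyclically rotate through the remaining positions. By the rotation symmetry, one instance of $B_{c,i}$ contributes $(m-2)\alpha_i$ to $c$, $(m-2)\alpha_{\pi(i)}$ to $m$, and exactly the same amount $\sigma(i)=\sum_{j\notin\{i,\pi(i)\}}\alpha_j$ to every other candidate in $\{1,\dots,m-1\}$. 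Including $b^c_i$ copies of each $B_{c,i}$ makes every $c^\star\in\{1,\dots,m-1\}$ accumulate a ``signal'' $(m-2)\sum_i b^{c^\star}_i\alpha_i$ from its own gadgets plus a background $\sum_{c}\sum_i b^c_i\sigma(i)-\sum_i b^{c^\star}_i\sigma(i)$ from the rest.

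The main obstacle is simultaneously cancelling the $c^\star$-dependent correction $-\sum_i b^{c^\star}_i\sigma(i)$ and the scaling factor $m-2$ on the signal to reach the target form $o+\sum_i b^{c^\star}_i\alpha_i$. I resolve it by layering in a second family of \emph{equaliser} blocks: for each $(c,i)$ an $E_{c,i}$ that delivers $\sigma(i)$ to a designated candidate while giving a uniform amount to every other candidate in $\{1,\dots,m-1\}$ (built analogously from symmetric rotations); inserting $b^{c^\star}_i$ copies of $E_{c^\star,i}$ kills the residual. The leftover factor $m-2$ is removed by a companion ``unit'' gadget that contributes exactly $\alpha_i$ per copy to $c$ --- obtained by combining an $m-1$-rotation block with compensation whose excess is dumped on $m$ (legitimate because $m$'s score is unconstrained up to being low enough, which Stage 2 then enforces). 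All the multiplicities are polynomial integers, so Stage 1 runs in polynomial time.

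Stage 2 appends polynomially many ``$m$-at-bottom'' rotation blocks: each is a set of $m-1$ votes in which $m$ sits at position $m$ and $\{1,\dots,m-1\}$ rotate through positions $1,\dots,m-1$. Each such block gives $\sum_{j<m}\alpha_j$ uniformly to every $c\in\{1,\dots,m-1\}$ and $(m-1)\alpha_m$ to $m$, preserving the Stage-1 relative scores while opening the gap by $\sum_{j<m}\alpha_j-(m-1)\alpha_m\ge 0$. Provided the scoring vector is not constant (the only case in which the margin condition is unattainable and the scoring rule is trivial anyway), polynomially many such blocks --- polynomial because $k$ is in unary --- drive the margin above $k\alpha_1$, completing the construction.
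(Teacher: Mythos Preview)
Your two-stage plan and the analysis of the block $B_{c,i}$ are correct, but Stage~1 breaks at precisely the place you call ``the main obstacle.'' The equaliser $E_{c,i}$ and the unit gadget are asserted, not constructed, and cannot have the properties you need. Run your own bookkeeping on the equaliser layer: if $E_{c,i}$ gives every nondesignated candidate in $\{1,\dots,m-1\}$ the common amount $u(i)$, then after inserting $b^{c}_i$ copies of each $E_{c,i}$ the $c^\star$-dependent residual becomes $-\sum_i b^{c^\star}_i\,u(i)$; you have only traded $\sigma$ for $u$. Killing it would need $u(i)\equiv 0$, impossible with actual votes, and even a constant $u(i)=u\neq 0$ leaves $-u\sum_i b^{c^\star}_i$, which still varies with $c^\star$ since your uniform shift by $N$ does not equalise the sums $\sum_i a^{c^\star}_i$. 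The same issue recurs for the ``unit gadget'': any block that pins $c$ at position $i$ and symmetrises the rest gives $c$ a multiple of $\alpha_i$ (the multiple being the block size) and the others a common nonzero background; nothing you wrote removes both the multiplicative factor and the fresh $c^\star$-dependent correction that this background creates.

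The paper's route avoids all of this with one simpler primitive. Take the \emph{full} $m$-cycle---$m$ votes in which all $m$ candidates rotate through all $m$ positions, so everyone scores $A=\sum_j\alpha_j$---and then in exactly \emph{one} of those votes swap the positions $k$ and $l$ occupied by $c$ and the dummy $m$. Every candidate still scores $A$ except that $c$ now has $A+(\alpha_k-\alpha_l)$ and $m$ has $A-(\alpha_k-\alpha_l)$. This swap gadget is already a clean transfer of $\alpha_k-\alpha_l$ between $c$ and the dummy with no scaling factor and no residual on the other candidates; iterating it (and, for negative $a^c_i$, transferring to every candidate other than $c$ and $m$ instead) directly realises the target combination for each $c\in\{1,\dots,m-1\}$, after which a few more transfers push the dummy below the required margin. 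The point is that the correct elementary move is ``full cycle plus one swap,'' not ``pin one candidate and rotate the rest''---the latter inevitably introduces exactly the correction terms you could not cancel.
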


The value $o$ in Lemma~\ref{lemma:coefficients realization} is the common offset
for all relevant scores. The actual value of $o$ is irrelevant, since the
winner of the election is determined by the relative scores.
The value $k$ is given so that the computed votes
\hsnote{LATE EDIT: exchanged ``construction'' with ``computed votes''}%
ensure that the dummy candidate $m$
cannot win the election with the addition of at most $k$ voters.
\hsnote{Added this discussion
about what $k$ is for, the lemma might 
look unnecessarily complicated otherwise.}%

\subsubsection{``Many'' Different Coefficients}\label{sect:unbounded coefficients}

We now show that the CCAV-problem is NP-complete for generators using
``many'' different coefficients.
Consider any generator $f$ using (at least)
\hsnote{added the ``at least''}%
$7$
different coefficients for some length $m$. Then
with
$\alpha^{f,m}=(\alpha^{f,m}_1,\alpha^{f,m}_2,\alpha^{f,m}_3,\alpha^{f,m}_4,\dots,\alpha^{f,m}_{m-2},\alpha^{f,m}_{m-1},\alpha^{f,m}_m)$
we know that $\alpha^{f,m}_4>\alpha^{f,m}_{m-2}$. This condition in fact suffices for the CCAV problem to be NP-hard;
the result applies to, e.g., Borda, $3$-veto, and
$4$-approval (the latter two use just two different coefficients, but satisfy $\alpha^{f,m}_4>\alpha^{f,m}_{m-2}$ for $m\ge7$).

For $4$-approval or $3$-veto, NP-hardness
\hsnote{removed ``of CCAV'' to save space}%
can be proven by
positioning the elements of $M$ from a 3DM-instance, along with $p$, in the $4$ top positions
of an unregistered
\hsnote{added the ``unregistered''}%
$4$-approval vote or (without $p$) in the last $3$ positions
of an unregistered
\hsnote{again, added ``unregistered''}%
$3$-veto
vote. In 
\hsnote{replaced ``the more general case'' with ``our cases'' to save space}%
our cases,
we can always ``simulate'' one of
these systems: If $\alpha^{f,m}_4>\alpha^{f,m}_{m-2}$, then being
ranked in one of the first $4$ positions is strictly better than being
ranked in one of the last $3$ positions. Roughly speaking, if ``many''
intermediate coefficients are larger than the last $3$, then the last
$3$ are the ``exception,'' and we can use them to
``simulate'' $3$-veto. On the other hand, if ``many'' intermediate
coefficients are smaller than the first $4$, then the first $4$ are
the ``exception'' and we ``simulate'' $4$-approval.\footnote{For
   generators where both cases apply such as
  $f=(2,2,2,2,1,1,1,\dots,1,0,0,0)$, either reduction
  works.} NP-hardness for both $3$-veto and $4$-approval is proved
by Lin~\shortcite{lin:thesis:elections}; however we use a direct
reduction from 3DM in our generalization.

We start with the ``simulation'' of $3$-veto.
\hsnote{Rephrased the following discussion a bit.}%
\lhnote{LATER EDIT: 
I've rewritten it more.  If you yet again rewrite it, please
do not add ``does not only'' back in yet again, as that phrase isn't 
good English.  But please 
DO check that my rewrite below doesn't break the meaning!!  
Note that I have removed the paragraph break between this paragraph
and the following one, as otherwise the final sentence of this 
one just leaves one on a confused cliff.}%
The statement of the following result is a bit unusual.
It indeed gives a reduction for generators
meeting the condition $\alpha^{f,m}_{3k+1}>\alpha^{f,m}_{m-2}$ for all
$m$.  But beyond that 
the function $g$ gives what we call a ``partial'' reduction
from 3DM to $f$-CCAV for $f$'s that meet the condition for some
values of $m$.
In the proof, the size of the 3DM instance
is artificially enlarged to ensure that this ``partial reduction''
meets an analogue counterpart in such a way that for every generator $f$
that satisfies $\alpha^{f,m}_4>\alpha^{f,m}_{m-2}$ for \emph{some} $m$,
we know that for \emph{each} large enough $m$, one of the two reductions
can be applied. 
(Appendix
Section \ref{subsect:4 approval generalization and partial reduction discussion}
has more on this.)
\lhnote{LATER EDIT: Henning, I put forward pointer currently at the end of the 
paragraph.  But if you think it is better to do so---I am not sure
which is better and I lightly lean to where it is now---you could change
the pointer to remove the period inside the parens and then move it to
RIGHT after `what we call a ``partial'' reduction' so it would be 
inside that sentence.}%

\begin{theorem}\label{theorem:3-veto generalization}
 Let $f$ be an \fp-uniform $\rationals$-generator. Then there exists an \fp-computable function $g$ such that 
 \begin{itemize}
  \item $g$ takes as input an instance $I_{\mathtext{3DM}}$ of 3DM and produces an instance $I_{\mathtext{CCAV}}$ of $f$-CCAV with $m=6k$ candidates, where $k=\card{X}=\card{Y}=\card{Z}$.
  \item If $\alpha^{f,m}_{3k+1}>\alpha^{f,m}_{m-2}$, 
then: 
$I_{\mathtext{3DM}}$ is a positive instance of 3DM iff $I_{\mathtext{CCAV}}$ is a positive instance of $f$-CCAV.
 \end{itemize}
\end{theorem}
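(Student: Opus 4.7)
The plan is to have $g$, on input a 3DM instance over $X, Y, Z$ of common size $k$, build a CCAV instance with candidate set $C = \set{p} \cup X \cup Y \cup Z \cup D$, where $D$ is a set of $3k-1$ ``dummy'' candidates, so that $\card{C} = 6k = m$. For each $(x,y,z) \in M$ I will introduce one or more unregistered votes that place $x, y, z$ in the last three ``veto'' positions $m-2, m-1, m$, put $p$ in position~$1$, and place the remaining $6k-4$ candidates canonically in positions $2$ through $m-3$. The CCAV budget will be set to exactly the number of unregistered votes that corresponds to a $k$-element cover. The registered voter set will be produced by a single application of Lemma~\ref{lemma:coefficients realization}, chosen so that each $e \in X \cup Y \cup Z$ starts with an initial score that will tie (or fall just below) $p$'s score after $e$ is vetoed exactly once in the added votes and will strictly exceed $p$'s score if $e$ is never vetoed; dummies will be initialized low enough that they cannot overtake $p$ within the allotted budget, and the lemma's built-in guarantee $\score{c} > \score{m} + k\alpha_1$ handles the bookkeeping candidate $m$.

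Correctness will then follow from the standard 3DM/veto argument. In the forward direction, a 3DM cover yields unregistered votes in which every $e \in X \cup Y \cup Z$ is vetoed exactly once, so by our score choice $p$ ties every competitor and is a winner. In the reverse direction, if $p$ wins after the addition of at most the budgeted number of votes, then every $e \in X \cup Y \cup Z$ must be vetoed at least once; since each added vote vetoes exactly three distinct elements, one each from the disjoint sets $X, Y, Z$, covering all $3k$ of them requires exactly $k$ added votes and those votes must then constitute a 3DM cover. The hypothesis $\alpha^{f,m}_{3k+1} > \alpha^{f,m}_{m-2}$ enters only here: it is precisely what forces the ``last three'' scores to be strictly smaller than the score at \emph{any} of the top $3k+1$ positions, so the veto slots really do function as vetoes and the penalty for being vetoed is large enough to drive the equivalence.

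\textbf{Main obstacle.} The delicate point will be controlling the scores picked up by the $3k - 3$ non-vetoed $X \cup Y \cup Z$ elements in each added vote: these sit in middle positions $3k+1, \ldots, m-3$ whose coefficients need not be equal, so different covers could in principle distribute the same element across different middle positions and thereby give it different total scores. I would neutralize this by including, for each triple $(x,y,z) \in M$, a copy of the vote for each permutation of the ``other'' $X$-, $Y$-, and $Z$-elements within three separate blocks of middle positions, so that summed over the copies of any one triple every ``other'' element contributes an identical total middle-score; the CCAV budget and the per-candidate offsets supplied by Lemma~\ref{lemma:coefficients realization} would then be rescaled accordingly. The padding up to $m = 6k$ candidates, with the $3k-1$ dummies filling positions $2, \ldots, 3k$, provides exactly the slack needed to carry out this bookkeeping, and once the gains of the non-vetoed elements are equalized the ``iff'' will reduce to the direct counting argument sketched above.
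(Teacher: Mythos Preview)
Your overall setup---candidate set $\{p\}\cup X\cup Y\cup Z$ plus $3k-1$ dummies, votes with $p$ first and $(x,y,z)$ in the last three slots, registered voters via Lemma~\ref{lemma:coefficients realization}---matches the paper's. The gap is in how you handle the obstacle you correctly identify. Using one copy per permutation is not polynomial: there are $((k-1)!)^3$ such permutations, so $g$ would not be $\fp$-computable. Even the natural polynomial repair (cyclic shifts in each block) leaves the reverse direction open. The CCAV chair may add \emph{any} subset of the unregistered votes and is under no obligation to pick complete copy-sets of a triple; once she mixes partial copy-sets from different triples, the middle-position gains of a non-vetoed element are no longer equalized, and the counting argument you sketch collapses.

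The paper sidesteps all of this with a single vote per triple. The idea you are missing is to put the non-vetoed elements in positions $2,\ldots,3k+1$ (not $3k+1,\ldots,m-3$, where only position $3k+1$ is guaranteed to beat the veto slots) and to keep each $s_i$ at the \emph{same} position $1+i$ in every vote that does not veto it: list $s_1,\ldots,s_{3k}$ in order in positions $2,\ldots,3k+1$, and for the triple $(s_h,s_i,s_j)$ simply swap three dummies $d_1,d_2,d_3$ into the vacated slots $1+h,1+i,1+j$, pushing the remaining dummies into positions $3k+2,\ldots,m-3$. Then a never-vetoed $s_i$ receives exactly $\alpha_{1+i}$ from every added vote, independently of which triples are chosen; with $\score{s_i}=k\alpha_1-(k-1)\alpha_{1+i}-\alpha_{m-r(i)}$ the candidate $s_i$ ties $p$ after being vetoed once and strictly beats $p$ otherwise, via $\alpha_{1+i}\ge\alpha_{3k+1}>\alpha_{m-2}\ge\alpha_{m-r(i)}$. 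This yields budget exactly $k$ and a two-line reverse direction, with no averaging needed.
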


\begin{proof}
We write $\alpha_i$ for $\alpha^{f,m}_i$. 
W.l.o.g., let $X=\set{s_1,\allowbreak \dots,\allowbreak 
s_k}$,
$Y=\set{s_{k+1}, \allowbreak \dots, \allowbreak s_{2k}}$, and 
$Z=\set{s_{2k+1}, \allowbreak \dots, \allowbreak s_{3k}}$.
We use the following candidates:
 
 \begin{compactitem}
  \item Each $s_i\in\set{s_1,\dots,s_{3k}}$ is a candidate.
  \item $p$ is the preferred candidate.
  \item There are dummy candidates $d_1,\dots,d_{m-3k-1}$. We assume there are at least $3$ dummy candidates, i.e., $k\ge2$.
 \end{compactitem}

\noindent
 \ifshortenedmaintext
With Lemma~\ref{lemma:coefficients realization}, we choose votes in $R$ such that the \emph{relative} scores (the \emph{actual} scores are nonnegative) are as follows:
\hsnote{LATE EDIT: Added the note that the actual scores are nonnegative
to not confuse the reader. Some light editing on the sentence to save space.}%
 \else
We now use Lemma~\ref{lemma:coefficients realization} to construct the set $R$ of registered voters such that the scores of the candidates are as follows. (In the following, we ``normalize'' the scores of all candidates using the score of $p$ as a base. So we pretend that the number $o$ from the application of Lemma~\ref{lemma:coefficients realization} is zero in order to simplify the presentation, clearly the absolute points of all candidates must be positive and are shifted by the actual number $o$ from the lemma.)
 \fi
\ifshortenedmaintext
$\score{p}=0$;
$\score{s_i}=k
\alpha_1-(k-1)
\alpha_{1+i}-\alpha_{m-r(i)}$, where $r(i)=2,1,0$ depending on whether $s_i\in X,Y,Z$, respectively;
$\score{d_i}<- k
\alpha_1$ for $i\in\set{1,\dots,d_{m-3k-1}}$.
\else

 \begin{compactitem}
  \item $\score{p}=0$.
  \item $\score{s_i}=k
\alpha_1-(k-1)
\alpha_{1+i}-\alpha_{m-r(i)}$, where $r(i)=2,1,0$ depending on whether $s_i\in X,Y,Z$, respectively. 
  (So $\alpha_{m-r(i)}$ is exactly the amount of points that $s_i$ gains from a vote that 
  ``vetoes'' $s_i$,
  see below%
)
  \item $\score{d_i}<- k
\alpha_1$ for $i\in\set{1,\dots,d_{m-3k-1}}$.
 \end{compactitem}
\fi

 Let $M\subseteq X\times Y\times Z$ be the set from $I_{\mathtext{3DM}}$. For each $(x,y,z)=(s_h,s_i,s_j)\in M$ (so clearly $h<i<j$), 
we add an available voter to $U$ voting as follows:

$\begin{array}{lll}
 p    >  s_1      >  \dots >  s_{h-1} >  d_1 >  s_{h+1} >  \dots >  s_{i-1}  > \\
 d_2  >  s_{i+1}  >  \dots >  s_{j-1} >  d_3 >  s_{j+1} >  \dots >  s_{3k}   > \\
 d_4  >  \dots   >   d_{m-3k-1}  >  x  >  y  > z.   
   \end{array}$

 We say that such a vote \emph{vetoes} the candidates $x$, $y$, and $z$, and identify elements of $M$ and the corresponding votes.
\hsnote{LATE EDIT: Added the sentence about identification of $M$'s elements and the votes, because we later speak about the votes forming a cover.}%
 
 We show that the reduction is correct.
 \ifshortenedmaintext
 If the 3DM-instance is positive, one can verify that $p$ wins after adding the voters from the cover.
\hsnote{LATE EDIT: Changed ``corresponding to the cover'' to ``from the cover,'' since we now identify these, and we need the change to save space.}%
 \else
 First assume that the instance of 3DM is positive, and let $C\subseteq M$ be the cover with $\card C=k$. We add the voters corresponding to the elements of $C$ in the obvious way and show that $p$ indeed wins the resulting election. 
 
 To see this, it suffices to show that $p$ has at least as many points as each candidate $s_i$, since by construction, the dummy candidates cannot win the election with adding at most $k$ votes. So let $i\in\set{1,\dots,3k}$. The final score for $p$ and $s_i$ are as follows:
 
 \begin{itemize}
  \item $p$ gains $\alpha_1$ points in each of the $k$ additional votes, so $p$ ends up with exactly $k
\alpha_1$ points.
  \item $s_i$ gains $(k-1)
    \alpha_{1+i}$ points from the $(k-1)$ votes corresponding to
    elements $(x,y,z)\in C$ with $s_i\notin\set{x,y,z}$, and
    $\alpha_{m-r(i)}$ points from the single vote vetoing $s_i$. So
    $s_i$ ends up with a final score of
    $k
\alpha_1-(k-1)
\alpha_{1+i}-\alpha_{m-r(i)}+(k-1)
\alpha_{1+i}+\alpha_{m-r(i)}=k
\alpha_1$
    as well.
 \end{itemize}
 
 \fi
 For the converse, let $C\subseteq M$ be a set of at most $k$ votes whose addition lets $p$ win. If this is not a cover,
\hsnote{removed the work ``case'' appearing here}%
 then there is some $s_i$ that is vetoed in none of the added votes. We now compare the points of $p$ and $s_i$.
 \begin{compactitem}
  \item 
$p$ gains $\alpha_1$ points in each of the $\card C$ additional votes, so $p$ ends up with exactly $\card C
\alpha_1$ points.
  \item 
Since $s_i$ is not vetoed in any new vote, $s_i$ gains $\alpha_{1+i}$ points in each added vote and thus ends up with $k
\alpha_1-(k-1)
\alpha_{1+i}-\alpha_{m-r(i)}+\card{C}
\alpha_{1+i}$ points. 
 \end{compactitem}
Since $\card{C}\leq k$, $\alpha_1\ge\alpha_{i+1}$ and $\alpha_{i+1}\ge\alpha_{3k+1}>\alpha_{m-2}\ge\alpha_{m-r(i)}$, it follows that
$ k
\alpha_1-(k-1)
\alpha_{1+i}-\alpha_{m-r(i)}+\card{C}
\alpha_{1+i}  -\card{C}
\alpha_1\\
    =  \underbrace{(k-\card {C})(\alpha_1-\alpha_{1+i})}_{\ge 0}+\underbrace{\alpha_{1+i}-\alpha_{m-r(i)}}_{>0} > 0$.
So $s_i$ beats $p$ if $C$ is not a cover; 
since by assumption adding $C$ makes $p$ win, $C$ must be a cover.
\end{proof}

\ifshortenedmaintext
Theorem~\ref{theorem:3-veto generalization} 
holds analogously, with a slightly more complex argument, when the condition is replaced with 
\hsnote{replaced the original ``with the condition changed to'' in order to
avoid the repetition of ``with''}%
$\alpha^{f,m}_4>\alpha^{f,m}_{m-3k+1}$:
\hsnote{LATE EDIT: added the note about $4$-approval, since this is what 
we discuss earlier at some length.}%
In this case we ``simulate'' $4$-approval.
Both of these ``partial reductions'' also hold for generators satisfying no purity condition at all. 
However, for the ``complete'' reduction
\hsnote{added ``reduction''}%
\lhnote{LATER EDIT: yikes... i had completely misread what the ``complete'' we 
referring too.  Thanks for clarifying that and sorry for my misreading!}%
in Theorem~\ref{theorem:alpha4 >
  alpha(n-2) np completeness}, we need the purity condition to ensure
that the condition $\alpha^{f,m}_4>\alpha^{f,m}_{m-2}$ remains true
for all $m'\ge m$.
The proof of Theorem~\ref{theorem:alpha4 > alpha(n-2) np completeness} then uses that for $m=6k$, the condition $\alpha^{f,m}_4>\alpha^{f,m}_{m-2}$ implies that one of $\alpha^{f,m}_{3k+1}>\alpha^{f,m}_{m-2}$ or $\alpha^{f,m}_4>\alpha^{f,m}_{m-3k+1}$ holds, and then, for each 3DM-instance, picks the appropriate ``partial'' reduction.
\hsnote{rephrased this discussion as well.}%
\else
In a similar way, we can prove an analogous result for all scoring rules that ``can implement'' $4$-approval in the sense that being voted in one of the first $4$ positions is strictly better than being voted in most ``later'' positions. The proof of the following result is very similar to the proof of Theorem~\ref{theorem:3-veto generalization}, except that an additional argument is needed to ensure that the favorite candidate cannot be made a winner with less than $k$ additional voters.

\begin{restatable}{theorem}{fourapprovalgeneralizationinformal}
\label{theorem:4-approval generalization informal}
 Theorem~\ref{theorem:3-veto generalization} also holds when the condition $\alpha^{f,m}_{k+1}>\alpha^{f,m}_{m-2}$ is replaced with $\alpha^{f,m}_4>\alpha^{f,m}_{m-3k+1}$.
\end{restatable}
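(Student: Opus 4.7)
My plan is to mirror the blueprint of Theorem~\ref{theorem:3-veto generalization}'s proof, but with the vote structure \emph{dualized} so that the top four slots do the work that the bottom three did there, in line with the hypothesis that $\alpha_4 > \alpha_{m-3k+1}$ places positions $1$--$4$ strictly above the last $3k$ positions. Concretely, given a 3DM instance with $X = \{s_1,\dots,s_k\}$, $Y = \{s_{k+1},\dots,s_{2k}\}$, $Z = \{s_{2k+1},\dots,s_{3k}\}$, I take $m = 6k$ and use candidates $p$, $s_1,\dots,s_{3k}$, and $3k - 1$ dummies. For each $(x,y,z) = (s_h, s_i, s_j) \in M$ I add an unregistered vote
\[
p > x > y > z > (\text{middle dummies at positions } 5,\dots,3k) > (\text{positions } 3k{+}1,\dots,6k),
\]
with $s_l$ placed at position $3k+l$ whenever $l \notin \{h,i,j\}$ and three replacement dummies filling positions $3k+h, 3k+i, 3k+j$, so that every uncovered $s_l$ sits at a \emph{fixed} position regardless of the triple. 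Using Lemma~\ref{lemma:coefficients realization} I pad $R$ so that, relatively, $\score{p} = 0$, $\score{s_l} = k\alpha_1 - \alpha_{r'(l)} - (k-1)\alpha_{3k+l}$ where $r'(l) \in \{2,3,4\}$ records $s_l$'s $X/Y/Z$-membership, and the lemma's absorber dummy is pushed more than $k\alpha_1$ below every other candidate. The CCAV bound is $k$.

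The heart of the proof is the identity: for any $C \subseteq U$ of size $|C| \le k$, writing $c_l$ for the number of votes in $C$ covering $s_l$,
\[
\score{s_l} - \score{p} \;=\; (k - |C|)(\alpha_1 - \alpha_{3k+l}) \;+\; (c_l - 1)(\alpha_{r'(l)} - \alpha_{3k+l}).
\]
Under the hypothesis, $\alpha_{r'(l)} \ge \alpha_4 > \alpha_{3k+1} \ge \alpha_{3k+l}$ and $\alpha_1 \ge \alpha_4 > \alpha_{3k+l}$, so both bracketed differences are strictly positive. The forward direction is then immediate: a 3DM cover $C$ of size $k$ gives $|C| = k$ and $c_l = 1$ for all $l$, both terms vanish, every $s_l$ ties with $p$ at $k\alpha_1$, and the dummies---occupying positions $\ge 5$---stay below $p$ (the absorber by Lemma~\ref{lemma:coefficients realization} and the others because $\alpha_1 > \alpha_{3k+1}$). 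Conversely, a winning $C$ cannot afford $c_l \ge 2$ (the second term alone would push $s_l$ strictly above $p$) nor a once-covered $s_l$ with $|C| < k$ (the first term would), so $|C| = k$ and every $c_l \in \{0,1\}$; counting $3|C| = 3k$ incidences against $3k$ elements forces $c_l = 1$ everywhere, which is precisely an exact cover.

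The chief obstacle---flagged explicitly in the paper's lead-in to this theorem---is ruling out $|C| < k$. Unlike in the 3-veto simulation, an \emph{uncovered} $s_l$ ends up \emph{below} $p$ here rather than above, so one cannot close the argument by appealing to uncoveredness alone. My resolution is to argue via a \emph{covered} $s_l$ instead: as soon as $|C| \ge 1$ some $s_l$ is covered, and for such an $s_l$ the first term of the displayed identity is $(k-|C|)(\alpha_1 - \alpha_{3k+l}) > 0$ whenever $|C| < k$, so $s_l$ beats $p$. The remaining boundary $|C| = 0$ is handled directly by the starting scores: $k\alpha_1 - \alpha_{r'(l)} - (k-1)\alpha_{3k+l} > 0$ whenever $k \ge 2$ and $\alpha_1 > \alpha_{3k+l}$, and the hypothesis $\alpha_4 > \alpha_{m-3k+1}$ itself forces $k \ge 2$ (at $k=1$ the inequality degenerates to $\alpha_4 > \alpha_4$).
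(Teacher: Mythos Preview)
Your proof is correct and follows essentially the same construction as the paper's: the same $m=6k$ candidate set, the same relative scores via Lemma~\ref{lemma:coefficients realization}, and the same vote structure placing $p,x,y,z$ in the top four positions with each uncovered $s_l$ at the fixed position $3k+l$ (which equals the paper's $m-3k+l$). Your single displayed identity packages the converse more uniformly than the paper's two-step version (first forcing $|C|=k$ via a once-covered $z\in Z$, then ruling out double coverage), and your explicit remark that the hypothesis itself forces $k\ge 2$ is a detail the paper uses but does not spell out; the only loose end is that you should also push the non-absorber dummies' initial scores below $-k\alpha_1$ via the lemma (as the paper does), since the bare inequality $\alpha_1>\alpha_{3k+1}$ does not by itself control the middle dummies sitting at positions $5,\dots,3k$.
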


As mentioned above, we now put the two reductions above together to obtain the NP-hardness result of this section, i.e., to prove that $f$-CCAV is NP-complete as soon as there is a number $m$ where the coefficients of $f$ satisfy $\alpha^{f,m}_4>\alpha^{f,m}_{n-2}$. If this condition is true, then we know that one of the inequalities $\alpha^{f,m}_4\ge\alpha^{f,m}_5\ge\dots\ge\alpha^{f,m}_{m-3}\ge\alpha^{f,m}_{m-2}$ is in fact strict. Depending on the position of this strict inequality, we choose which reduction to apply: If the strict inequality appears ``close'' to the first candidate, then the first ``few'' positions are strictly better than ``most,'' and the system can ``simulate'' $k$-approval for some $k\ge 4$. On the other hand, if the strict inequality appears ``close'' to the last candidate, then the last ``few'' positions are worse than 
``most,'' and we can similarly ``simulate'' $k$-veto for some $k\ge3$.
\fi

\begin{restatable}{theorem}{alphafourbiggeralphanminustwonpcomplete}
\label{theorem:alpha4 > alpha(n-2) np completeness}
 $f$-CCAV is NP-complete for every \fp-uniform pure $\rationals$-generator $f$ with $\alpha^{f,m}_4>\alpha^{f,m}_{m-2}$ for some $m$.
\end{restatable}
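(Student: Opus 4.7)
The plan is to combine the two ``partial reduction'' theorems (Theorem~\ref{theorem:3-veto generalization} and Theorem~\ref{theorem:4-approval generalization informal}) into a single polynomial-time many-one reduction from 3DM, using the purity of $f$ to upgrade the single-$m$ hypothesis into a condition that holds uniformly at every sufficiently large length. Membership in NP is immediate (guess $A\subseteq U$ with $|A|\leq k$ and verify that $p$ wins in $R\cup A$), so I will focus on NP-hardness.

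The key auxiliary claim I would prove is a propagation lemma: if $\alpha^{f,m_0}_4>\alpha^{f,m_0}_{m_0-2}$, then $\alpha^{f,m}_4>\alpha^{f,m}_{m-2}$ for every $m\geq m_0$. The proof is a short induction on $m$. By purity, $\alpha^{f,m+1}$ is obtained from $\alpha^{f,m}$ by inserting one new coefficient at some position $i\in\{1,\ldots,m+1\}$, and I would do a three-case analysis on $i$. For $i\leq 4$, the inserted coefficient is at least $\alpha^{f,m}_4$, so $\alpha^{f,m+1}_4\geq\alpha^{f,m}_4$, while $\alpha^{f,m+1}_{m-1}=\alpha^{f,m}_{m-2}$. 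For $i\geq m-1$, the roles are dual: $\alpha^{f,m+1}_4=\alpha^{f,m}_4$ and $\alpha^{f,m+1}_{m-1}\leq\alpha^{f,m}_{m-2}$. For $5\leq i\leq m-2$ both endpoints of the inequality are unchanged. In all three cases the strict gap is preserved.

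Given a 3DM instance with $k=|X|=|Y|=|Z|$, I would first pad it in polynomial time, without changing its answer, by repeatedly adjoining a triple of fresh elements $(x,y,z)$ to $X$, $Y$, $Z$ together with its matching triple in $M$, until the padded instance has $k'$ satisfying $6k'\geq m_0$. Setting $m:=6k'$ and applying the propagation lemma yields $\alpha^{f,m}_4>\alpha^{f,m}_{m-2}$. Since the non-increasing chain $\alpha^{f,m}_4\geq\alpha^{f,m}_{3k'+1}\geq\alpha^{f,m}_{m-2}$ is strict overall, at least one of its two halves must be strict. Noting that $m-3k'+1=3k'+1$, the first alternative gives $\alpha^{f,m}_4>\alpha^{f,m}_{m-3k'+1}$, so I invoke Theorem~\ref{theorem:4-approval generalization informal}; the second alternative gives $\alpha^{f,m}_{3k'+1}>\alpha^{f,m}_{m-2}$, so I invoke Theorem~\ref{theorem:3-veto generalization}. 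Whichever partial reduction is chosen, its hypothesis holds at the specific $m=6k'$ produced, so it correctly translates the (padded) 3DM instance into an equivalent $f$-CCAV instance.

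The full reduction therefore proceeds as follows on input $I$: pad $I$ to obtain $k'$ with $6k'\geq m_0$; use the $\fp$-uniformity of $f$ to compute the relevant coefficients of $\alpha^{f,6k'}$ and decide which of the two inequalities holds; then output the image of the padded instance under the corresponding partial reduction. The main obstacle I anticipate is the purity-propagation step; it is conceptually simple but requires careful case checking to make sure that \emph{every} way of inserting a new coefficient into $\alpha^{f,m}$ preserves the strict gap between positions $4$ and $m-2$. A secondary nuisance is the bookkeeping needed to package the two partial reductions into one uniform polynomial-time reduction, but this is straightforward once the coefficient comparison at $m=6k'$ can be carried out in polynomial time from $\fp$-uniformity of $f$.
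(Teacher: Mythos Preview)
Your proposal is correct and follows essentially the same route as the paper: prove that purity propagates the inequality $\alpha^{f,m}_4>\alpha^{f,m}_{m-2}$ to all larger lengths, pad the 3DM instance so that $m=6k$ is large enough, then use the identity $m-3k+1=3k+1$ to split into the two cases handled by Theorems~\ref{theorem:3-veto generalization} and~\ref{theorem:4-approval generalization informal}. The only difference is cosmetic: the paper dispatches the propagation step in one line (observing $\alpha^{f,m+1}_4\ge\alpha^{f,m}_4$ and $\alpha^{f,m+1}_{m-1}\le\alpha^{f,m}_{m-2}$) and handles the size issue with a ``without loss of generality $k$ is large enough,'' whereas you spell out the insertion-position case analysis and the explicit padding.
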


\subsubsection{``Few'' Different Coefficients}

We now study pure generators $f$ \emph{not} covered by
Theorem~\ref{theorem:alpha4 > alpha(n-2) np completeness}, i.e., where
$\alpha^{f,m}_4\leq\alpha^{f,m}_{m-2}$ for all $m$. Then for $m\ge6$,
$\alpha^{f,m}$ is of the form
$(\alpha^{f,m}_1,\allowbreak 
\alpha^{f,m}_2,\allowbreak \alpha^{f,m}_3,\allowbreak 
\alpha^{f,m}_4,\allowbreak \dots,\allowbreak \alpha^{f,m}_4,\allowbreak 
\alpha^{f,m}_5,\allowbreak \alpha^{f,m}_6)$. The
reductions above cannot work in this case, since there are no $3$
positions ``worse than most'' and no $4$ positions ``better than
most.''
\hsnote{LATE EDIT: removed ``so we cannot ``simulate'' $3$-veto or $4$-approval.``}%

Due to Theorem~\ref{t:fpsr}, we can regard $f$ equivalently as
flexible $\naturals$-generators or as pure
$\rationals$-generators. For the latter representation, purity
requires 
that all coefficients from $\alpha^{f,m}$ also appear
in $\alpha^{f,m+1}$. So the above numbers
$\alpha_1,\dots,\alpha_6$ do not depend on $m$. We can use a fixed
affine transformation for these finitely many coefficients and, using
Proposition~\ref{p:differ}, rewrite all coefficients as natural
numbers.
\hsnote{I'm not sure this paragraph is all that interesting anymore.
We reference this discussion in the proof of the dichotomy theorem,
so if we want to remove this here, that proof needs to be adjusted.
I'm not sure we want to make such a change at this point, but if we get
really desperate for space, we might reconsider.}%
\lhnote{LATER EDIT: I haven't edited or changed this at all.  It does fit 
into the 6 pages currently, and once we are beyond the AAAI submission,
we won't have space issues (and will have lots of time), e.g., we can
by extra AAAI pages and journals have more space still, so we at that
point can handle this whatever way you choose to without worrying about
space and just weighing what is overall best in terms of what 
discussions to present.}%

Our next hardness result concerns a generalization of $3$-approval.
Recall from Theorem~\ref{theorem:3 approval and 2 veto} that CCAV for
$3$-approval itself, i.e., the generator
$(\alpha,\alpha,\alpha,0,\dots,0)$, is solvable in polynomial time. In
Theorem~\ref{theorem:alpha,beta,zero in ptime}, we proved a
generalization of $2$-approval to still give a polynomial-time
solvable CCAV-problem. We now show that the analogous generalization
of $3$-approval leads to NP-completeness.

\begin{theorem}\label{theorem:alpha>gamma>0 np complete}
Let $\alpha \geq \beta \geq \gamma > 0$ and $\alpha \neq \gamma$.
Let $f$ be the generator giving $(\alpha,\beta,\gamma,0,\dots,0)$.
\hsnote{added the ``for some $\beta$'',
though it is not strictly necessary, it might confuse the reader
if $\beta$ comes out of nowhere.
One could also add ``with $\gamma\leq\beta\leq\alpha$,'' but
I think it's fine as it is, and (without this footnote) the theorem
only needs $2$ lines this way. (The result in fact would hold for
any $\beta\neq 0$, but of course the construction in the proof would
need to ``sort'' the vector or the votes.)}%
\lhnote{LATER EDIT: 
One should avoid putting 
quantifiers after what they modify---it 
is asking for trouble.
That is especially so when you are putting an existential quantifier 
after its own scope, when there is an implicit universal quantifier;
that is PRECISELY the case here.  The problem is, there are two 
conflicting readings as to quantifier order, 
and since there is no purity required here,
one of them makes f-CCAV undecidable for some cases. I've rewritten 
using one of the other options you mentioned.}%
Then $f$-CCAV is NP-complete.
\end{theorem}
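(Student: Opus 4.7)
The plan is to prove NP-hardness by reduction from 3-Dimensional Matching (3DM); membership in NP is immediate by guessing a subset of $U$ of size at most $k$ and verifying. Given a 3DM instance $(X,Y,Z,M)$ with $|X|=|Y|=|Z|=k$, I will construct an $f$-CCAV instance whose candidates are $\{p\}\cup X\cup Y\cup Z$ plus padding dummies. For each triple $m=(x,y,z)\in M$, I add to $U$ a voter whose ballot begins $p > x > y > z$ and continues with the dummies in a fixed order. Under the scoring vector $(\alpha,\beta,\gamma,0,\ldots,0)$, such a ballot contributes $\alpha$ to $p$, $\beta$ to $x$, $\gamma$ to $y$, and $0$ to every other candidate (in particular to $z$, which sits at position $4$). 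Then, via Lemma~\ref{lemma:coefficients realization}, I choose $R$ so that, up to a common offset, $\score{p}=0$, $\score{x_i}=k\alpha-\beta$ for $x_i\in X$, $\score{y_i}=k\alpha-\gamma$ for $y_i\in Y$, $\score{z_i}=k\alpha$ for $z_i\in Z$, and every dummy starts far below. Set the add-budget to $k$.

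Forward correctness is routine: for a 3DM cover $C\subseteq M$, adding the $k$ associated ballots brings $p$ and every element of $X\cup Y\cup Z$ to exactly $k\alpha$, so $p$ co-wins. For reverse correctness, suppose $p$ wins after adding some $C'\subseteq U$ with $|C'|\le k$. No ballot moves $\score{z_i}$, so $\score{z_i}=k\alpha$ combined with $\score{p}=|C'|\alpha$ forces $|C'|=k$; the $x_i$- and $y_i$-constraints, which become $\beta(a^X_i-1)\le 0$ and $\gamma(a^Y_i-1)\le 0$ once $|C'|=k$, then force each $x_i$ and each $y_i$ to appear in exactly one chosen triple.

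The main obstacle---and the part I expect to require the most care---is closing the $Z$-coverage gap: with $z$ at position $4$, triples sharing a $z$-component are indistinguishable to the scoring, and the above alone only enforces bipartite matching on $X\times Y$, which is polynomial-time solvable. To fix this I plan to augment the construction with, for each $z\in Z$, a witness candidate $\bar z$, together with a second family of ballots in $U$ (one per triple) that places $\bar z$ at position $3$ so that adding such a ballot contributes $\gamma$ to $\bar z$. After doubling the budget to $2k$ and using Lemma~\ref{lemma:coefficients realization} to set $\score{\bar z}$ so that each witness can be activated at most once, the crux is to tie each triple's two ballots together so that they must be added as a pair; the asymmetry $\alpha\ne\gamma$ (which distinguishes the present case from the $3$-approval case of Theorem~\ref{theorem:3 approval and 2 veto}, where CCAV is in P) is precisely what provides the slack to make this coupling work uniformly in the worst-case ratio of $\beta$ to $\gamma$, and thereby forces the chosen triples to cover $Z$ as well.
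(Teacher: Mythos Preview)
Your first two paragraphs are fine, and you correctly identify the real obstacle: with $z$ sitting at position~$4$ the ballot $p>x>y>z>\cdots$ carries no information about~$Z$, so the construction as stated only encodes a bipartite matching on $X\times Y$, which is in~P. The problem is that your third paragraph does not close this gap. You promise a second family of ballots placing a witness $\bar z$ at position~$3$, but you never say who occupies positions~$1$ and~$2$ of those ballots, and you never specify any mechanism that forces the two ballots of a triple to be chosen together. ``Tying the two ballots together'' is precisely the hard part of the reduction, and invoking $\alpha\neq\gamma$ without an explicit gadget is not an argument. As written, nothing prevents the chair from using the first-family ballot of one triple together with the second-family ballot of a different triple sharing the same~$z$, which again reduces to a polynomial-time matching problem.

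The paper's construction is structurally different and is worth comparing. It puts $p$ at position~$2$ in \emph{every} unregistered ballot (so $p$ always gains~$\beta$), puts the element candidates $x,y,z$ at position~$1$ (each may gain $\alpha$ at most once), and introduces \emph{two} auxiliary candidates $S_i,S_i'$ per triple $S_i=(x,y,z)\in M$. The four ballots for~$S_i$ are $x>p>S_i$, $y>p>S_i$, $z>p>S_i'$, and $S_i>p>S_i'$, and the budget is $n+2k$ rather than~$k$. The coupling you were looking for is achieved by the score caps on $S_i$ and~$S_i'$: each $S_i'$ must receive exactly one~$\gamma$, and each $S_i$ may receive either~$\alpha$ (from the fourth ballot) or~$2\gamma$ (from the first two) but strictly not $\alpha+\gamma$; this is exactly where $\alpha\neq\gamma$ enters, via $\min(\alpha,2\gamma)-\gamma>0$. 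A counting argument then forces exactly $n-k$ of the ``$S_i>p>S_i'$'' ballots and, for the remaining $k$ triples, all three of the element ballots together, which yields the cover. If you want to rescue your approach you would need an analogous per-triple gadget; the sketch you give does not yet supply one.
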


\begin{proof}
  Let $M$ be the set from an instance of 3DM with $\card{M}=n$, and let
  $k=\card{X}=\card{Y}=\card{Z}$ (recall
$X$, $Y$, and $Z$ must be pairwise disjoint).
  We use the candidates $X\cup Y\cup
  Z\cup\set{p}\cup\set{S_i,S_i'\ \vert\ S_i\in M}$ and a dummy
  candidate $d$ to be able to apply Lemma~\ref{lemma:coefficients
    realization}. We use the lemma to set up the registered votes such
  that the resulting relative scores are as follows:
  $\score{p}=\alpha+2\gamma$, $\score{c}=(n+2k)\beta+2\gamma$ for all
  $c\in X\cup Y\cup Z$, $\score{S_i}=(n+2k)\beta+\min(\alpha,2\gamma)$,
  and $\score{S_i'}=(n+2k)\beta+\alpha+\gamma$ for each $S_i\in
  M$. Further, $\score{d}<-(n+2k)
\alpha_1$.
For each $S_i=(x,y,z)$, we introduce four unregistered voters voting as follows:
\ifshortenedmaintext
$x>p>\allowbreak 
S_i>\allowbreak \dots$; $y>p>\allowbreak S_i>\allowbreak \dots$;
$z>p>\allowbreak S_i'>\allowbreak \dots$; $S_i>p>\allowbreak S_i'>\allowbreak \dots~$.
\else
\begin{description}
\item[]
 $x>p>S_i>\dots$~.
\item[]
 $y>p>S_i>\dots$~.
\item[]
 $z>p>S_i'>\dots$~.
\item[]
 $S_i>p>S_i'>\dots~$.
\end{description}
\fi
\ifshortenedmaintext
One can verify that $p$ can win the election by adding at most $n+2k$ voters iff there is a cover $C$ with $\card{C}\leq k$.
\hsnote{Changed the $=k$ to $\leq k$, even though the original wording is also correct.}%
\else

We show that $p$ can be made a winner of the election by adding at most $n+2k$ voters if and only if the 3DM-instance is positive, i.e., there is a set $C\subseteq M$ with $\card{C}=k$ and for $S_i\neq S_j\in C$, $S_i$ and $S_j$ differ in all three components.

First assume that there is such a cover. In this case, $p$ can be made a winner of the election by adding the following voters: For each $S_i=(x,y,z)\in C$, we add the votes $x>p>S_i$, $y>p>S_i$, and $z>p>S_i'$. For each $S_i=(x,y,z)\notin I$, we add the vote $S_i>p>S_i'$. Note that this adds exactly $3
 k+(n-k)=n+2k$ votes. Adding these votes results in the following scores:

\begin{itemize}
 \item $p$ gains $\beta$ points in each added vote, so $p$ gains $(n+2k)\beta$ points and $p$'s final score is $\alpha+2\gamma+(n+2k)\beta$,
 \item each candidate in $X\cup Y\cup Z$ gains $\alpha$ points, leading to a final score of $\alpha+\underbrace{(n+2k)\beta+2\gamma}_{\mathtext{previous score}}$ as well,
 \item for each $S_i\in C$, we have that 
 $\score{S_i}=\underbrace{(n+2k)\beta+\min(\alpha,2\gamma)}_{\mathtext{previous score}}+2\gamma
  \leq
  (n+2k)\beta+\alpha+2\gamma$, which again is the score of $p$.
  \item for each $S_i\notin C$, we have $\score{S_i}=(n+2k)\beta+\min(\alpha,2\gamma)+\alpha\leq 
    (n+2k)\beta+2\gamma+\alpha$, equal to the score of $p$.
  \item for each $S_i'$ (independent of whether $S_i'\in C$ or $S_i'\notin C$), we have 
 $\score{S_i'}=\underbrace{(n+2k)\beta+\alpha+\gamma}_{\mathtext{previous score}}+\gamma=(n+2k)\beta+\alpha+2\gamma$, again this is the score of $p$.

\end{itemize}

Thus all candidates tie and so in particular, $p$ is a winner of the election.

For the converse, assume that $p$ can be made a winner by adding at most $n+2k$ voters. Since each $S_i'$ initially beats $p$, at least one vote is added. Thus there is a candidate $c\in X\cup Y\cup Z$ with $\score{c}\ge (n+2k)\beta+2\gamma+\alpha$, or some $S_i'$ with $\score{S_i'}\ge(n+2k)\beta+\alpha+2\gamma$. In both cases, we need to add at least $n+2k$ voters to ensure that $p$ has at least $\alpha+2\gamma+(n+2k)\beta$ points as well.

Since $n+2k$ votes are added, and each of these votes gives points to $p$ and $2$ other candidates, there are $2n+4k$ positions awarding points in the added votes that are filled with (not necessarily different) candidates other than $p$. Each of the $3k$ candidates from $X\cup Y\cup Z$ can only gain $\alpha$ points without beating $p$ in the election, so each of these can fill at most one of these $2n+4k$ positions. So at least $2n+k$ positions must be filled by (again, not necessarily different) candidates from $\set{S_i,S_i'\ \vert\ 1\leq i\leq n}$. Each $S_i'$ can appear at most once in the third position without beating $p$. Since there are $n$ candidates of the form $S_i'$, it follows that there must be $n+k$ occurrences of candidates $S_i$ in the first three positions of the added votes. Since no $S_i$ can gain $\alpha+\gamma$ points without beating $p$,\footnote{To see this, we compute the difference between the score of $S_i$ after gaining $\alpha+\gamma$ points and that of $p$ after gaining $(n+2k)\beta$ points. This value is $(n+2k)\beta+\min(\alpha,2\gamma)+\alpha+\gamma-\alpha-\gamma-(n+2k)\beta=\min(\alpha,2\gamma)-\gamma$. Since $\alpha>\gamma$ and $\gamma>0$, this value is strictly positive, so $S_i$ indeed beats $p$ if $S_i$ gains $\alpha+\gamma$ points.} each $S_i$ can either appear in a vote $S_i>p>S_i'$, or in up to two votes of the form $c>p>S_i$ with $c\in X\cup Y$. ($S_i=(x,y,z)$ cannot appear in three of these, since then one of $x$ and $y$ would gain too many points.) So the only way to fill $n+k$ positions with candidates of the form $S_i$ is having $2k$ occurrences of $S_i$ in the third place, and $n-k$ occurrences of $S_i$ in the first place. In order to fill all positions, each $S_i'$ has to appear once in the final position, and due to the above, $n-k$ of these occurrences are in a vote of the form $S_i>p>S_i'$. Thus there are $k$ votes of the form $z>p>S_i'$. It follows that there are $3k$ votes added that vote a candidate from $X\cup Y\cup Z$ in the first position, and $n-k$ voters are added that vote some $S_i$ first. Since no $S_i$ may appear both in first and in last position, and each $S_i'$ may appear only once, and each $x_i$, $y_i$, and $z_i$ may gain only $\alpha$ points, it follows that the added votes correspond to a cover.
\fi
\end{proof}

We also have proved the following cases NP-complete.

\begin{restatable}{theorem}{summaryfinatelymanycoefficients}
\label{theorem:summary finitely many coefficients}
 The problem $f$-CCAV is NP-complete
\lhnote{I 
changed NP-hard to NP-complete to match the sentence that came before it.
The list of 4 cases is formally I think a bit of a worry, since the theorem
doesn't itself state purity or keep the value of each coefficient from changing 
over the different $m$s, yet that is what is meant as without those 
it all can be undecidable.  So for example part 1 if we had space, 
which we don't, should be expressed as:
``For some alpha1, alpha2, alpha3, alpha4, and alpha5 
satisfying $\alpha_2 > \alpha_4 > 0$, 
$f=(\alpha_1,\alpha_2,\alpha_3,\alpha_4,\dots,\alpha_4,\alpha_5,0)$.
This makes it explicit that one can't play games as to changing the 
alphas as m changes.  But we don't have space for that here.}%
if $f$ is one of the following pure generators:
\hsnote{I added ``pure'' to address the issue at least briefly}%
 \begin{enumerate}
  \item\label{theorem part:alpha2>alpha4 np complete} $f=(\alpha_1,\alpha_2,\alpha_3,\alpha_4,\dots,\alpha_4,\alpha_5,0)$ with $\alpha_2>\alpha_4>0$.
  \item\label{theorem part:alpha1 neq alpha2 alpha1 neq 2 alpha2} $f=(\alpha_1,\alpha_2,\dots,\alpha_2,0)$ with $\alpha_1\notin\set{\alpha_2,2\alpha_2}$, $\alpha_2>0$.
  \item\label{theorem part:alpha1>alpha2>alpha5} $f=(\alpha_1,\alpha_2,\dots,\alpha_2,\alpha_5,0)$ with $\alpha_1>\alpha_2>\alpha_5$.
  \item\label{theorem part:alpha1=alpha2>alpha5>0} $f=(\alpha_1,\dots,\alpha_1,\alpha_5,0)$ with $\alpha_1>\alpha_5>0$.
 \end{enumerate}
\end{restatable}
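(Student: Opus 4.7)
The plan is to prove NP-completeness for each of the four cases by giving a separate polynomial-time reduction from 3DM to $f$-CCAV. In every case, the common framework is: for a 3DM instance with $X$, $Y$, $Z$, $M$ and $|X|=|Y|=|Z|=k$, include $X\cup Y\cup Z\cup\{p\}$ among the candidates, together with a dummy $d$ (to absorb score offsets) and, where needed, auxiliary per-tuple candidates analogous to the $S_i,S_i'$ used in the proof of Theorem~\ref{theorem:alpha>gamma>0 np complete}. Use Lemma~\ref{lemma:coefficients realization} to set up a polynomial-size multiset $R$ of registered voters that produces a designated vector of relative scores, with $d$'s score far enough below $p$'s that $d$ can never be made a winner within the allotted $k'$ additions. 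Then, for each tuple in $M$, introduce one (or a bounded number of) unregistered voters that position the three tuple elements at the handful of "unusual" positions of the scoring vector, with $p$ placed at a high-scoring position. The initial scores are engineered so that $p$ becomes a winner after adding at most $k'$ voters exactly when the chosen votes correspond to a cover.

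For Case~\ref{theorem part:alpha2>alpha4 np complete} ($\alpha_2>\alpha_4$ with three strict high positions and two strict low positions), positions 2 and 3 are strictly better than most, so we can ``reward'' three tuple elements per added vote in a style analogous to the $4$-approval half of Theorem~\ref{theorem:alpha4 > alpha(n-2) np completeness}, while positions $m-1$ and $m$ play the role of the $3$-veto half. Cases~\ref{theorem part:alpha1>alpha2>alpha5} and~\ref{theorem part:alpha1=alpha2>alpha5>0} have exactly one strictly-high position (position 1 in Case~\ref{theorem part:alpha1>alpha2>alpha5}) and one strictly-low intermediate position ($\alpha_5$ above the zero at the last position): the reduction places $p$ first when possible, and encodes 3DM via the choice of which candidate occupies the strictly-low position $m-1$ and which occupies the zero position $m$. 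In Case~\ref{theorem part:alpha1=alpha2>alpha5>0} all early positions are equal, so the role of ``reward'' disappears and we must rely on the two bottom positions together with per-tuple auxiliary candidates (again in the style of Theorem~\ref{theorem:alpha>gamma>0 np complete}) to force each tuple in the cover to expend its own distinct vote.

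Case~\ref{theorem part:alpha1 neq alpha2 alpha1 neq 2 alpha2} is the most delicate, since it lies on the border with the polynomial-time result of Theorem~\ref{theorem:2 1star 0 in ptime}. Each added vote here awards $\alpha_1$ to one candidate, 0 to another, and $\alpha_2$ to everyone else; after the transformation $(\alpha_1,\alpha_2,\dots,\alpha_2,0)\equiv(\alpha_1-\alpha_2,0,\dots,0,-\alpha_2)$ from Proposition~\ref{p:differ}, the ratio $(\alpha_1-\alpha_2):\alpha_2$ determines whether ``first'' or ``last'' dominates; the excluded case $\alpha_1=2\alpha_2$ is precisely the symmetric ratio $1:1$ for which the min-cost-flow trick in the proof of Theorem~\ref{theorem:2 1star 0 in ptime} applies. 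I would split into $\alpha_1>2\alpha_2$ and $\alpha_2<\alpha_1<2\alpha_2$: in the former, the ``first'' slot is so valuable that any budget-tight solution must use it to place $p$, and the 3DM reduction encodes tuples via which candidate sits last (one ``veto'' per vote, so we need three votes per tuple and per-tuple dummies to glue them together); in the latter, the ``veto'' slot is relatively more expensive, so the roles of first and last are partially swapped.

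The main obstacle will be Case~\ref{theorem part:alpha1 neq alpha2 alpha1 neq 2 alpha2}, and in particular the subcase $\alpha_2<\alpha_1<2\alpha_2$, where the arithmetic separation from the polynomial case is small and must be genuinely used in the hardness argument. The core technical step is showing that any set of $\leq k'$ added voters making $p$ a winner must distribute first-place and last-place occurrences among the candidates in a configuration that forces a triple cover of $X,Y,Z$; the inequality $\alpha_1\neq 2\alpha_2$ is precisely what prevents the ``balancing'' of first-place and last-place assignments that the flow algorithm of Theorem~\ref{theorem:2 1star 0 in ptime} exploits. Once the bookkeeping of how many first-, middle-, and last-place occurrences each candidate receives is pinned down, Lemma~\ref{lemma:coefficients realization} supplies the initial scores and correctness follows by a routine counting argument as in the proofs of Theorems~\ref{theorem:3-veto generalization} and~\ref{theorem:alpha>gamma>0 np complete}.
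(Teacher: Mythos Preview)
Your plan is essentially the paper's approach: each case is a separate reduction from 3DM, with Lemma~\ref{lemma:coefficients realization} setting initial scores, a dummy to absorb offsets, and per-tuple auxiliary candidates $S_i$ (and $S_i'$) wherever only one or two distinguished positions are available per vote. Your split of Case~\ref{theorem part:alpha1 neq alpha2 alpha1 neq 2 alpha2} into $\alpha_1>2\alpha_2$ versus $\alpha_2<\alpha_1<2\alpha_2$, and your diagnosis that Case~\ref{theorem part:alpha1=alpha2>alpha5>0} needs $S_i,S_i'$-style auxiliaries, match the paper exactly.

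Two tactical points where the paper's choices are simpler than what you sketch. In Case~\ref{theorem part:alpha1>alpha2>alpha5} the paper does \emph{not} put $p$ first: the added vote for tuple $(x,y,z)$ is $x>\dots>y>z$, so $x$ receives $\alpha_1$, $y$ receives $\alpha_5$, $z$ receives $0$, and $p$ sits in the $\alpha_2$ block. All three tuple elements are encoded in a single vote with no auxiliaries and budget exactly $k$; putting $p$ first, as you suggest, would leave only two distinguished slots per vote and force you into the heavier Case~\ref{theorem part:alpha1=alpha2>alpha5>0} machinery. Likewise, in Case~\ref{theorem part:alpha2>alpha4 np complete} the paper's vote is $x>y>d_1>p>\dots>d_2>z$: positions $1,2$ carry $x,y$ and position $m$ carries $z$, with $p$ in the $\alpha_4$ block---two ``approvals'' and one ``veto'' per vote rather than three rewards---and the inequality $\alpha_2>\alpha_4$ is exactly what rules out any $x$ or $y$ being approved twice. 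Your outlined variants would still go through, but these placements make the converse-direction counting considerably lighter.
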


\subsection{Proof of Dichotomy Theorem}\label{sect:dichotomy theorem}

We now use the individual results from Sections~\ref{sect:polynomial time} and~\ref{sect:np complete} to prove our main dichotomy result, Theorem~\ref{theorem:dichotomy}:

\begin{proof}
  The polynomial cases follow from Theorems~\ref{theorem:3 approval
    and 2 veto}, \ref{theorem:alpha,beta,zero in ptime},
  and~\ref{theorem:2 1star 0 in ptime}, we prove hardness. If
  $\alpha^{f,m}_4>\alpha^{f,m}_{m-2}$ for some $m$, hardness follows
  from Theorem~\ref{theorem:alpha4 > alpha(n-2) np
    completeness}. So assume
  $\alpha^{f,m}_4=\dots=\alpha^{f,m}_{m-2}$ for all $m\ge 6$. As
  argued in the discussion after Theorem~\ref{theorem:alpha4 > alpha(n-2) np completeness},
\hsnote{replaced the ``earlier'' with a concrete reference to the place 
  where the discussion happens}%
  we assume
  $\alpha^{m,f}=(\alpha_1,\alpha_2,\alpha_3,\alpha_4,\dots,\alpha_4,\alpha_5,\alpha_6)$
  for each $m\ge6$. Due to Proposition~\ref{p:differ}, we can assume
  $\alpha_6=0$. We reduce the number of relevant coefficients from $5$
  to $3$:
 
 \begin{compactitem}
  \item If $\alpha_4=0$, then, since $f$ does not generate $3$-approval and is not equivalent to $f_4$, $\alpha_1>\alpha_3>0$. Hardness follows from Theorem~\ref{theorem:alpha>gamma>0 np complete}.
\hsnote{Changed the sentence to remove a reference to the election system $\cale$, which
  we do not talk about anymore in the theorem statement, which is now phrased
  about generators alone.}%
  \item  If $\alpha_2>\alpha_4>0$, hardness follows from Theorem~\ref{theorem:summary finitely many coefficients}.\ref{theorem part:alpha2>alpha4 np complete}. 
 \end{compactitem}
 
\smallskip

\noindent
So assume $\alpha_2=\alpha_3=\alpha_4>0$, i.e., $f$ is of the form $(\alpha_1,\alpha_2,\dots,\alpha_2,\alpha_5,0)$. We make a further case distinction:
 
 \begin{compactitem}
 \item If $\alpha_2=\alpha_5$, then since $f$ does not generate
\hsnote{again changed the wording to remove the reference to $\cale$}%
   $1$-veto, we
   know that $\alpha_1\neq\alpha_5=\alpha_2$. Since $f$ is not
   equivalent to
\hsnote{again removed reference to $\cale$}%
   $(2,1,\dots,1,0)$, we know that $\alpha_1\neq
   2\alpha_2$. Thus NP-hardness follows from
   Theorem~\ref{theorem:summary finitely many
     coefficients}.\ref{theorem part:alpha1 neq alpha2 alpha1 neq 2
     alpha2}.
  \item If $\alpha_2>\alpha_5$, then depending on whether $\alpha_1>\alpha_2>\alpha_5$ or $\alpha_1=\alpha_2>\alpha_5$, hardness follows from Theorem~\ref{theorem:summary finitely many coefficients}.\ref{theorem part:alpha1>alpha2>alpha5} or Theorem~\ref{theorem:summary finitely many coefficients}.\ref{theorem part:alpha1=alpha2>alpha5>0} (note that in the latter case, we know that $\alpha_5\neq 0$, since $f$ does not generate $2$-veto).
  \qedhere
 \end{compactitem}
\end{proof}

\paragraph*{Acknowledgments} 
We thank the AAAI 2014 reviewers for 
helpful comments and suggestions.

\appendix

\section*{Appendix}

\medskip

The 
appendix
is structured as follows:

\begin{itemize}
 \ifshowrichnessappendix
   \item In Section~\ref{appendix:richness}, we provide additional discussion and full proofs for the results on descriptional richness and pure scoring rules contained in Sections~\ref{sect:prelim} 
and~\ref{s:descriptive} of this paper.
 \fi
 \ifshowcomplexityappendix
   \item Section~\ref{appendix:dichotomy} contains the proofs that were omitted from Section~\ref{sect:dichotomy}, i.e., the dichotomy result, and also
     provides additional discussion of ``partial reductions.''
 \fi
\end{itemize}

\ifshowrichnessappendix

\section{Omitted Proofs and Discussion from Sections~\ref{sect:prelim} and~\ref{s:descriptive}}\label{appendix:richness}

\subsection{Proof of Proposition~\ref{p:differ}}

\pdiffer*

\begin{proof}
The ``if'' direction follows from 
Observation 2.2
of~Hemaspaandra and Hemaspaandra~\shortcite{hem-hem:j:dichotomy}, as noted in
Betzler and Dorn~\shortcite{bet-dor:j:possible-winner-dichotomy}.  

Let us prove the ``only if'' direction.  
This result seems so natural and important that it feels as if it should 
be a folk theorem, although we don't know of it as such; 
but in any case, since the result is crucial
to this paper, we include a construction that clearly and 
explicitly establishes this claim.
Let $D$ and $D'$ be arbitrary, fixed
length-$m$ scoring vectors over $N$
such that their normalizations,
$A$ and $A'$, differ.  We will construct an $m$-candidate 
election in which the winner sets under $A$ and $A'$ 
differ.  

Since by the ``if'' direction of the present proposition $A$ and $D$
always yield the same winner set, and also by the ``if'' direction of the
present proposition $A'$ and $D'$ always yield the same winner set,
we may conclude that our constructed election has different winner
sets under $D$ and $D'$.

Let the components of the normalized scoring vector $A$ be 
$(\alpha_1, \alpha_2, \ldots, 0)$, and 
let the components of the normalized scoring vector 
$A'$ be 
$(\alpha'_1, \alpha'_2, \ldots, 0)$.

If $m=1$, $A \neq A'$ is impossible.
If $m=2$, $A \neq A'$ exactly if one of them is $(0,0)$
and one is $(1,0)$, and these easily can be seen to give different
winners on some inputs.
Thus we from now on in this proof assume that $m \geq 3$.

A scoring vector is trivial if all its coefficients are the same, e.g.,
$(0,0,0)$.
If $A$ and $A'$ are both trivial, then $A=A'$, and if exactly one is 
trivial then building an input separating them is easy.  Thus we from
now on in this proof assume that both $A$ and $A'$ are nontrivial.

Recall that both $A$ and $A'$ are normalized.  We will in polynomial
time ``align'' them, i.e., we will scale them so that the first 
components of $A$ and $A'$ become identical.  In particular, 
to align them we will multiply $A$ by $\alpha'_1$ and we will
multiply $A'$ by $\alpha_1$.  So the first coefficient of each 
will now be $\alpha_1 \alpha'_1$. Let us rename the thus-scaled 
vectors (each equivalent to its original vector as to what winner 
sets it gives, due to 
Observation 2.2
of Hemaspaandra and Hemaspaandra~\shortcite{hem-hem:j:dichotomy}) as 
$B = (\beta_1, \beta_2, \ldots, \beta_{m-1}, 0)$ and 
$B' = (\beta'_1, \beta'_2, \ldots, \beta'_{m-1}, 0)$.

Let $\gamma$ be the least $i$ such that $\beta_i \neq \beta'_i$.
W.l.o.g.,\ assume $\beta_i > \beta'_i$.
$\gamma = 1$ is impossible since $\beta_1 = \beta'_1 = \alpha_1\alpha'_1$.
So $2 \leq i \leq m-1$.

We will now specify an election in which $B$ and $B'$ have different 
winner sets.

Our candidates will be $a$, $b$, and ``dummy'' candidates 
$d_1,\ldots,d_{m-2}$.  We will 
ensure that only $a$ and $b$ are serious contenders for winning.

Let 
$s_1$ be a shorthand for ``$d_1 > d_2 > \cdots > d_{m-2}$,''
let 
$s_2$ be a shorthand for ``$d_2 > d_3 > \cdots > d_{m-2} > d_{1}$,''
and so on, up to 
$s_{m-2}$ being a shorthand for ``$d_{m-2} > d_1 > \cdots > d_{m-3}$.''

Our vote set is as follows:
\begin{enumerate}
\item For each $i$, $1\leq i \leq m-2$, we will have 
$H$ votes  ``$a>b>s_i$''
and we will have $H$ votes
``$b>a>s_i$.''
$H$'s exact value will be specified later in the proof, but will be chosen to
be so large that $a$ and $b$ are the only serious contenders.
\item $\beta_\gamma$ votes of the form ``$a$ is the top choice and 
$b$ is the last choice 
(and the other choices will be irrelevant to 
our proofs, but for specificity let us say they are filled in 
in lexicographical order).''
\item $\beta_1$ votes of the form ``$b$ is the $\gamma$th choice and $a$
is the last choice
(and the other choices will be irrelevant to 
our proofs, but for specificity let us say they are filled in 
in lexicographical order).''
\end{enumerate}
That ends our specification of the votes.

Let us tally up the points that each candidate gets under $B$ and 
under $B'$.  Clearly, and keeping in mind that $\beta_1 = \beta'_1$, 
we have:
$\scoresub{B}{a} = (m-2)H(\beta_1+\beta_2)+\beta_1\beta_\gamma+0$,
$\scoresub{B'}{a} = (m-2)H(\beta'_1+\beta'_2)+\beta_\gamma\beta'_1+0$,
$\scoresub{B}{b} = (m-2)H(\beta_1+\beta_2)+\beta_1\beta_\gamma+0$,
$\scoresub{B'}{b} = (m-2)H(\beta'_1+\beta'_2)+\beta_1\beta'_\gamma+0$,
$\scoresub{B}{d_i} \leq (m-3)(2\beta_3) + \beta_\gamma\beta_2+(\beta_1)^2$,
and 
$\scoresub{B'}{d_i} \leq (m-3)(2\beta'_3) + \beta_\gamma\beta'_2+\beta_1\beta'_1$.
We now set the value of $H$, namely to be $H=2\beta_1$.
Keeping in mind that $\beta_1=\beta'_1 \geq 1$, it is easy to see that 
this choice of $H$
ensures that for each $i$, $1\leq i \leq m-2$,
$\min(\scoresub{B}{a},\scoresub{B}{b}) > \scoresub{B}{d_i}$ and
$\min(\scoresub{B'}{a},\scoresub{B'}{b}) > \scoresub{B'}{d_i}$.  So we
have ensured that, under $B$ and under $B'$, $a$ and $b$ have more
points than any $d_i$.

Under $B'$, note that $a$ is the one and only winner (recall
$\beta_1 = \beta'_1$ and $\beta_\gamma > \beta'_\gamma$).  But 
our votes ensure that 
under $B$, $a$ and $b$ tie as the (sole) winners.

So the votes we gave show that $B$ and $B'$ (and thus $D$ and $D'$)
have different winner sets on some example, namely, the above example.%
\end{proof}

\subsection{Proof of Theorem~\ref{t:hierarchy}}

\thierarchy*

\begin{proof}
  Rather than proving 
  Theorem~\ref{t:hierarchy},
  we will prove a slightly weaker result, ``Theorem~X,''
  and then will explain how to modify that proof to 
  establish 
  Theorem~\ref{t:hierarchy}.

\begin{quote}
{\bf Theorem X}\quad
There is an integer constant $k>0$ such that 
if $T_2(m)$ is a fully time-constructible function, and
$\limsup\limits_{n\rightarrow\infty} {{T_1(m)\log T_1(m)} \over T_2(m)} = 0$,
and 
$(\forall m)[T_2(m) \geq k(m+1)]$,
then there is an election rule in 
\textbf{\boldmath $\fdtime[T_2(m)]$-uniform-$\{0,1\}$-$\psr$}
that is not in 
\textbf{\boldmath $\fdtime[T_1(m)]$-uniform-$Z$-$\gsr$}.
\end{quote}

We will prove Theorem~X by describing how to appropriately adjust the
classic 
presentation by Hopcroft and Ullman~\shortcite{hop-ull:b:automata}, 
henceforward ``HU,'' 
of
the proof of the deterministic time hierarchy theorem (their
Theorem~12.9); that proof of HU can be found as pages
297--298 of that book, and that proof itself
draws also on the framework of that book's earlier
proof of the deterministic space hierarchy
theorem.  We will assume that the reader is familiar with those
deep, classic proof presentations; anyone not expert in complexity 
will want to either first master that proof framework or just skip
the present proof.

So, suppose we are given $T_1$ and $T_2$ satisfying the 
conditions of Theorem~X\@.  We'll later specify $k$, but $k$ 
must not and will not depend on $T_1$ or $T_2$.

Our goal is to show that there is an
$\fdtime[T_2(m)]$-uniform-$\{0,1\}$-$\psr$ generator, call it $f$, whose 
rule's winner set is not obtained by \emph{any}
$\fdtime[T_1(m)]$-uniform-$Z$-$\gsr$.

Generally, we just follow the overall architecture of the HU proof, but
the differences are as follows.  First, we will be computing and
diagonalizing against functions.  So our enumeration of machines will
be an enumeration of all function-computing Turing machines (i.e.,
machines with an output tape such that when they halt, whatever is 
on the output tape is viewed as being the output).  Like HU,
we will assume that our enumeration of such machines has the property
that if $w$ encodes a machine, then $1^k w$ encodes exactly the same
machine; ``padding'' 
by adding $1^*$ as a prefix to a machine's coding does not 
change the machine encoded.  (Unlike HU, who truly need this 
due to their making a liminf claim, we merely need each machine to 
appear infinitely often in the enumeration, since for other reasons 
we are simply making a limsup claim.  However, this padding property
certainly is a fine way of achieving 
what we 
need.)

Another difference is that we need to be a pure scoring rule.  So each
length rule must appropriately link to and extend the rule from the
previous length.  As mentioned in the text immediately after
Theorem~\ref{t:hierarchy}, doing so in the obvious fashion would seem
to add a multiplicative factor of $m$, but instead we use the trick
mentioned there to avoid this.  That is, for each odd natural number
$m$, $f(0^m)$ will be $1^{\lfloor m/2\rfloor}0^{\lfloor m/2\rfloor +
  1}$ (e.g., $(0)$, $(1,0,0)$, $(1,1,0,0,0)$, etc.).  This is purely
mechanical, and does not involve any diagonalizing.  (One might 
worry that for some very small values of $m$ we might not even have time 
to realize that $m$ was odd and write the appropriate output;
that worry is because one might worry that for some small $m$ 
we may not have time greater than $m+1$ and that is not enough time to 
both see what $m$ is, and that it is odd, and to already just 
before halfway through
it have known that we are just before halfway through it and to have 
switched what we are outputting from 1s to 0s.  However, the 
$(\forall m)[T_2(m) \geq k(m+1)]$ assumption of Theorem~X will
ensures us that we do have that time.)

Now, at each even length $m$, we'll try to do a diagonalization, if we 
have time.  Our framework is that we will always have as our vector 
at this length either 
$1^{\lfloor m/2\rfloor+1}0^{\lfloor m/2\rfloor + 1}$ or 
$1^{\lfloor m/2\rfloor}0^{\lfloor m/2\rfloor + 2}$
(e.g., at length 4, either 1100 or 1000; note that 
in this proof we will quietly go back and forth notationally
between, for example,
$(1,1,0,0)$ and $1100$ and $1^20^2$).
Basically, we want to ensure that if whatever 
$\fdtime[T_1(m)]$-uniform-$Z$-$\gsr$ generator we (``we'' are the 
function $f$) at stage $m$ are 
trying to diagonalize against outputs
a vector whose winner set is the same as that given by 
$1^{\lfloor m/2\rfloor+1}0^{\lfloor m/2\rfloor + 1}$
then we will output the vector 
$1^{\lfloor m/2\rfloor}0^{\lfloor m/2\rfloor + 2}$
and otherwise we will output the vector 
$1^{\lfloor m/2\rfloor+1}0^{\lfloor m/2\rfloor + 1}$.
To do this, we immediately write onto our output tape 
the vector
$1^{\lfloor m/2\rfloor+1}0^{\lfloor m/2\rfloor + 1}$.
We do so so that if we run out of time in our 
diagonalization on this input, we at least have 
one of the two legal vectors that keep our rule pure within 
our framework.  And we do have time to write this vector,
thanks to the 
$(\forall m)[T_2(m) \geq k(m+1)]$ assumption of Theorem~X.

Now, as to the diagonalization, it goes as follows.  If $m \geq 1$ is even,
then strip away the leading 1 of $m$ in binary and the trailing
0 of $m$ in binary, and call the string
that remains $w$.  (We strip the trailing 0 because we diagonalize only
at even $m$, but we want the entire set of such strings created to 
exactly equal $\{0,1\}^*$.)
We will view $w$ as the encoding of a Turing
machine from our enumeration of function-computing TMs.  We'll run
it (unless we run out of time: as per the entire HU framework,
we'll be using a separate tape and the fully time-constructible nature of 
$T_2$ to enforce a time cutoff; actually, we'll use two 
separate tapes, since our time cutoff is $\max(T_2(m),k(m+1))$) 
on input $0^m$
to get the length-$m$ vector it outputs,
call it $v$ (if it halts and has the wrong length output, then 
it clearly isn't 
even a valid GSR\@).
We must then efficiently evaluate whether that vector has the 
same winner set as would 
$1^{\lfloor m/2\rfloor+1}0^{\lfloor m/2\rfloor + 1}$.
One might think to do so we would have to normalize $v$, which 
includes gcd's and other time-eating computations, but that is not so.  
To tell if $v$ has the same winner set as 
$1^{\lfloor m/2\rfloor+1}0^{\lfloor m/2\rfloor + 1}$, we need only
test whether $v$ is of the form 
$a^{\lfloor m/2\rfloor+1}b^{\lfloor m/2\rfloor + 1}$, 
where $a$ and $b$ are members of $Z$ and $a>b$.  If so they have the 
same winner set, and if not they don't.  Ignoring for the 
moment the cost of getting $v$, this test is 
an easy linear-time 
test on a multihead, multitape TM\@.
If we find that $v$ does have the same winner set as 
$1^{\lfloor m/2\rfloor+1}0^{\lfloor m/2\rfloor + 1}$, 
then we on our output tape just overwrite the 
$\lfloor m/2\rfloor$th character, changing it from a 1 to a 0
(for example, if $m =4$, we'd overwrite 
the second bit to change our placeholder 1100 into 
1000).  So, if we have time for our 
simulation of $w$ to complete and to evaluate whether the $v$ 
has the same winner set as 
$1^{\lfloor m/2\rfloor+1}0^{\lfloor m/2\rfloor + 1}$, and if 
it is, have time to overwrite the one bit, then we have successfully
diagonalized against the machine $w$.  

On the other hand, what if we run out of time?  No problem.  If the
machine associated with $w$ happens to 
run in time $T_1(m)$ for all $m$
(and
we're doing all $w$, so some will have that time behavior and some
won't), then---since $w$ will appear infinitely often again in our
construction, as it will also appear as $1w$, as $11w$, and so 
on---%
due to the limsup assumption we eventually
\emph{will} have
the time to fully run the diagonalization (the simulation and our
on-the-cheap comparison of winner sets and our bit-fixing).  The reason
is that the log multiplicative overhead in the limsup is (see HU)
enough to do the machine simulation, and the limsup ensures that for
any constant $c$ our $T_2$ for every $m$ beyond some point will
satisfy $T_2(m) > c T_1(m)\log T_1(m)$.\footnote{Readers familiar with
  the time hierarchy theorem may wonder why we here use a limsup but
  the HU theorem uses a liminf.  The reason is subtle.  Briefly put,
  in HU, once a machine $w$ is being diagonalized against at some
  length, it will be (in its sibling forms $1w$, $11w$, etc.)\
  diagonalized against at \emph{every} greater length.  So a liminf
  suffices, as liminf ensures that at \emph{some} (indeed, 
  infinitely many, though they don't really need 
  that) 
  greater lengths we get
  enough time overhead to diagonalize, and that is all HU needs.  
  (Warning: The HU proof itself
  sets up the right machinery and construction for liminf to suffice,
  but then inside its proofs, seems to forget this and by such
  phrases as ``infinitely often'' on its p.~298 line~8 and ``has
  arbitrarily long encodings'' on page 298 and 299, say things that
  would require limsup to support.  However, the HU construction
  supports liminf, and one can change the two mentioned phrases to
  ``almost everywhere'' and ``at every length onward once it first
  occurs'' and the thus adjusted proof becomes correct.)  In contrast,
  PSRs take as input $0^m$, and so at each length are diagonalizing
  against just \emph{one} Turing machine.  And so we do know that once
  we face off against $w$ (say, on input $0^{(1w)_{binary}}$), we'll
  infinitely often see its identical siblings at longer lengths (such
  as on input $0^{(11w)_{binary}}$, but note that that is about twice
  as long, not one longer), but we don't know we'll have them at every
  length.  Basically, since our input is $0^m$, we get to attempt just 
  one diagonalization at each length.
  Happily, under a limsup assumption, we're safe and fine, since 
  at every sufficiently long length, we have the headroom to 
  diagonalize.}

Let us discuss the constant $k$.  $k$ is simply there because for
small values of $m$, we may not have enough time to see the input, and
write the appropriate starting string to our output tape.  (All we
know is we have at least $m+1$ steps.  That isn't enough, as out input
is of the form $0^m$ so until we hit the right-end marker, all we see
is $0$s, but the starting string we need to put on the output tape as
a placeholder is roughly half 1's and half 0's, so if we just get
$m+1$ steps, we don't know roughly halfway through the 0's that we are
roughly halfway through and that we have to switch from outputting 1's
to outputting 0's.  The fact that asymptotically we have at least
$T_1(m)\log T_1(m)$ time, and thus at least $m\log m$ time, doesn't
help us for small values of $m$, since if we run out of placeholder
time at even one value of $m$ we've violated purity of our alleged PSR\@.)
However, we can set a value of $k$ that allows us to do the placeholder
writing (and on any natural TM framework, it will be a very small constant
$k$; go to the end of the string $0^m$, see if it is even, 
and then output the right placeholder string, for example by having 
on an extra dummy tape basically build
a string of just under $m/2$ 1's that we 
can use to trigger our switch-over point as we write the placeholder
output), for even small values.  Note that $k$ depends just on the 
TM framework, not on $T_1$ or $T_2$.  

All that remains is how to move from our proof of Theorem~X to a proof
of Theorem~\ref{t:hierarchy}.  The previous paragraph's contortions
will actually already hint to the reader the path to doing this.
Namely, suppose we now must operate without the crutch of having
$(\forall m)[T_2(m) \geq k(m+1)]$.  Our salvation is the limsup claim,
which ensures that eventually, $T_2$ becomes very big.  So all we need
to do is find a way to mark time until that kicks in, and by marking
time, we have to make do with $m+1$ steps as that is all we know we
always have.  We can't such marking time within the framework of the
particular PSR we've been using above, because as noted above, one
would seem to have to be able to know the midpoint of $0^m$ in real
time as one was passing it, and that is impossible.  The workaround 
is as follows.  We change the PSR that we'll use.  In particular, for
each $T_1$ and $T_2$, there exists some odd value $m'$ beyond which the
limsup ensures that we do have easily enough time headroom.  So what 
we'll do is our PSR will simply be $0^m$ at each length $m \leq m'$ 
(note that we can do this in time $m+1$, at length $m$), and 
{}from then on, it will mimic our trick except with those 
extra 0's, namely, we will fixing 
our vector at each odd length, $m > m'$, to be
$1^{\lfloor (m-m')/2\rfloor}0^{\lfloor (m-m')/2\rfloor + 1 + m'}$,
and at even lengths $m > m'$, our vector will be either 
$1^{\lfloor (m-m')/2\rfloor + 1 }0^{\lfloor (m-m')/2\rfloor + 1 + m'}$
or 
$1^{\lfloor (m-m')/2\rfloor}0^{\lfloor (m-m')/2\rfloor + 2 + m'}$.
This allows us to remove the requirement that
$(\forall m)[T_2(m) \geq k(m+1)]$, and so we have established 
Theorem~\ref{t:hierarchy}.
Note that $m'$ will depend on $T_1$ and $T_2$, but that is 
perfectly legal; it does not undermine the proof.%
\end{proof}

We finish with some brief, technical comments about possible
extensions and alternate proofs.  First, our diagonalization is carried
out the world that it is in, namely, the world of functions.  We
mention in passing that since the PSR we build is variable only as
to which of two possible vectors it has at each even length, the
information behind its choice is in effect a set---indeed, a set of the
form $A \subseteq (11)^*$, and so a particular type of tally set.  One
could use this to make the proof more set-focused. But one would still
have to account for overhead time and so on, and we feel it is more
natural to simply have the proof be in the world the items in question
are inhabiting.  Second, we mention that it is possible that one could
turn Theorem~\ref{t:hierarchy}'s limsup into a liminf, thus making the
theorem 
stronger.  However, for most natural time functions $T_1$ and
$T_2$, the liminf and the limsup are equal, and so we do not
consider this an important direction, especially as it would likely make the
proof far more complex if attempted.  We do mention that the natural path
to use to attempt such an improvement would be, once one first was
trying to diagonalize against a given machine, to 
attempt diagonalizing against that machine
at every
length from then on until one had successfully diagonalized
against it.  But we
warn that to even know what one was currently trying to diagonalize
against would require recomputing one's history, which itself takes
time and can interfere with the proof.  
Worse, our current setup
doesn't do any diagonalizations at even lengths, but the 
liminf could go to zero just due to even lengths, and
that would not at all help us \emph{ever} have time 
to diagonalize; so our entire
zigzag framework becomes poisonous, yet abandoning it loses its
advantage (already lost anyway in the type of new construction we are
speculating about) of avoiding history rebuilding.

\subsection{Proof of Theorem~\ref{t:all-differ}}

In the following theorem, nonuniform simply means that the corresponding generators
are not required to be computable in any specific complexity class. In fact, 
the generators may be uncomputable.

\talldiffer*

\begin{proof}
Let us first show that 
\textbf{\boldmath $\nonnegrationals$-$\psr \not\subseteq$ nonuniform-$\integers$-\psr}.
Consider the following $\fp$-uniform $\nonnegrationals$-\psr\ generator $f$.

\noindent
$f(0^1) = (1)$, \\
$f(0^2) = ({3 \over 2}, 1)$, \\
$f(0^3) = ({3 \over 2}, 1, {1 \over 2})$, \\
$f(0^4) = ({7 \over 4}, {3 \over 2}, 1, {1 \over 2})$, \\
$f(0^5) = ({7 \over 4}, {3 \over 2}, 1, {1 \over 2}, {1 \over 4})$, \\
$f(0^6) = ({15 \over 8}, {7 \over 4}, {3 \over 2}, 1, {1 \over 2}, {1 \over 4})$, \\
$f(0^7) = ({15 \over 8}, {7 \over 4}, {3 \over 2}, 1, {1 \over 2}, {1 \over 4}, {1 \over 8})$, \\
etc.

Consider a nonuniform $\integers$-$\psr$ type generator $g$ that claims to
have the same winner set as this on all instances.  By the (clear)
extension of Proposition~\ref{p:differ} mentioned at the end of the
paragraph that follows that result, 
the 5-candidate vector of $g$
must normalize to
$(6,5,3,1,0)$.  The ordered gap pattern in this is: 1,2,2,1.
So the actual 5-candidate vector (over $Z$) of $g$ satisfies
$2(\alpha^5_1 - \alpha^5_2) = 
(\alpha^5_2 - \alpha^5_3) = 
(\alpha^5_3 - \alpha^5_4) = 
2(\alpha^5_4 - \alpha^5_5)$.
Similarly, the 7-candidate vector of $g$ must normalize to a 
vector having gap pattern: 1,2,4,4,2,1.  Since that vector 
retains all 5 coefficients from the 5-candidate vector, with 
two additional coefficients appropriately added, by inspection of 
the possibilities, it is clear that the only possible extension
that achieves this is one that adds exactly the coefficients
$\alpha^5_1 + 
{1 \over 2}(\alpha^5_1 - \alpha^5_2)$
and 
$\alpha^5_5 - 
{1 \over 2}(\alpha^5_1 - \alpha^5_2)$ (and due to the 6-candidate
vector, they must be added in the order just stated).  That is, 
the vector extends to the right and left, but an amount half the 
``base'' gap.  However, note that if 
$\alpha^5_1 - \alpha^5_2$ is not a multiple of 2, 
this extension is already not legal over $\integers$, as it would have 
coefficients not in $\integers$.  If 
$\alpha^5_1 - \alpha^5_2$ is a multiple of $2$, then 
we indeed do have this valid extension of the 5-candidate vector.
But by the same argument, the only possible 9-candidate vector 
again will have to be formed from the 7-candidate vector by 
extending to the right and left ends by half the then-current ``base''
gap.  Note that that will not be possible unless 
$\alpha^5_1 - \alpha^5_2$ is a multiple of $2^2 = 4$.
And so on.  Since there is some power of 2 that is the maximum 
power of two that divides 
$\alpha^5_1 - \alpha^5_2$, this process will eventual end 
in failure, i.e., there will be no valid vector over $Z$ that 
properly extends the pattern $g$ has trapped itself into.  Thus,
contrary to the claim made for $g$, $g$ in fact cannot match 
the winner sets of $f$.

Let us now show that 
\textbf{\boldmath $\integers$-$\psr \not\subseteq$ nonuniform-$Q'$-\psr}.
We will give the argument for this case more compactly, since 
the reader by now will be familiar with the flavor of arguments 
of this sort, from the above case.
Consider the following $\fp$-uniform $\integers$-\psr\ generator $f$.

\noindent
$f(0^1) = (0)$, \\
$f(0^2) = (1,0)$, \\
$f(0^3) = (1,0,-1)$, \\
$f(0^4) = (2,1,0,-1)$, \\
$f(0^5) = (2,1,0,-1,-2)$, \\
$f(0^6) = (4,2,1,0,-1,-2)$, \\
$f(0^7) = (4,2,1,0,-1,-2,-4)$, \\
etc.

Consider a nonuniform $\nonnegrationals$-$\psr$ type generator $g$ that claims to
have the same winner set as this on all instances.  Note that the gap
pattern, at each length starting at 5, is that we have four identical,
adjacent smallest-size (among the gaps that exist) gaps in the center,
and then surrounding that the gaps grow by repeatedly doubling (with
the extra one being on the ``upper'' side on even-length cases).  Note
that no single internal insertion anywhere maintains a four, identical,
adjacent, smallest-size (among the gaps that exist) gap pattern.  So we 
can only expand by going doubly-far to the right and left in each 
pair of extensions.  But that means the values $\alpha^{2m+1}_{2m+1}$ will
be decreasing at an exponentially increasing rate.  And so since 
$\alpha^1_1$ was finite, at some $m$
the value 
$\alpha^{2m+1}_{2m+1}$ will necessarily have to be less than zero, and 
so will not be an element of $\nonnegrationals$.  

We'll sketch 
\textbf{\boldmath $\rationals$-$\psr \not\subseteq$ nonuniform-$\nonnegrationals$-$\psr \cup {}$ nonuniform-$\integers$-$\psr$} even more briefly.  
It does not automatically follow from the earlier parts; claiming 
that would be like claiming that since not all integers are even 
and not all integers are odd, it follows that not all integers are 
odd integers or even integers.  Nonetheless, by building a 
\emph{construction} that uses both of the weaknesses exploited by 
the constructions of the preceding two parts, we can easily establish
that 
\textbf{\boldmath $\rationals$-$\psr \not\subseteq$ nonuniform-$\nonnegrationals$-$\psr \cup {}$ nonuniform-$\integers$-$\psr$}.
In particular, it is easy to prove that the following $\fp$-uniform
$\rationals$-\psr\ has the desired property.  After setting down 4 equally spaced 
gaps, it in term makes re-doublingly large gaps going in the negative 
direction 
and re-halvingly large gaps going in the positive direction.

\noindent
$f(0^1) = (2)$, \\
$f(0^2) = (2,1)$, \\
$f(0^3) = (2,1,0)$, \\
$f(0^4) = (2,1,0,-1)$, \\
$f(0^5) = (2,1,0,-1,-2)$, \\
$f(0^6) = (2,1,0,-1,-2,-4)$, \\
$f(0^7) = (2{1 \over 2},2,1,0,-1,-2,-4)$, \\
$f(0^8) = (2{1\over 2},2,1,0,-1,-2,-4,-8)$, \\
$f(0^9) = (2{3 \over 4}, 2{1\over 2},2,1,0,-1,-2,-4,-8)$, \\
$f(0^{10}) = (2{3 \over 4}, 2{1\over 2},2,1,0,-1,-2,-4,-8,-16)$, \\
$f(0^{11}) = (2{7\over 8},2{3 \over 4}, 2{1\over 2},2,1,0,-1,-2,-4,-8)$, \\
etc.

By using the natural 
variants for this of 
both types of arguments used in the earlier parts 
of this proof, we can argue that this function 
can be accepted by neither a $\nonnegrationals$-type generator nor a $\integers$-type generator.%
\end{proof}

\subsection{Proof of Theorem~\ref{t:fpsr}}

\tfpsr* 

\begin{proof}
Theorem~\ref{t:fpsr} is immediately clear, 
in light of Proposition~\ref{p:differ} and the end of 
the paragraph following it, 
since our normalizations 
shifted each of these types to equivalent vectors over $\naturals$, which is 
the most restrictive of all these types.%

The equality with $\rationals$-$\psr$ can easily be seen by inductively constructing, for every flexible generator $f$, an equivalent pure generator over the rationals.
\end{proof}

\subsection{Proof of Theorem~\ref{t:norm-zero}}

\tnormzero* 

\begin{proof}
Let us prove the first part of the theorem.  Consider this 
$\fp$-uniform $\naturals$-\psr, which we mention in passing is even 
an 
$\fp$-uniform $\naturals$-$\psr_{\normgcd}$.
$f(0^1) = (5)$, 
$f(0^2) = (6,5)$, 
$f(0^3) = (7,6,5)$, 
$f(0^4) = (8,7,6,5)$, and
$f(0^5) = (8,7,6,5,0)$.
We won't use longer lengths here, but to make clear that 
this is a pure scoring rule, let us say that the remaining
lengths are $f(0^6) = (8,7,6,5,0,0)$, $f(0^7) = (8,7,6,5,0,0,0)$, 
and so on.

Consider an $\naturals$-$\psr_{\normzero}$ type generator $g'$ that claims 
to have the same winner set as this on all instances.  So by 
Proposition~\ref{p:differ}, it is clear that the 4-candidate vector 
of $g'$ must be of the form $(3k,2k,k,0)$, for some $k \in \naturals - \set{0}$.

But clearly addition of one coefficient to this vector cannot yield 
anything that normalizes to $(8,7,6,5,0)$; one can see this easily by
examining each of the legal insertion places and seeing that it cannot 
suffice.  So the rule of $g'$ differs from the original rule on some 
5-candidate examples, by Proposition~\ref{p:differ}.

(We mention in passing that the exact same $f$ we just used can 
also be easily argued 
to be accepted by no $\psr$ generator that has 
the property that at each length, the gcd of its nonzero coefficients
(if any) is one.)

Let us turn to the second claim of the theorem, namely, that 
every $\naturals$-$\psr$ is generated by an $\naturals$-$\psr$ 
generator that on all but at most 
a finite number of lengths $m$ has the property that the last coefficient
is zero and the gcd of the nonzero coefficients in the length-$m$ vector 
is one.

Given $\cale_1$, an $\fp$-uniform $N$-\psr, fix an $\fp$ generator 
for that rule.  Let $\pool = \{ \alpha^m_j \condition  m \geq 1 \land
1 \leq j \leq m\}$, i.e., it is the set of all coefficients.

If $\|\pool\| = 1$ we are done easily using the scoring vector 
sequence $(0)$, $(0,0)$, $(0,0,0)$, \ldots.  

So henceforth we assume that 
$\|\pool\| \geq 2$.  Let $\min(\pool) = k$.  Now, alter the generator
from now on (and thus each $\alpha^i_j$, and so indirectly
we also have altered $\pool$)
to subtract $k$ from each original $\alpha^i_j$.  
So it is clear that, after the alteration, at all but a finite number 
of values of $m$ we have $\alpha^m_m = 0$.

Let  $j = \gcd(\pool)$, for the (altered) $\pool$.

Let us keep in mind that with PSRs, coefficients that occur at a
length must also occur at each greater length.  If $j=1$, then there
is some finite length at which the set of coefficients at that length
has a gcd of one, and so at all greater lengths the coefficient set
also has a gcd of one.  If $j \geq 2$, then further re-alter our
generator to, after subtracting $k$, divide by $j$.  Applying the
$j=1$ argument to this, we again have that the gcd will be one at all
but a finite number of lengths, and as dividing by $j$ does not change
the set of lengths at which $\alpha^m_m=0$, we also have that for
our re-altered generator $\alpha^m_m$ is zero for all but a finite
number of $m$'s.  The re-altered generator is clearly polynomial-time 
computable if the original one is, so we have completed the proof of
the theorem's second part for the $\fp$-uniform case, and the result 
clearly also holds for the nonuniform case, by the same argument with 
the discussion of runtimes removed.%
\end{proof}

\fi

\ifshowcomplexityappendix
\section{Omitted Proofs and Discussion from Section~\ref{sect:dichotomy}}\label{appendix:dichotomy}

\subsection{Missing Pieces of the Proof of Theorem~\ref{theorem:alpha,beta,zero in ptime}}

Here we present the proofs of the two facts used in the proof of Theorem~\ref{theorem:alpha,beta,zero in ptime}. We start with Fact~\ref{fact:extending vi voters in ptime}:

\extendingvivotersinptime*

\begin{proof}
  For each candidate $c\in C$, let $s_c$ be the score of that candidate after the voters in $V$ and the voters in $R$ have been counted. For the remainder of the proof of Fact~\ref{fact:extending vi voters in ptime}, we assume that $i=1$, i.e., we are adding voters with $p$ in the first position. Since the proof of Fact~\ref{fact:extending vi voters in ptime} does not use the fact that $\alpha\ge\beta$, the proof works in the same way for $i=2$, i.e., adding voters voting $p$ in the second position.
  
  We can add at most $r=k-\card{S}$ voters from $V_1$. Clearly, $k$ is bound by the size of the instance. Thus we can brute-force over every $j\in\set{0,\dots,r}$ and check whether $S$ can be extended with exactly $j$ voters from $V_1$. To do this for a given $j$, we proceed as follows:
  
  \begin{itemize}
   \item First, we compute the number of points that $p$ will have after adding $j$ voters from $V_1$, this is $s_p+j
\alpha$. We denote this number with $s^{final}_p$.
   \item For each candidate $c\in C$, we compute the maximal number $m_c$ of votes giving points to $c$ that can be added. This is the maximal number $m_c$ with $s_c+m_c
\beta\leq s^{final}_p$. If $m_c$ is negative for some $c$, then $S$ cannot be extended with exactly $j$ voters from $V_1$.
   \item For each candidate $c\in C$, we add voters voting $c$ in the second place while still possible, i.e., while all of the following hold:
   \begin{itemize}
     \item the number of added voters up to now does not exceed $j$,
     \item the number of added voters voting $c$ second does not exceed $m_c$,
     \item there are still voters available voting $c$ second.
   \end{itemize}
   \item If the resulting set is a solution to the control problem, accept. Otherwise, $S$ cannot be extended with exactly $j$ voters from $V_1$.~\qedhere
  \end{itemize}
 \end{proof}
 
The proof for Fact~\ref{fact:we mostly add V1 voters} follows:

\wemostlyaddVOnevoters*

 \begin{proof}
  Let $S_1$ ($S_2$) be the corresponding subset of $V_1$ ($V_2$). Let $c_1,\dots,c_\ell$ be the candidates voted in the second place by the voters in $S_1$. We compare the effect that adding $S_1$ or $S_2$ has on the relationship between $p$ and the $c_i$ (clearly, for the relationship between $p$ and candidates $c\notin\set{c_1,\dots,c_\ell}$, adding $S_1$ is always better).
 
 \begin{itemize}
  \item When adding $S_1$, $p$ gains $(\ell-1)
\alpha + (\alpha-\beta)=\ell\alpha-\beta$ points against each $c_i$.
  \item When adding $S_2$, $p$ gains at most $\ell
\beta$ points against each of the $c_i$.
 \end{itemize}

 Since $\ell(\alpha-\beta)\ge\beta$, we have that $\ell\alpha-\beta\ge \ell\beta$, and so adding $S_1$ is at least as good as adding $S_2$.
 \end{proof}

\subsection{Proof of Lemma~\ref{lemma:coefficients realization}}

The proof Lemma~\ref{lemma:coefficients realization} makes extensive use of the following construction:

\begin{lemma}\label{lemma:transfer difference alpha points}
 Given a scoring vector $(\alpha_1,\dots,\alpha_m)$ , and numbers $i\neq j$, $k\neq l\in\set{1,\dots,m}$, there is a set $V$ of $m$ votes such that when counting the votes in $V$, the scores of all candidates is $A:=\sum_{m=1}^n\alpha_m$, except:
 \begin{itemize}
  \item candidate $c_i$ has $A+\alpha_k-\alpha_l$ points,
  \item candidate $c_j$ has $A+\alpha_l-\alpha_k$ points.
 \end{itemize}
\end{lemma}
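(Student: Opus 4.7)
The plan is a two-stage construction: first build a ``balanced'' multiset of $m$ votes in which every candidate receives exactly $A$ points, then apply a single local swap that reroutes $\alpha_k - \alpha_l$ points from $c_j$ to $c_i$ without disturbing any other candidate.

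For the balanced base, I would use a cyclic construction, but seeded by a permutation chosen so that $c_i$ and $c_j$ land in positions $l$ and $k$ respectively inside one distinguished vote. Concretely, fix any permutation $\tau$ of $\{1,\ldots,m\}$ with $\tau(i) = l$ and $\tau(j) = k$; such a $\tau$ exists because $i \neq j$ and $k \neq l$. For each $t \in \{0,1,\ldots,m-1\}$, define vote $V_t$ by placing candidate $c_r$ in position $((\tau(r) + t - 1) \bmod m) + 1$. As $t$ ranges over $\{0,\ldots,m-1\}$, each candidate visits every position exactly once, so summed over all $m$ votes each candidate's total score is $\alpha_1 + \cdots + \alpha_m = A$.

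The second stage is a single local modification: inside vote $V_0$, swap $c_i$ and $c_j$. By construction, in $V_0$ candidate $c_i$ sat in position $l$ and $c_j$ in position $k$; after the swap $c_i$ is in position $k$ and $c_j$ in position $l$. This changes $c_i$'s total by $+\alpha_k - \alpha_l$ and $c_j$'s total by $+\alpha_l - \alpha_k$, while no other vote and no other candidate's slot is touched, so all remaining candidates still end with $A$ points exactly as desired.

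The only step that requires real thought is arranging the base so that $c_i$ and $c_j$ occupy the two chosen positions within a common vote; this is precisely what using an arbitrary permutation $\tau$ (rather than the textbook cyclic shift of $(c_1,\ldots,c_m)$) buys us, and it is the sole place where the hypotheses $i \neq j$ and $k \neq l$ are invoked. Everything else is a routine verification, and the construction is evidently polynomial-time computable from the scoring vector $(\alpha_1,\ldots,\alpha_m)$ and the indices $i,j,k,l$.
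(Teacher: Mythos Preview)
Your proposal is correct and essentially identical to the paper's proof: the paper also starts with one vote placing $c_i$ in position $l$ and $c_j$ in position $k$ (remaining candidates arbitrary), takes all $m$ cyclic shifts to make every candidate score $A$, and then swaps $c_i$ and $c_j$ in that one distinguished vote. Your use of an explicit permutation $\tau$ is just a slightly more formal way of describing the same seeded cyclic construction.
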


\begin{proof}
 Without loss of generality, assume $i\leq j$ and $k\leq l$, otherwise we swap. 
 
 Let $v_1$ be a vote having $c_i$ in position $l$ and $c_j$ in position $k$, the remaining candidates are ordered arbitrarily. For any $\ell\ge1$, let $v_{\ell+1}$ be the vote obtained from $v_\ell$ by moving each candidate one position to the right, i.e., the candidate at position $t$ in the vote $v_\ell$ ends at position $t+1$ in the vote $v_{\ell+1}$ if $t+1\leq m$, and at position $1$ if $t+1=m$. Now let $V_0$ contain the votes $v_1,\dots,v_m$. Clearly, each candidate gains exactly $A$ points in the votes $V_0$. Now, let $V$ be obtained from $V_0$ by exchanging, in the vote $v_0$, the positions of candidate $c_i$ and $c_j$. Then, relatively to the votes $V_0$, the points of $c_i$ and $c_j$ change as follows:
 
 \begin{itemize}
  \item $c_i$ loses $\alpha_l$ points and gains $\alpha_k$ points,
  \item $c_j$ loses $\alpha_k$ points and gains $\alpha_l$ points. 
 \end{itemize}
 
 So the relative score is as required.
\end{proof}

The actual proof of Lemma~\ref{lemma:coefficients realization} follows:

\coefficientsrealization*

\begin{proof}
 The algorithm produces the set $V$ by consecutive applications of Lemma~\ref{lemma:transfer difference alpha points}. We say that an application of the Lemma \emph{transfers} $(\alpha_k-\alpha_l)$ points from $c_j$ to $c_i$, since the relative score (in relation to all candidates except $c_i$ and $c_j$) of $c_i$ increases by $\alpha_k-\alpha_l$, while the score of $c_j$ decreases by the same amount. Note that in order for such a ``transfer'' to be possible, neither candidate is required to actually have a nonzero score. We start with the empty set $V$, and then, for each $a^c_i\neq 0$ for some $c\in\set{1,\dots,m-1}$, we proceed as follows:
 \begin{itemize}
  \item if $a^c_i>0$, we apply Lemma~\ref{lemma:transfer difference alpha points} $a^c_i$ times to transfer $\alpha_i$ points from the candidate $m$ to the candidate $c$,
  \item if $a^c_i<0$, we apply Lemma~\ref{lemma:transfer difference alpha points} $(m-2)
 a^c_i$ times to transfer $\alpha_i$ points from the candidate $m$ to each candidate $c'\neq\set{c,m}$.
 \end{itemize}
 In the above steps, clearly the score of $m$ is the lowest among all candidates. So to ensure that $\score{i}>\score{m}+k
\alpha_1$, it suffices to transfer $(k+1)
\alpha_1$ points from the candidate $m$ to each candidate $c\in\set{1,\dots,m-1}$. Clearly, the resulting scores are as required.
\end{proof}

\subsection{Proof of \ifshortenedmaintext the ``Dual'' of Theorem~\ref{theorem:3-veto generalization} \else Theorem~\ref{theorem:4-approval generalization informal} \fi and Discussion of ``Partial Reductions''}\label{subsect:4 approval generalization and partial reduction discussion}

In
\lhnote{LATER EDIT: I have reedited this paragraph somewhat.}%
 the discussion surrounding Theorems~\ref{theorem:3-veto generalization}
and~\ref{theorem:alpha4 > alpha(n-2) np completeness}, we described 
Theorem~\ref{theorem:3-veto generalization} and its ``dual''---which 
we now state \ifshortenedmaintext\else formally\fi as
Theorem~\ref{theorem:4-approval generalization}---as ``partial reductions.''
To clarify this unusual terminology, we now explain why we use this term.
For a generator $f$ that is covered in Theorem~\ref{theorem:alpha4 > alpha(n-2) np completeness},
it is not necessarily the case that hardness follows from Theorem~\ref{theorem:3-veto generalization}
alone (or from Theorem~\ref{theorem:4-approval generalization} alone). 
Rather, for each value of $m$ resulting from a 3DM-instance, the reduction establishing NP-hardness
(which is given in the proof of Theorem~\ref{theorem:alpha4 > alpha(n-2) np completeness})
needs to check which of the two cases 
(as to meeting the differing ``$\alpha_{\dots} < \alpha_{\dots}$'' 
conditions listed in Theorem~\ref{theorem:3-veto generalization}
and Theorem~\ref{theorem:4-approval generalization}) is satisfied.
\lhnote{I asked Henning:
\begin{verbatim}
can you please suggest an alternate phrase to

(meeting the prerequisites of Theorem~\re\f{theorem:3-veto generalization}
or Theorem~\ref{theorem:4-approval generalization}) is met

for the discussion at the start of B.3?  "prerequisites" doesn't really work and the reader will have trouble knowing precisely what it refers to.  but i'm not sure what alternate phrase would correctly capture the semantics.    is "prereqs" referring precisely and only to the ">" of the second bullet of 3.8 and the ">" of the 3rd bullet of B.2?  if so, can we safely replace the above phrase with this

(as to meeting the differing ``$\alpha_{\dots} < $\alpha_{\dots}$'' conditions listed in Theorem~\re\f{theorem:3-veto generalization}
and Theorem~\ref{theorem:4-approval generalization}) is satisfied.
\end{verbatim}

And he replied

\begin{verbatim}
I also thought about this and kept it this way because at that point the
reader probably cares more about which reductions (theorems) can be
applied than which of the ">" conditions are met (and yes, this is only
the > condition). If you think just giving the > condition explicitly
then that's fine for me as well. On further reading the paragraph that
probably is the best option. (You may want to mention that the $m$
referred to here is simply $6k$, where $k$ is the size of the 3DM instance.)

Another possibility would be to rename the reduction functions, call one
of them g_4APP and the other g_3VETO and then talk about which of the
reductions is applicable (but that's probably too much change for now).
\end{verbatim}

and I wrote back, and did, this:

\begin{verbatim}
thanks.  so for now i'll just change it to this:

(as to meeting the differing ``$\alpha_{\dots} < $\alpha_{\dots}$'' conditions listed in Theorem~\re\f{theorem:3-veto generalization}
and Theorem~\ref{theorem:4-approval generalization}) is satisfied.


(note: I'll really just include \dots in the source probably---i think that is enough to tell them exactly what part of 3.8 and B.2 we are speaking of.)

i'm done as soon as i edit that in, and so will release the lock without 5-15 minutes, along with a complete diff.
\end{verbatim}
}%
  Depending on which condition applies, either the
reduction $g$ from Theorem~\ref{theorem:3-veto generalization} or the reduction $g$ from
Theorem~\ref{theorem:4-approval generalization} is used for this
particular 3DM-instance.
In this sense, Theorem~\ref{theorem:3-veto generalization} (and also its ``dual,'' Theorem~\ref{theorem:4-approval generalization})
are ``partial'' reductions: For a given $f$, each of 
these theorems may only be ``half''
\lhnote{LATER EDIT: 
quotation marks added as 
protection against mean referees---because 
we are NOT making a quantitative claim about 
one many $m$ one or the other holds for asymptotically---half means 
one of two reductions... it is not quantitative over the m's.}%
of the reduction needed to prove the NP-hardness of $f$-CCAV. 
Theorems~\ref{theorem:3-veto generalization} and~\ref{theorem:4-approval generalization}
together, with a simple additional observation, then give
Theorem~\ref{theorem:alpha4 > alpha(n-2) np completeness}, which is the
``complete'' reduction to obtain NP-completeness of $f$-CCAV.

Let us turn from the above discussion to the formal statement of
\ifshortenedmaintext
the ``dual'' of Theorem~\ref{theorem:3-veto generalization}
\else
Theorem~\ref{theorem:4-approval generalization informal}
\fi
and its proof:
\ifshortenedmaintext
\else
We recall the statement of the Theorem:

\fourapprovalgeneralizationinformal*

\fi
The complete statement of \ifshortenedmaintext the analogue of Theorem~\ref{theorem:3-veto generalization} \else Theorem~\ref{theorem:4-approval generalization informal} \fi is as follows. Note that again, we do not require the generator $f$ to respect any purity conditions.

\begin{theorem}\label{theorem:4-approval generalization}
 Let $f$ be an \fp-uniform $\rationals$-generator. Then there exists an \fp-computable function $g$ such that 
  \begin{itemize}
  \item $g$ takes as input an instance $I_{\mathtext{3DM}}$ of 3DM where $k=\card X=\card Y=\card Z$,
  \item $g$ produces an instance $I_{\mathtext{CCAV}}$ of $f$-CCAV,
  \item If for $m=6k$ we have that $\alpha^{f,m}_4>\alpha^{f,m}_{m-3k+1}$, then: $I_{\mathtext{3DM}}$ is a positive instance of 3DM if and only if $I_{\mathtext{CCAV}}$ is a positive instance of $f$-CCAV.
 \end{itemize}
\end{theorem}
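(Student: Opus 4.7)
The plan is to mirror the proof of Theorem~\ref{theorem:3-veto generalization}, swapping the roles of the top four and bottom three positions: under the hypothesis $\alpha^{f,m}_4>\alpha^{f,m}_{m-3k+1}$, the four strong positions now play the role that the three weak positions did there. Write $\alpha_i$ for $\alpha^{f,m}_i$, and set $m=6k$. The candidates are $p$, the elements $s_1,\ldots,s_{3k}$ (where $s_1,\ldots,s_k$ enumerate $X$, $s_{k+1},\ldots,s_{2k}$ enumerate $Y$, and $s_{2k+1},\ldots,s_{3k}$ enumerate $Z$), and $m-3k-1$ dummy candidates; the budget is $k$. Let $r'(t)\in\set{2,3,4}$ record whether $s_t$ lies in $X$, $Y$, or $Z$, respectively.

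Using Lemma~\ref{lemma:coefficients realization}, I would construct the registered votes $R$ so that relative scores are $\score{p}=0$, $\score{s_t}=k\alpha_1-\alpha_{r'(t)}-(k-1)\alpha_{m-3k+t}$, and every dummy sits more than $k\alpha_1$ below the others (so no dummy can become a winner by adding at most $k$ voters). For each tuple $(x,y,z)=(s_h,s_i,s_j)\in M$ with $h<i<j$, $U$ would contain one vote whose first four positions are $p>s_h>s_i>s_j$, whose positions $5$ through $m-3k$ are filled with dummies, and whose positions $m-3k+1,\ldots,m$ place each remaining $s_t$ at its ``natural'' slot $m-3k+t$, with the three vacated slots (at $m-3k+h$, $m-3k+i$, $m-3k+j$) filled by dummies. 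Thus, in each added vote $p$ earns $\alpha_1$, while $s_t$ earns $\alpha_{r'(t)}$ if that vote approves it and $\alpha_{m-3k+t}$ otherwise.

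The forward direction is immediate: adding the $k$ votes corresponding to a cover gives each $s_t$ exactly one approval, so $p$ and every $s_t$ finish at $k\alpha_1$ and $p$ is a co-winner. For the converse, let $C$ be an added vote set of size at most $k$ making $p$ a winner, and let $c_t$ be the number of votes in $C$ that approve $s_t$. Substituting into the initial scores gives
\[\score{s_t}-\score{p}=(k-\card{C})(\alpha_1-\alpha_{m-3k+t})+(c_t-1)(\alpha_{r'(t)}-\alpha_{m-3k+t}).\]
The hypothesis forces both $\alpha_1-\alpha_{m-3k+t}>0$ and $\alpha_{r'(t)}-\alpha_{m-3k+t}>0$ (with the former at least the latter, since $\alpha_1\geq\alpha_{r'(t)}$), and requiring $\score{s_t}\leq\score{p}$ yields $c_t\leq\card{C}-k+1$ for every $t$.

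The main obstacle---and the reason this argument is subtler than in Theorem~\ref{theorem:3-veto generalization}---is ruling out $\card{C}<k$: in the 3-veto analog a single uncovered element beat $p$ outright, but here the ``$c_t-1$'' term contributes with the wrong sign when $c_t=0$, so no single $s_t$ alone suffices. The workaround is to sum the per-$t$ bound over all $3k$ elements and use the identity $\sum_t c_t=3\card{C}$ (each added vote approves three $s_t$'s). This gives $3\card{C}\leq 3k(\card{C}-k+1)$, which rearranges (for $k\geq 2$, as is needed anyway for the hypothesis to be nontrivial) to $\card{C}\geq k$; combined with $\card{C}\leq k$, this forces $\card{C}=k$ and then $c_t=1$ for every $t$, so the chosen tuples cover $X\cup Y\cup Z$. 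Polynomial-time computability of $g$ is immediate, since $k$ is polynomial in the input size and all coefficients fed to Lemma~\ref{lemma:coefficients realization} remain polynomially bounded.
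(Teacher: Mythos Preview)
Your construction, initial scores, and forward direction are identical to the paper's proof, so the proposal is correct and the overall approach matches.

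The one notable difference is in the converse argument. You derive the uniform bound $c_t\leq\card{C}-k+1$ (which is valid: with $A=\alpha_1-\alpha_{m-3k+t}\geq B=\alpha_{r'(t)}-\alpha_{m-3k+t}>0$ and $\card{C}\leq k$, the inequality $(c_t-1)B\leq(\card{C}-k)A\leq(\card{C}-k)B$ gives exactly this), and then sum over $t$ to force $\card{C}=k$ and $c_t=1$. The paper instead avoids the summation entirely: since $\card{C}\geq1$, some $z\in Z$ is approved, so $c_z\geq1$; your bound for that single $z$ already gives $1\leq\card{C}-k+1$, i.e., $\card{C}\geq k$. The paper then argues separately that no element can be approved twice. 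Both arguments extract the same information from the same inequality; the paper's witness-based version is slightly more direct (and explains why your ``main obstacle'' is less of an obstacle than you feared---one never needs to look at an \emph{uncovered} $s_t$, only at an \emph{approved} one), while your summation is more uniform and gets $\card{C}=k$ and $c_t\equiv1$ in one stroke.
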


\begin{proof}
The proof is very similar to the proof of Theorem~\ref{theorem:3-veto generalization} above, in particular we use the same set of candidates. Again, let $M\subseteq X\times Y\times Z$ be the set from $I_{\mathtext{3DM}}$, and let $X=\set{s_1,\dots,s_k}$, $Y=\set{s_{k+1},\dots,s_{2k}}$, $Z=\set{s_{2k+1},\dots,s_{3k}}$. Similarly to the proof of theorem~\ref{theorem:3-veto generalization}, we use Lemma~\ref{lemma:coefficients realization} to construct an election with relative points as follows:
 
 \begin{itemize}
  \item $\score{p}=0$,
  \item for each $i\in\set{1,\dots,3k}$, $\score{s_i}=k
\alpha_1-(k-1)
\alpha_{m-3k+i}-\alpha_{1+r(i)}$,
  \item all dummy candidates have points such that adding at most $k$ votes does not let them win the election.
 \end{itemize}

For each $(x,y,z)=(s_h,s_i,s_j)\in M$ with $h<i<j$, we add an available voter voting as follows:
 $p>x>y>z>d_4>\dots>d_{m-3k-1}>s_1>\dots>s_{h-1}>d_1>\linebreak s_{h+1}>\dots>s_{i-1}>d_2>s_{i+1}>\dots>s_{j-1}>d_3>s_{j+1}>\dots,s_{3k}$.

 We say that $x$, $y$, and $z$ are \emph{approved} in this vote. Note that a candidate $s_i$ gets $\alpha_{1+r(i)}$ (where $r(i)=1$, $2$ or $3$ depending whether $s_i\in X$, $Y$ or $Z$) points in a vote approving $s_i$, and gets $\alpha_{m-3k+i}$ in a vote not approving $s_i$. 
 
 We claim that $p$ can be made a winner by adding at most $k$ of the available votes if and only if the 3DM instance is positive.
 
 First assume that the 3DM instance is positive, and let $C\subseteq M$ be a cover with $\card{C}=k$. We add the votes corresponding to the elements of $C$, i.e., for each candidate $s_i$, we add one vote that approves $s_i$ and $(k-1)$ votes that do not approve $s_i$. The final score of the nondummy candidates is as follows:
 
 \begin{itemize}
  \item Candidate $p$ gains $\alpha_1$ points in each of the $k$ added votes, so the final score of $p$ is $k
\alpha_1$,
  \item each candidate $s_i$ gains $\alpha_{1+r(i)}+(k-1)
\alpha_{m-3k+i}$ points from the one vote approving $s_i$ and the $k-1$ votes not approving $s_i$. Thus the final score of $s_i$ is $k
\alpha_1-(k-1)
\alpha_{m-3k+i}-\alpha_{1+r(i)}+\alpha_{1+r(i)}+(k-1)
\alpha_{m-3k+i}=k
\alpha_1$.
 \end{itemize}
 
 Thus $p$ and $s_i$ tie, and so $p$ is a winner of the election.
 
 For the converse, assume that $p$ can be made a winner by adding at most $k$ voters. Let $C$ be the elements of $M$ corresponding to the added votes, then $\card{C}\leq k$. 
 
 Since $\alpha_1\ge\alpha_4>\alpha_{m-3k+1}$, we know that the score of $s_1$ before adding any votes is positive. Thus $C\neq\emptyset$, i.e., at least one vote is added. Let $z$ be a candidate from $Z$ who is approved in at least one of the added votes. We show that $\card{C}=k$. For this, indirectly assume that $\card{C}<k$. We show that $z$ strictly beats $p$ in this case: Consider the final scores of $p$ and $z$. Let $i$ be the index of $z$, i.e., the $i$ with $s_i=z$. Note that $r(i)=3$, since $s_i=z\in Z$.
 
 \begin{itemize}
  \item candidate $p$ gains $\card{C}
\alpha_1$ points,
  \item candidate $z$ gains at least $\alpha_4+(\card C-1)
\alpha_{m-3k+i}$ points (even more if $z$ is approved in more than one of the additional votes). Thus the final score of $z$ is at least
  
  $k
\alpha_1-(k-1)
\alpha_{m-3k+i}-\alpha_{4}+\alpha_4+(\card C-1)
\alpha_{m-3k+i}
   =
  k
\alpha_1+(\card C-k)
\alpha_{m-3k+i}
  $
 \end{itemize}
 
 To see that $z$ strictly beats $p$, we compute the difference between their scores, which is 
 
 $$
 \begin{array}{ll}
    & \underbrace{k
\alpha_1+(\card C-k)
\alpha_{m-3k+i}}_{\score{s_i}}-\underbrace{\card{C}
\alpha_1}_{\score{p}} \\
  = & (k-\card{C})
\alpha_1-(k-\card C)
\alpha_{m-3k+i} \\
  = & (k-\card{C})
(\alpha_1-\alpha_{m-3k+i}).
 \end{array}$$
 
 Since, by assumption, $k-\card{C}>0$, and $\alpha_1\ge\alpha_4>\alpha_{m-3k+1}\ge\alpha_{m-3k+i}$, this difference is positive and so $s_i$  strictly beats $p$ if $\card{C}<k$. Thus we know that $\card{C}=k$. 
 
 We now show that $C$ is indeed a cover. Assume that this is not the case, so, since $\card C=k$, without loss of generality, there is some $z'\in Z$ which is covered twice, i.e., which is approved in at least two of the added votes. We show that this $z'$ strictly beats $p$ in the election after the addition of votes by comparing the score of $p$ and $z'$. As above, let $i$ be the index of $z'$, i.e., chose $i$ such that $z'=s_i$. The scores are as follows (again, $r(i)=3$ since $s_i\in Z$):
 
 \begin{itemize}
  \item The candidate $p$ again gets $k
\alpha_1$ points, and so its final score is $k
\alpha_1$.
  \item The candidate $z'$ is approved in at least $2$ of the $k$ additional votes. Thus $z'$ gains at least $2
\alpha_4+(k-2)\alpha_{m-3k+i}$ points. The final score of $z'$ is thus at least
  
  $k
\alpha_1-(k-1)
\alpha_{m-3k+i}-\underbrace{\alpha_{1+r(i)}}_{=\alpha_4}
  +2\alpha_4
  +(k-2)\alpha_{m-3k+i}
  =k
\alpha_1-\alpha_{m-3k+i}+\alpha_4$.
 \end{itemize}

 Since $\alpha_4>\alpha_{m-3k+1}\ge\alpha_{m-3k+i}$, it follows that the final score of $z'$ exceeds the score of $p$, and so $z'$ indeed strictly beats $p$ as claimed if $C$ is not a cover. Thus $C$ is indeed a cover as required.
\end{proof}

\subsection{Proof of Theorem~\ref{theorem:alpha4 > alpha(n-2) np completeness}}
 
\alphafourbiggeralphanminustwonpcomplete*

\begin{proof}
 Clearly, for pure scoring rules, if the condition $\alpha^{f,m}_4>\alpha^{f,m}_{m-2}$ is true for some $m$, then it remains true for all $m'\ge m$, since it is easy to see that $\alpha^{f,m+1}_4\ge\alpha^{f,m}_4$ and $\alpha^{f,m+1}_{m+1-2}\leq\alpha^{f,m}_{m-2}$.

 We prove NP-hardness by a reduction from 3DM. So let $I_{\mathtext{3DM}}$ be a 3DM-instance, and let $k$ be the cardinality of the set $X$ in $I_{\mathtext{3DM}}$. Without loss of generality, we can assume that $k$ is large enough such that the number $m=6k$ satisfies the condition $\alpha^{f,m}_4>\alpha^{f,m}_{m-2}$. Due to the monotonicity of the coefficients and since $m-3k+1=3k+1$, we have that
 
 $$\alpha^{f,m}_4\ge\alpha^{f,m}_{m-3k+1}=\alpha^{f,m}_{3k+1}\ge\alpha^{f,m}_{m-2},$$
 
 and since $\alpha^{f,m}_4>\alpha^{f,m}_{m-2}$, we know that one of the cases $\alpha^{f,m}_{3k+1}>\alpha^{f,m}_{m-2}$, or $\alpha^{f,m}_4>\alpha^{f,m}_{m-3k+1}$ occurs. In the first case, we use the reduction from Theorem~\ref{theorem:3-veto generalization}, in the second case, the one from Theorem~\ref{theorem:4-approval generalization}.
\end{proof}

\subsection{Proof of Theorem~\ref{theorem:summary finitely many coefficients}.\ref{theorem part:alpha2>alpha4 np complete}}

\summaryfinatelymanycoefficients*

\begin{proof}
 Again, we reduce from 3DM. Let $n=\card{M}$, and let $k=\card{X}=\card{Y}=\card{Z}$. We introduce the following candidates:
 
 \begin{itemize}
  \item The preferred candidate $p$,
  \item for each $c\in X\cup Y\cup Z$ a candidate $c$,
  \item two dummy candidates $d_1$ and $d_2$.
 \end{itemize}
 
 Using Lemma~\ref{lemma:coefficients realization}, we construct the registered voters such that the relative score of the candidates before adding voters is as follows:
 
 \begin{itemize}
  \item $\score{p}=-k
\alpha_4$,
  \item $\score{d_1},\score{d_2}<-k
(\alpha_1+\alpha_4)$,
  \item $\score{x}=-\alpha_1-(k-1)
\alpha_4$ for each $x\in X$,
  \item $\score{y}=-\alpha_2-(k-1)
\alpha_4$ for each $y\in Y$,
  \item $\score{z}=-(k-1)
\alpha_4$ for each $z\in Z$.
 \end{itemize}

 For each $(x,y,z)\in M$, we add one available voter voting $$\underbrace{x}_{\alpha_1}>\underbrace{y}_{\alpha_2}>\underbrace{d_1}_{\alpha_3}>\underbrace{p>\dots}_{\alpha_4}>\underbrace{d_2}_{\alpha_5}>\underbrace{z}_0.$$
 
 We claim that $p$ can be made a winner by adding at most $k$ voters if and only if there is a cover $C\subseteq M$ with $\card{I}=k$.
 
 First, assume that there is such a cover $C$. Then, adding the voters corresponding to the elements of $C$ changes the scores as follows:
 
 \begin{itemize}
  \item $p$ gains $k
\alpha_4$ points and thus ends up with $0$ points,
  \item each $x$ ($y$) gains $\alpha_1+(k-1)\alpha_4$ points ($\alpha_2+(k-1)\alpha_4$ points), which leads to $0$ points,
  \item each $z$ gains $(k-1)
\alpha_4$ points, also finishing with $0$ points,
  \item by construction, $d_1$ and $d_2$ cannot win the election.
 \end{itemize}
 
 For the other direction, assume that $C$ is a set of added voters that makes $p$ win the election with $\card{C}\leq k$. Since $\alpha_4>0$, we know that $p$ must gain one point against every candidate $z\in Z$, and can gain against one of these candidates for each added vote, at least $\card{Z}=k$ votes must be added, so $\card{C}=k$. We prove that $C$ is a cover. By the above, $C$ covers $Z$. Indirectly assume that there is some $x$ ($y$) that is voted in the first (second) place in more than one of the voters from $C$. In this case, $x$ ($y$) gains at least $2
\alpha_1+(k-2)
\alpha_4$ ($2
\alpha_2+(k-2)
\alpha_4$) points, and so ends up with $-\alpha_1-(k-1)
\alpha_4+2
\alpha_1+(k-2)
\alpha_4=\alpha_1-\alpha_4$ points ($\alpha_2-\alpha_4$ points). Since $\alpha_1\ge\alpha_2>\alpha_4$, $p$ does not win the election in this case, a contradiction. This completes the proof.
\end{proof}

\subsection{Proof of Theorem~\ref{theorem:summary finitely many coefficients}.\ref{theorem part:alpha1 neq alpha2 alpha1 neq 2 alpha2}}

\summaryfinatelymanycoefficients*

The proof of this theorem uses two different reductions depending on which of the cases $\alpha_1>2\alpha_2$ or $\alpha_1<2\alpha$ applies. We start with the case $\alpha_1>2\alpha_2$:

\begin{theorem}\label{theorem:alpha1 > 2alpha2}
 $f$-CCAV is NP-complete for the generator $f=(\alpha_1,\alpha_2,\dots,\alpha_2,0)$ with $\alpha_1>2\alpha_2>0$.
\end{theorem}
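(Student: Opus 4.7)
The plan is to reduce from 3-dimensional matching. Given a 3DM instance with pairwise disjoint sets $X,Y,Z$ of size $k$ and a collection $M\subseteq X\times Y\times Z$ of $n$ triples, I will construct in polynomial time an $f$-CCAV instance that is positive iff the 3DM instance is. Write $A=\alpha_1-\alpha_2$ and $B=\alpha_2$; the hypothesis $\alpha_1>2\alpha_2$ says exactly that $A>B>0$. Relative to the baseline $\alpha_2$, every added vote contributes $+A$ to its top-ranked candidate, $-B$ to its bottom-ranked candidate, and $0$ to all middles, so $f$-CCAV becomes a combinatorial problem of choosing (top,bottom) pairs under a budget.

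I will use candidates $p$, the $3k$ elements of $X\cup Y\cup Z$, a helper candidate $h_i$ for each triple $S_i\in M$, and one dummy candidate to absorb overflow points when applying Lemma~\ref{lemma:coefficients realization}. Using that lemma, the registered voters are chosen so that, relative to the score $p$ will reach after exactly the budgeted number of added votes, every element $e\in X\cup Y\cup Z$ exceeds $p$ by exactly $B$ (so exactly one veto of $e$ ties it with $p$), and every helper $h_i$ is calibrated so that its final score ties $p$ only when the votes affecting $h_i$ combine in a specific proportion tied to the ratio $A/B$.

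For each triple $S_i=(x_i,y_i,z_i)$, the unregistered pool will contain (i) three element-vetoing votes, each placing the helper $h_i$ at a designated position and one of $x_i$, $y_i$, $z_i$ last, and (ii) a carefully chosen number of coupling votes (the multiplicity being the smallest integer realization of the ratio $A/B$) that put $p$ at the top and $h_i$ at the bottom. These coupling votes simultaneously provide $p$'s score gain and offset the side-effect on $h_i$ produced by the element-vetoing votes. The budget will be set so that the ``intended'' chair strategy---adding the element-vetoing votes for three triples in a cover together with the matching coupling votes, plus coupling-only contributions for the remaining triples---uses the budget exactly.

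The main obstacle is making the coupling gadget tight. Without a coupling, the problem would degenerate into finding a system of distinct representatives for $X$, $Y$, and $Z$ independently, which is polynomial-time bipartite matching; so the construction must force each selected triple to contribute all three of its element-vetoing votes in a coordinated way. The argument exploits $A>B$ to rule out ``partial'' uses: any chair strategy that activates a strict nonempty subset of a triple's element votes either leaves $h_i$ strictly above $p$'s final score, leaves some element strictly above it, or overruns the budget, because the integer arithmetic forced by $A/B>1$ admits no fractional compromise. A case analysis on the number of element-vetoing and coupling votes used per triple then shows that the only feasible chair strategies select exactly $k$ triples whose element-vetoing votes cover $X\cup Y\cup Z$ once each, i.e., correspond to 3DM covers. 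The converse---any 3DM cover yields a chair-winning strategy within the budget---follows by direct score verification. NP-membership of $f$-CCAV is standard, so NP-completeness follows.
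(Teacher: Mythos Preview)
Your proposal shares the paper's high-level architecture---reduce from 3DM, introduce one helper candidate per triple, and use the first/last positions of $(\alpha_1,\alpha_2,\dots,\alpha_2,0)$ as approve/veto slots---but as written it is a plan rather than a proof. None of the data that would make the reduction checkable is given: the initial score of each $h_i$, the ``designated position'' of $h_i$ in the element-vetoing votes, the actual multiplicity of the coupling votes, and the budget are all left implicit, and the central correctness claim (``the integer arithmetic forced by $A/B>1$ admits no fractional compromise'') is asserted, not argued.

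There is also a structural issue your sketch does not address. Because your coupling votes place $p$ on top while the element-vetoing votes place $p$ in a middle slot, $p$'s final score depends on how the chair \emph{splits} the budget between the two vote types, not merely on how many votes are added. You calibrate the element scores relative to ``the score $p$ will reach after exactly the budgeted number of added votes,'' but that target shifts by $A=\alpha_1-\alpha_2$ each time the chair trades an element-vetoing vote for a coupling vote; since $A>B$, such a trade can let $p$ overtake an un-vetoed element outright, and you give no argument that the helpers or the limited supply of coupling votes prevent this. The paper's construction avoids the problem entirely by keeping $p$ in a middle position in \emph{every} available vote, so $p$'s score depends only on the vote count. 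It then assumes $\gcd(\alpha_1,\alpha_2)=1$ (legitimate after normalisation), sets $\score{S_i}=2\alpha_2-\alpha_1+1$, and uses that ``$+1$'' to force each $S_i$ that is placed first to be vetoed \emph{twice}; this, together with the budget $3k$, pins down the cover. To complete your approach you would need either to import that device or to actually perform the case analysis you promise.
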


\begin{proof}
 We again reduce from 3DM. So let $k=\card{X}=\card{Y}=\card{Z}$. We introduce the following candidates:
 \begin{itemize}
  \item the preferred candidate $p$,
  \item a candidate $c$ for each $c\in X\cup Y\cup Z$,
  \item for each $S_i\in M$, a candidate $S_i$,
  \item a dummy candidate $d$ (used for the application of Lemma~\ref{lemma:coefficients realization}).
 \end{itemize}
 
 Without loss of generality, due to Proposition~\ref{p:differ}, we can assume that for the relevant length, $\gcd{\alpha_1,\alpha_2}=1$, and $\alpha_1,\alpha_2\in\naturals$. In particular, the number $1$ can be obtained as a linear combination of $\alpha_1$ and $\alpha_2$. We thus can apply Lemma~\ref{lemma:coefficients realization} to construct a set of voters $V$ such that the relative scores are as follows:
 
 \begin{itemize}
  \item $\score{p}=0$,
  \item $\score{y}=\alpha_2$ for each $y\in Y$,
  \item $\score{x}=\score{z}=\alpha_2-\alpha_1$ for each $x\in X$ and $z\in Z$,
  \item $\score{S_i}=2\alpha_2-\alpha_1+1$ for each $S_i\in M$,
  \item $\score{d}<-3k
\alpha_1$.
 \end{itemize}
 
 For each $S_i=(x_i,y_i,z_i)\in M$, we add the following available voters:
 
 \begin{itemize}
  \item $S_i>\dots>y_i$,
  \item $z_i>\dots>S_i$,
  \item $x_i>\dots>S_i$.
 \end{itemize}
 
 We say that a vote \emph{vetoes} the candidate it places in the last position. We claim that $p$ can be made a winner of the election with adding at most $3k$ votes if and only if the 3DM-instance is positive. By construction, the dummy candidate $d$ cannot win the election by adding at most $3k$ votes, so  we ignore $d$ in the sequel.
 
 Note that since $p$ is voted in one of the middle positions in all of the potentially added voters, the score of $p$ only depends on the number of added voters, and not on the concrete choice of added votes. We thus consider only the change of relative scores between the candidates and $p$ when studying the effect of added voters.
 
 First assume that there is a cover $C$ with $\card C=k$. For all $S_i\in C$, we add all votes voting $S_i$ in the first or in the last position. Then the scores (relatively to $p$) change as follows:
 
 \begin{itemize}
  \item each $y_i$ is vetoed once, and so loses $\alpha_2$ points relative to $p$, so has final score $0$,
  \item each $S_i$ with $i\in I$ appears once in the first position and twice in the last position, so $S_i$ gains $\alpha_1-\alpha_2-2\alpha_2=\alpha_1-3\alpha_2$ points. So the final score of such an $S_i$ is $2\alpha_2-\alpha_1+1+\alpha_1-3\alpha_2=-\alpha_2+1\leq 0$, since $\alpha_2\ge1$.
  \item for each $S_i$ with $i\notin I$, the relative points of $S_i$ and $p$ do not change. Since $\alpha_1>2\alpha_2$ and $\alpha_1,\alpha_2\in\naturals$, it follows that $2\alpha_2-\alpha_1+1\leq 0$ and so $S_i$ does not beat $p$.
  \item each $z_i$ and each $x_i$ gain exactly $\alpha_1-\alpha_2$ points against $p$, and so tie with $p$.
 \end{itemize}
 
 Thus $p$ wins the election after adding these voters.
 
 For the converse, assume that there is a set $V$ of available voters that we can add with $\card V\leq 3k$ such that $p$ wins the resulting election. Since each $y\in Y$ is currently winning against $p$, and the only way for $y$ to lose points relatively to $p$ is adding the vote voting $y$ last, we know that for each $y\in Y$, there is one vote in $V$ that votes $y$ last. In particular, we have $\card{V}\ge k$. So there are votes $v_1,\dots,v_k\in V$, each voting a different $y$ in the last position, and thus each voting a different $S_i$ in the first position.
 
 For each such vote, one of the $S_i$ gains $(\alpha_1-\alpha_2)$ points against $p$. If such an $S_i$ is vetoed only once, then its final score is
 $
 2\alpha_2-\alpha_1+1
 +\alpha_1-\alpha_2-\alpha_2
 =1
 $,
 and so $p$ does not win the election. Thus each $S_i$ gaining points in one of the votes $v_1,\dots,v_k$ must be vetoed at least twice. Since $\card{V}\leq 3k$, we know that $\card{V}=3k$. Let $I=\set{i\ \vert\ S_i>\dots>y_i\in V}$. We claim that $I$ is a cover. By the above, $I$ covers each $y\in Y$. Now assume that for $i\neq j\in I$, we have $x_i=x_j$. Then $x_i$ gains $2(\alpha_1-\alpha_2)$ points against $p$, and so $p$ does not win the election, since $\alpha_1>\alpha_2$. The same argument holds for $y$. Thus, due to cardinality, we have a cover of $X$, $Y$, and $Z$.
\end{proof}

The case $\alpha_1<2\alpha_2$ is similar:

\begin{theorem}\label{theorem part:alpha1 < 2alpha2}
 $f$-CCAV is NP-complete for the generator $f=(\alpha_1,\alpha_2,\dots,\alpha_2,0)$ with $\alpha_1<2\alpha_2$.
\end{theorem}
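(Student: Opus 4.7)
The plan is to reduce 3DM to $f$-CCAV, dualizing the construction in the proof of Theorem~\ref{theorem:alpha1 > 2alpha2}. Setting $a := \alpha_1 - \alpha_2$ and $b := \alpha_2$, the hypothesis $\alpha_2 < \alpha_1 < 2\alpha_2$ rewrites to $0 < a < b$, so a single last-place veto strictly outweighs a single first-place boost---the opposite of the $a > b$ situation driving the previous proof.

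Candidates are $p$, one per element of $X \cup Y \cup Z$, one candidate $S_i$ per tuple, and a dummy $d$ (so that Lemma~\ref{lemma:coefficients realization} applies). For each $S_i = (x_i, y_i, z_i)$, include three available voters that dualize the three votes of the previous proof: rather than ``one boost of $S_i$ plus two vetoes of $S_i$'' (as in Theorem~\ref{theorem:alpha1 > 2alpha2}), use ``two boosts of $S_i$ plus one veto of $S_i$,'' namely
$$
 y_i > \ldots > S_i, \quad S_i > \ldots > z_i, \quad S_i > \ldots > x_i.
$$
Each vote affects only the candidate placed first (a boost of $+a$ relative to $p$) and the candidate placed last (a veto of $-b$ relative to $p$). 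Apply Lemma~\ref{lemma:coefficients realization} to set the registered voters so that, relative to $\score{p}=0$, we have $\score{y} = -a$, $\score{x} = \score{z} = b$, $\score{S_i}$ at a carefully chosen value near $b - 2a$, and $\score{d}$ very negative. The budget is $3k$, possibly augmented with an auxiliary vote type and a larger budget to handle a ``lower'' subcase.

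In the forward direction, given a 3DM cover $C$ of size $k$, adding the three available votes per $S_i \in C$ uses $3k$ votes and brings each $y$ to $0$ (via its single boost), each $x, z$ to $0$ (via its single veto), and each $S_i \in C$ to $0$ (via its net change of $2a - b$, absorbed by $\score{S_i} = b - 2a$); non-cover $S_j$'s remain at their non-positive initial value.

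In the converse direction---the main obstacle---each $x, z$ must be vetoed at least once (they beat $p$ by $b$, and only the two ``$S_i > \ldots > z_i$'' and ``$S_i > \ldots > x_i$'' vote types touch them), forcing at least $2k$ votes of those two types. Each such vote boosts some $S_i$, and the constraint $\score{S_i}$-final $\leq 0$ forces each used tuple to also receive its ``$y_i > \ldots > S_i$'' vote (the one that vetoes $S_i$); since each $y$ can be boosted at most once (from $-a$ to $0$), we get $|V_+| \leq k$, combined with $|V_+| \geq k$ from the $x$,$z$ counts, hence $|V_+| = k$ with each used tuple contributing all three votes. The cover structure (distinct $x_i$, $y_i$, $z_i$) then follows from the exact per-candidate accounting. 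The hardest part is the ``lower'' subrange $\alpha_2 < \alpha_1 \leq 3\alpha_2/2$: there $b - 2a \geq 0$, so non-cover $S_j$'s themselves beat $p$ and must be vetoed. The fix is to add a fourth auxiliary vote ``$d > \ldots > S_i$'' per tuple (a free veto of $S_i$ boosting only the dummy) and to enlarge the budget to $n + 2k$, with $\score{S_i}$ tuned so that every $S_j$ must use at least one veto vote (of type (i) or (iv)) while each tuple whose type (i) vote is used must also use both boost-$S_i$ votes---restoring the rigidity that forces a genuine 3DM cover of size $k$.
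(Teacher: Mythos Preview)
Your core construction is exactly the paper's: the same candidate set, the same three available votes per tuple ($y_i>\dots>S_i$, $S_i>\dots>x_i$, $S_i>\dots>z_i$), the same scores $\score{y}=\alpha_2-\alpha_1$, $\score{x}=\score{z}=\alpha_2$, and the budget $3k$. Your converse argument (at least $2k$ ``boost'' votes force at least $k$ tuples to be approved, each approved $S_i$ must be vetoed, budget forces exactly $k$ tuples, and the $y$-boosts force the $Y$-coordinates to be distinct) is also the paper's argument.

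Where you diverge is the ``lower subrange'' $\alpha_2<\alpha_1\le\tfrac32\alpha_2$. You fix $\score{S_i}=b-2a=3\alpha_2-2\alpha_1$, notice that this is now nonnegative so non-cover $S_j$'s beat $p$, and then try to repair this with a fourth vote type $d>\dots>S_i$ and a budget of $n+2k$. This detour is unnecessary: the paper simply sets $\score{S_i}=\min(0,\,3\alpha_2-2\alpha_1)$. In the lower subrange this is $0$, so non-cover $S_j$'s \emph{tie} with $p$ (which is fine in the nonunique-winner model) and need no attention in the forward direction; cover $S_i$'s end at $2\alpha_1-3\alpha_2\le 0$. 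In the converse the argument is untouched, since any $S_i$ that is boosted even once goes from $0$ to $\alpha_1-\alpha_2>0$ and must be vetoed. Thus the same three-vote, $3k$-budget reduction works uniformly across the whole range $\alpha_2<\alpha_1<2\alpha_2$.

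Your proposed four-vote fix is also not clearly sound as sketched. The claim that ``each tuple whose type-(i) vote is used must also use both boost-$S_i$ votes'' is not forced: using $y_i>\dots>S_i$ alone drops $S_i$ by $b$, which already suffices from any score $\le b$, so nothing compels the two boost votes for that $S_i$. Consequently the rigidity you need to extract a $k$-element cover is not established. Replace the detour by the paper's $\min(0,\cdot)$ trick and the whole proof collapses to the single clean case you already wrote for the upper subrange.
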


\begin{proof}
 Very similar to the proof of Theorem~\ref{theorem:alpha1 > 2alpha2}. We again reduce from 3DM. We use the same candidate set as in the proof of Theorem~\ref{theorem:alpha1 > 2alpha2} (including the dummy candidate for the application of Lemma~\ref{lemma:coefficients realization}), and set up the scores of the nondummy candidates as follows:
 
 \begin{itemize}
  \item $\score{p}=0$,
  \item $\score{y}=\alpha_2-\alpha_1$ for each $y\in Y$,
  \item $\score{x}=\score{z}=\alpha_2$ for each $x\in X$ and each $z\in Z$,
  \item $\score{S_i}=\mathop{min}(0,3\alpha_2-2\alpha_1)$.
 \end{itemize}
 
 For each tuple $S_i=(x_i,y_i,z_i)$, we add three available voters
 
 \begin{itemize}
  \item $y_i>\dots>S_i$,
  \item $S_i>\dots>x_i$,
  \item $S_i>\dots>z_i$.
 \end{itemize}
 
 We say that a vote as above \emph{approves} (\emph{vetoes}) the candidate put in its first (last) position.

 If there is a cover with size at most $k$, we choose the $3$ voters associated with each element of the cover. This lets each $x$ and each $z$ lose $\alpha_2$ points against $p$, each $y_i$ gains $(\alpha_1-\alpha_2)$ points against $p$, and each $S_i$ in the cover ends up with 
 $
 \score{S_i}+2(\alpha_1-\alpha_2)-\alpha_2=
 \score{S_i}+2\alpha_1-3\alpha_2\leq 
 3\alpha_2-2\alpha_1+2\alpha_1-3\alpha_2=0
 $
 points (all points counted relative to $p$). Thus all candidates tie and $p$ is a winner of the resulting election.
 
 For the converse, assume that $p$ can be made a winner by adding at most $3k$ voters, let $V$ be the corresponding set. Clearly, each $x\in X$ and each $z\in Z$ need to be vetoed by some vote in $V$. Thus there are at least $2k$ votes in $V$ that vote some $S_i$ in the first position. Let $C$ contain all indices $S_i$ such that there is a vote in $V$ approving of $S_i$. Clearly, $\card{C}\ge k$.
 Then, each $S_i\in C$ gains at least $(\alpha_1-\alpha_2)$ points from these votes, and thus has $\mathop{min}(0,3\alpha_2-2\alpha_1)+\alpha_1-\alpha_2$ points. If the minimum is $0$, then this clearly is more than $0$, if the minimum is $3\alpha_2-2\alpha_1$, then the score adds up to
 $3\alpha_2-2\alpha_1+\alpha_1-\alpha_2
 =
 2\alpha_2-1\alpha_1$, 
 which also exceeds $0$ since $\alpha_1<2\alpha_2$. Thus each $S_i\in C$ must be vetoed at least once.
 In particular, there must be at least $k$ votes vetoing some $S_i$, and since due to cardinality reasons, there can be only $k$ votes vetoing some $S_i$, we know that $\card{C}\leq k$. Due to the above, we also know $\card{C}\ge k$, so  $\card{C}=k$.
 
 It remains to show that $C$ covers $X$, $Y$, and $Z$. As argued above, each $x$ and each $z$ need to be vetoed once, so $C$ covers $X$ and $Z$. To show that $C$ also covers $Y$, it suffices to show that no $y$ can be approved twice. This trivially follows since $\score{y}=\alpha_2-\alpha_1$, and each approval lets $y$ gain $\alpha_1-\alpha_2>0$ points against $p$.
\end{proof}

The proof of Theorem~\ref{theorem:summary finitely many coefficients}.\ref{theorem part:alpha1 neq alpha2 alpha1 neq 2 alpha2} now directly follows from the above results:

\begin{proof}
 If $\alpha_1\geq\alpha_2>0$ with $\alpha_1\notin\set{\alpha_2,2\alpha_2}$, then one of the following cases applies:
 \begin{itemize}
  \item $\alpha_1>2\alpha_2>0$, in this case the result follows from Theorem~\ref{theorem:alpha1 > 2alpha2},
  \item $2\alpha_2>\alpha_1>\alpha_2>0$, in this case the result follows from Theorem~\ref{theorem part:alpha1 < 2alpha2}.
    \qedhere
 \end{itemize}
\end{proof}

\subsection{Proof of Theorem~\ref{theorem:summary finitely many coefficients}.\ref{theorem part:alpha1>alpha2>alpha5}}

\summaryfinatelymanycoefficients*

\begin{proof}
 We again reduce from 3DM, let $k=\card{X}=\card{Y}=\card{Z}$.  Using Lemma~\ref{lemma:coefficients realization}, we set up the the relative scores as follows:
 
 \begin{itemize}
  \item $\score{p}=0$,
  \item $\score{x}=-(\alpha_1-\alpha_2)$ for each $x\in X$,
  \item $\score{y}=\alpha_2-\alpha_5$ for each $y\in Y$,
  \item $\score{z}=\alpha_2$ for each $z\in Z$,
  \item $\score{d}<-k
\alpha_1$, here $d$ again is a dummy candidate required for the application of Lemma~\ref{lemma:coefficients realization}, who cannot win the election and whom we ignore in the sequel.
 \end{itemize}
 
 The available voters are as follows: For each $(x,y,z)$ in $M$, there is an available voter voting 
 
 $$x>\dots>y>z.$$
 
 We claim that $p$ can be made a winner of the election if and only if the 3DM-instance is positive. First assume that the instance is positive, we then add the $k$ votes corresponding to the cover. Then:
 
 \begin{itemize}
  \item each $x\in X$ gains exactly $\alpha_1-\alpha_2$ points relatively to $p$,
  \item each $y\in Y$ loses exactly $\alpha_2-\alpha_5$ points relatively to $p$,
  \item each $z\in Z$ loses exactly $\alpha_2$ points relatively to $p$.
 \end{itemize}
 
 Thus all candidates tie and $p$ is a winner of the election. For the converse, assume that $p$ can be made a winner by adding at most $k$ voters, let $C$ be the corresponding subset of $M$. Since each $y\in Y$ and each $z\in Z$ need to lose points relatively to $p$, $C$ covers each $Y$ and $Z$. $C$ cannot cover any $x\in X$ twice, since otherwise $x$ would gain $2(\alpha_1-\alpha_2)$ points and thus beat $p$ in the election. Thus $C$ covers each $x$ at most once, and so covers each $x$ exactly once, thus $C$ is a cover as required.
\end{proof}

\subsection{Proof of Theorem~\ref{theorem:summary finitely many coefficients}.\ref{theorem part:alpha1=alpha2>alpha5>0}}

\summaryfinatelymanycoefficients*

\begin{proof}
 We again reduce from 3DM. Let $M$ be a 3DM-instance with $\card{M}=n$, and let $k=\card{X}=\card{Y}=\card{Z}$. In addition to the preferred candidate $p$, we introduce a candidate $c$ for each $c\in X\cup Y\cup Z$, and for each $S_i\in M$, we introduce a candidate $S_i$ and a candidate $S_i'$. We use Lemma~\ref{lemma:coefficients realization} to set up the registered voters such that, relatively to $p$, we have the following scores:
 
 \begin{itemize}
  \item $\score{p}=0$,
  \item $\score{x}=\score{y}=\score{z}=\alpha_1$ for each $x\in X$, $y\in Y$, and $z\in Z$,
  \item $\score{S_i}=\mathop{min}(\alpha_1,2(\alpha_1-\alpha_5))$ for each $S_i\in M$,
  \item $\score{S_i'}=\alpha_1-\alpha_5$ for each $S_i\in M$,
  \item $\score{d}<-(n+2k)
\alpha_1$, where $d$ again is a dummy candidate needed to apply Lemma~\ref{lemma:coefficients realization}, whom we ignore from now on.
 \end{itemize}
 
 Note that since $\alpha_1>\alpha_5>0$, all of these scores---except for $p$ and $d$---are strictly positive. For each $S_i=(x_i,y_i,z_i)\in M$, we add the following available voters:
 
 \begin{itemize}
  \item $\dots>S_i'>S_i$,
  \item $\dots>S_i>x_i$,
  \item $\dots>S_i>y_i$,
  \item $\dots>S_i'>z_i$.
 \end{itemize}

 Again, we say that a vote \emph{vetoes} its last-places candidate. We claim that $p$ can be made a winner by adding at most $n+2k$ voters if and only if the 3DM-instance is positive. First assume that there is a cover $C\subseteq M$ with $\card{C}\leq k$. We add the following voters:
 
 \begin{itemize}
  \item for each $S_i=(x_i,y_i,z_i)\in C$, we add the voters $\dots>S_i>x_i$, $\dots>S_i>y_i$, and $\dots S_i'>z_i$.
  \item for each $S_i\notin C$, we add the voter $\dots>S_i'>S_i$.
 \end{itemize}
 
 When adding these voters, each candidate $c\in X\cup Y\cup Z$ loses $\alpha_1$ points against $p$ and thus ends with $0$ points. Each candidate $S_i'$ loses $\alpha_1-\alpha_5$ points and so also has $0$ points in the end. A candidate $S_i\in C$ loses $2
(\alpha_1-\alpha_5)\ge\score{S_i}$ points, and a candidate $S_i\notin C$ loses $\alpha_1\ge\score{S_i}$ points. In both cases, $S_i$ has at most $0$ points after the election. Thus $p$ wins the election.
 
 For the converse, assume that there is a set $V$ of voters with $\card{V}\leq n+2k$ such that $p$ wins the election after the votes in $V$ are added. Without loss of generality, we can assume $\card V=n+2k$, since none of the available votes hurt $p$. Let $I$ be the set of indices $i$ such that a vote of the form $\dots>S_i>x_i$, $\dots>S_i>y_i$, or $\dots>S_i'>z_i$ is contained in $V$. We observe the following:
 
 \begin{itemize}
  \item Since each candidate from $X$, $Y$, and $Z$ must lose $\alpha_1$ points against $p$, for each of these candidates there must be a vote in $V$ voting that candidate last. Thus at least $3k$ votes voting one such candidate last must appear in $V$.
  \item Without loss of generality, we can assume that exactly $3k$ votes of the above form appear in $V$, since it does not help to veto one of these candidates twice (to stop one of the candidates $S_i$, $S_i'$ from winning against $p$, a vote of the form $\dots>S_i'>S_i$ is always preferable).
  \item Thus there are exactly $n-k$ votes of the form $\dots >S_i'>S_i$ in $V$. Since one such vote is enough to lower the relative score of $S_i'$ and $S_i$ to the score of $p$ and below, we can without loss of generality assume that these votes are all distinct.
  \item Let $C$ be the set of indices such that $\dots >S_i'>S_i\notin V$. Due to the above, we know that $I=k$.
  \item For each $i\in C$, since $S_i$ must lose more than $\alpha_1-\alpha_5$ points, both voters $\dots>S_i>x_i$ and $\dots>S_i>y_i$ must appear in $V$.
  \item For each $i\in C$, since $S_i'$ must lose points, the votes $\dots>S_i'>z_i$ must appear in $V$,
  \item Since there are only $3k$ voters voting a candidate from $X\cup Y\cup Z$ last, it follows that $C$ is a cover.
    \qedhere
 \end{itemize}
\end{proof}

\fi

\ifshownotesappendix

\section{Notes}

\begin{itemize}
 \item Is the question ``is there a subset of these votes whose addition lets $p$ be the winner'' always easy? There might be votes to add that do not give $p$ more points than everyone else, but which we still need.
\end{itemize}

\fi

\end{document}